\title{Search by a Metamorphic Robotic System in a Finite 3D Cubic Grid} 
\author{Ryonosuke Yamada\thanks{School of Information Science and Electrical Engineering, Kyushu University, Japan. E-mail: \texttt{yamada.ryonosuke.000@s.kyushu-u.ac.jp}} \and 
Yukiko Yamauchi\thanks{Faculty of Information Science and Electrical Engineering, Kyushu University, Japan. E-mail. \texttt{yamauchi@inf.kyushu-u.ac.jp}} 
}
\newtheorem{theorem}{Theorem}
\newenvironment{proof}{{\bf Proof. } }
\newcommand{\Order}{\mathrm{O}}
\begin{document}
\date{}
\maketitle

\begin{abstract}
We consider search in a finite 3D cubic grid by a \emph{metamorphic robotic system} (MRS), 
that consists of anonymous modules. 
A module can perform a sliding and rotation while the whole modules keep connectivity. 
As the number of modules increases, 
the variety of actions that the MRS can perform increases. 
The \emph{search problem} requires the MRS to find a target 
in a given finite field. 
Doi et al. (SSS 2018) demonstrate a necessary and sufficient number of modules for search 
in a finite 2D square grid. 
We consider search in a finite 3D cubic grid 
and investigate the effect of common knowledge. 
We consider three different settings. 
First, we show that three modules are necessary and sufficient when 
all modules are equipped with a \emph{common compass}, 
i.e., they agree on the direction and orientation of 
the $x$, $y$, and $z$ axes. 
Second, we show that four modules are necessary and sufficient when 
all modules agree on the direction and orientation of 
the vertical axis. 
Finally, we show that five modules are necessary and sufficient 
when all modules are not equipped with a common compass. 
Our results show that the shapes of the MRS in the 3D cubic grid have richer structure than 
those in the 2D square grid.
\noindent{\bf Keywords.~} 
Distributed system, 
metamorphic robotic system, 
search, and 3D cubic grid

\end{abstract}

\section{Introduction}
\label{intro}

Swarm intelligence has shed light to collective behavior of 
autonomous entities with simple rules, 
such as ant, boid, and particles. 
The notion is applied to a collection of robots and 
swarm robotics has attracted much attention in the past two decades. 
Each autonomous element of the system is called 
a robot, module, agent, process, and sensor, 
and a variety of swarm robot systems have been investigated  
such as the \emph{autonomous mobile robot system}~\cite{SY99}, 
the \emph{population protocol model}~\cite{AADFP04}, 
the \emph{programmable particles}~\cite{DDGRSS14}, 
\emph{Kilobot}~\cite{RCN14}, 
and \emph{3D Catoms}~\cite{TPB19}. 
Dumitrescu et al. considered the \emph{metamorphic robotic system} (MRS), 
that consists of a collection of \emph{modules} in the infinite 2D square grid~\cite{DSY04a,DSY04b}. 
The modules are \emph{anonymous}, i.e., they are indistinguishable. 
They are \emph{autonomous} and \emph{uniform}, i.e., 
each module autonomously decides its movement by a common algorithm. 
They are \emph{oblivious}, i.e., each module has no memory of past. 
Thus, each module decides its behavior by 
observing other modules in nearby cells. 
Each module can perform a \emph{sliding} to a side-adjacent cell 
and a \emph{rotation} by $90$ degrees around a cell. 
The modules must keep \emph{connectivity}, which is defined by 
side-adjacency of cells occupied by modules. 
The authors considered \emph{reconfiguration} of the MRS, 
that requires the MRS to change the initial shape 
to a specified final shape~\cite{DSY04a}. 
They showed that any horizontally convex connected initial shape of an MRS 
can be transformed to any convex final shape via a straight chain shape. 
Later Dumitrescu and Pach showed that any connected initial shape 
can be transformed to any connected final shape via a 
straight chain shape~\cite{DP06}. 
In other words, the MRS has the ability of "universal reconfiguration." 

Reconfiguration can generate dynamic behavior of the MRS. 
Dumitrescu et al. demonstrated that the MRS can move forward 
by repeating a reconfiguration~\cite{DSY04b}. 
They showed a reconfiguration that realizes the fastest \emph{locomotion}. 
Doi et al. pointed out that the oblivious modules can use the 
shape of the MRS as global memory and 
the MRS can solve more complicated problem 
as the number of modules increases. 
They investigated \emph{search} in a finite 2D square grid, 
that requires the MRS to find a target cell in a finite 
rectangle \emph{field}~\cite{DYKY18}. 
Each module does not know the position of the target cell or   
the initial configuration 
of the MRS. 
They showed that the MRS can find the target cell 
by sweeping each row of the field without such a priori knowledge. 
Specifically, if the modules agree on the cardinal directions 
(i.e., north, south, east and west), 
three modules are necessary and sufficient, 
otherwise five modules are necessary and sufficient. 
Nakamura et al. considered \emph{evacuation} of the MRS from a 
finite rectangular field in the 2D square grid~\cite{NSY20}. 
There is a hole (i.e., two side-adjacent cells) 
on the wall of the field, 
and the MRS is required to exit from it from an arbitrary 
initial position and arbitrary initial shape. 
They showed that two modules are necessary and sufficient 
when the modules agree on the cardinal directions, 
otherwise four modules are necessary and sufficient. 

In this paper, we investigate the effect of common knowledge 
on search by the MRS in the 3D cubic grid. 
We consider the following three cases: 
$(i)$ modules equipped with a common \emph{compass} 
(i.e., they agree on the direction and orientation of $x$, $y$, and $z$ axes).
$(ii)$ modules equipped with a common vertical axes 
(i.e., they agree on the direction and orientation of $z$ axes), 
and 
$(iii)$ modules not equipped with a common compass 
(i.e., they have no agreement on directions and orientations). 
We demonstrate that 
three modules are necessary and sufficient when 
the modules are equipped with a common compass 
and five modules are necessary and sufficient 
when the modules are not equipped with a common compass. 
The numbers of sufficient modules in the 3D cubic grid 
are the same as those in the 2D square grid~\cite{DYKY18} 
because the MRS has more states in the 3D cubic grid 
than in the 2D square grid. 
For the intermediate case with a common vertical axis, 
we demonstrate that four modules are necessary and sufficient. 
Thus, our results in the 3D cubic grid 
show a smooth trade-off between 
the computational power of the MRS and 
common knowledge among modules, that the previous results 
in the 2D square grid could not find.  
We present search algorithms for these three settings 
and 
show the necessity by examining the 
state transition graph of the MRS. 

\noindent{\bf Related work.}~Reconfiguration of swarm robot systems 
have been discussed for the MRS~\cite{DP06,DSY04a}, 
autonomous mobile robots~\cite{FYOKY15,SY99} and 
the programmable particles~\cite{DGRSS15,DFSVY20}. 
Michail et al. considered the \emph{programmable matter system},
that is similar to the MRS
and investigated reconfiguration by rotations only and 
that by rotations and slidings~\cite{MSS19}. 
They showed that the combination of rotations and slidings 
guarantees universal reconfiguration, 
while rotations only cannot. 
They also presented $\Order(n^2)$-time reconfiguration algorithm 
by rotations and slidings. 
Almethen et al. considered reconfiguration by \emph{line-pushing}, 
where each module is equipped with the ability of pushing a line of 
modules~\cite{AMP20a}. 
They presented $\Order(n \log n)$-time universal reconfiguration algorithm 
that does not promise connectivity of intermediate shapes 
and $\Order(n \sqrt{n})$-time reconfiguration algorithm 
that transforms a diagonal line into a straight chain with 
preserving connectivity. 
The same authors later showed that their programmable matter system 
has the ability of universal reconfiguration 
and $\Order(n \sqrt{n})$-time reconfiguration algorithm 
together with $\Omega(n \log n)$ lower bound~\cite{AMP20b}. 

\section{Preliminary}
\label{junbi}

We consider a \emph{metamorphic robotic system} (MRS) in a finite 3D cubic grid. 
A metamorphic robotic system consists of 
a collection of anonymous (i.e., indistinguishable) ¥emph{modules}. 
A module can observe the positions of other modules in nearby cells, 
computes its next movement with a common algorithm, 
and performs the movement. 

Each cell of the 3D cubic grid can adopt at most one module at a time. 
Cell $(x,y,z)$ is the cell surrounded by grid points  
$(x, y, z)$, $(x+1, y, z)$, $(x, y+1, z)$, $(x, y, z+1)$, 
$(x+1, y+1, z)$, $(x+1, y, z+1)$, $(x, y+1, z+1)$, and $(x+1, y+1, z+1)$. 
Cells $(x+1, y, z)$, $(x, y+1, z)$, $(x, y, z+1)$, 
$(x-1, y, z)$, $(x, y-1, z)$, $(x, y, z-1)$ are \emph{side-adjacent} to 
cell $(x, y, z)$. 
We consider the positive $x$ direction as East, 
the positive $y$ direction as North, 
and the positive $z$ direction is Up. 

The MRS moves in a finite \emph{field}, 
which is a cuboid of width $w$, depth $d$ and height $h$ 
with its two diagonal cells being $(0,0,0)$ and $(w-1, d-1, h-1)$. 
We consider two types of \emph{planes}; 
the first type is a set of cells forming a plane 
perpendicular to one of the $x$, $y$, and $z$ axes. 
The second type is a set of cells parallel to one of the 
$x$, $y$, and $z$ axes and diagonal to the remaining two axes. 
For example, $\{(x,y,z) | y = s\} $ for some $s \in \mathcal{Z}$ is a 
plane of the first type 
and $\{(x, y, z) | x+y = s'\}$ for some $s' \in \mathcal{Z}$ is 
a plane of the second type. 
A \emph{line} of cells is a set of cells forming a horizontal or vertical line 
on a plane. 
For example, $\{(x,y,z) | y = u, z = v\} $
for some $u,v \in \mathcal{Z}$ is a line 
and $\{(x,y,z) | x+y = u', z = v'\}$ 
for some $u', v' \in \mathcal{Z}$ is a line. 

The field is surrounded by six planes, which we call \emph{walls}. 
More precisely, the walls are 
$\{(x, y, z) | x = -1\}$ (the West wall), 
$\{(x, y, z) | y = -1\}$ (the South wall), 
$\{(x, y, z) | z = -1\}$ (the Bottom wall), 
$\{(x, y, z) | x = w\}$ (the East wall), 
$\{(x, y, z) | y = d\}$ (the North wall), 
and $\{(x, y, z) | z = h\}$ (the Top wall). 

All modules synchronously perform observation, computation, and movement in 
each discrete time $t=0, 1, 2, \ldots$. 
A \emph{configuration} of the MRS is the set of cells occupied by the modules. 
We say two modules are \emph{side-adjacent} if they are in the two side-adjacent cells. 
We also say that a module $m$ is side-adjacent to cell $c$ if the cell occupied by $m$ is 
side-adjacent to $c$. 
Given a configuration of the MRS, consider a graph where 
each vertex corresponds to a module and there is an edge between two vertices 
if the corresponding modules are side-adjacent. 
If this graph is connected, we say the MRS is \emph{connected}. 

A module can perform two types of movements, \emph{sliding} and \emph{rotation}. 
\begin{enumerate} 
\item Sliding: 
When two modules $m_i$ and $m_j$ are side-adjacent, 
another module $m_k$ can move from a cell side-adjacent to $m_i$ 
to an empty cell side-adjacent to $m_k$ and $m_j$ along $m_i$ and $m_j$. 
During the movement, $m_i$ and $m_j$ cannot perform any movement. 
See Figure~\ref{Module-Move} as an example. 
\item Rotation: 
When two modules $m_i$ and $m_j$ are side-adjacent, 
$m_i$ can move to a cell side-adjacent to $m_j$ by rotating $\pi/2$ in some 
direction. 
There are six cells side-adjacent to $m_j$ and $m_i$ can move to four of them 
by rotation. 
During the movement, $m_j$ cannot move and the cells that $m_i$ passes must be empty. 
See Figure~\ref{Module-Move} as an example. 
\end{enumerate} 
Note that several modules can move at the same time as long as their moving tracks do not overlap. 
The modules must keep two types of connectivity at each time step. 
\begin{enumerate}
\item At the beginning of each time step, 
the modules must be connected. 
\item At each time step, the modules that do not move must be connected. 
\end{enumerate}

\begin{figure}
  \centering
  \includegraphics[keepaspectratio, scale=0.25]
       {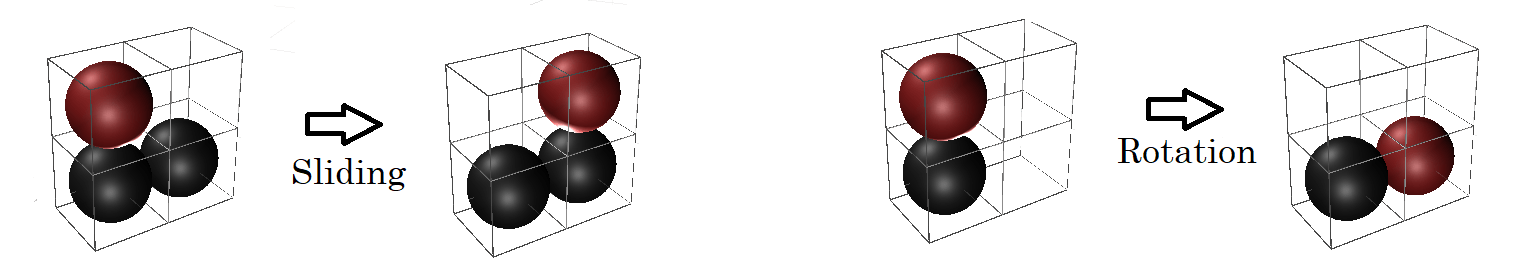}
  \caption{Sliding and rotation. The red modules perform movement.}
  \label{Module-Move}
 \end{figure}

We assume that each module obtains the result of an observation and moves to 
the next cell in its ¥emph{local $x$-$y$-$z$ coordinate system}. 
We assume that the origin of the local coordinate system of a module 
is its current cell and all local coordinate systems are right-handed. 
In this paper, we consider three types of MRSs with different degree of agreement 
on the coordinate system. 
When all modules agree on the directions and orientations of $x$, $y$, and $z$ axes, 
we say the MRS is equipped with a \emph{common compass}. 
When all modules do not agree on the directions or orientations of $x$, $y$, and $z$ axes, 
we say the MRS is not equipped with a common compass. 
Hence, local coordinate systems are not consistent among the modules. 
As an intermediate model, we consider modules that 
agree on the direction and orientation of the vertical axis. 
In this case, we say the MRS is equipped with a \emph{common vertical axis}. 
The \emph{state} of the MRS is its local shape. 
If the modules are equipped with a common compass or a common vertical axis, 
the state of the MRS contains common directions and orientations. 
Otherwise the state of the MRS does not contain any directions and orientations.

The \emph{search problem} requires the MRS to find a \emph{target} placed at one cell
in the field from a given initial configuration.
We call the cell containing the target the \emph{target cell}. 
The MRS \emph{finds} a target when one of its modules enters the target cell. 

\begin{figure}
  \centering
  \includegraphics[keepaspectratio, scale=0.35]
        {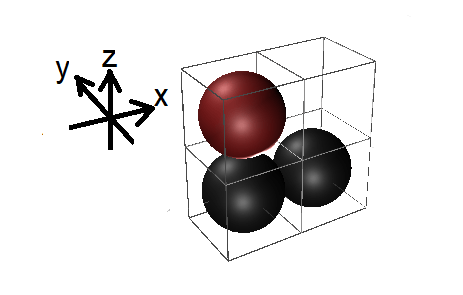}
  \caption{Example of an observation at the red module with the local coordinate system. }
  \label{Modules}
\end{figure}

When a module executes the common algorithm, 
the input is the \emph{observation} of cells in a cube of size
$(2k+1) \times (2k+1) \times (2k+1)$ centered at the module 
(i.e., its \emph{$k$-neighborhood}).
It detects whether each cell in its $k$-neighborhood 
is a wall cell or not and
whether it is occupied by a module or not.
Let $C_m$ be the set of cells occupied by some modules,
$C_w$ be the set of wall cells,
and $C_e$ be the set of the remaining (i.e., empty) cells
of an observation.
More precisely, each set of cells is a set of coordinates of the
corresponding cells observed in the local coordinate system of the module.
For example, in Figure~\ref{Modules}, 
the result of an observation at the red module is
$C_m = \{(0, -1, 0), (1, -1, 0)\}$, $C_w = \emptyset$, and
$C_e $ is the remaining cells.
When a common algorithm outputs coordinate $(a,b,c)$ at a module, 
the module moves to $(a, b, c)$ in its local coordinate system.

When we describe an algorithm, 
the elements of $C_m$, $C_w$, and $C_e$ are specified in a "canonical coordinate system," i.e., 
the global coordinate system. 
When the modules are equipped with a common compass, without loss of generality, 
we assume that the common compass is identical to the global coordinate system. 
Thus, each module obtains its movement by checking 
$C_m$, $C_w$, and $C_e$. 
When the modules agree on a common vertical axis, 
without loss of generality, we assume that 
the vertical axis is identical to the $z$ axis of the global coordinate system. 
Each module computes its movement by 
rotating the current observation by $\pi/2$, $\pi$, $3\pi/2$, and $2\pi$ around 
the common vertical axis and comparing the results with 
$C_m$, $C_w$, and $C_e$. 
It selects the output with movement 
and if there are multiple outputs with movement 
it nondeterministically selects one of them. 
When the modules are not equipped with a common compass, 
a module checks $24$ rotations of the current observation and 
selects an output in the same way as the above case. 

\section{Search algorithms for MRSs in a finite 3D cubic grid} 

In this section, we present search algorithms with small number of modules. 
Our proposed algorithms are based on a common strategy. 
Since the MRS does not know the position of the target cell, 
we make the MRS visit all cells of the field. 
The proposed algorithms slice the field into planes 
and the MRS visits the cells of each plane 
by sweeping each row or column of the plane. 
Thus, the proposed algorithms are extensions of the search algorithms 
by Doi et al. in a finite 2D cubic grid~\cite{DYKY18}. 

\subsection{Search with a common compass}
\label{ExpWithCompass}

We show the following theorem by a search algorithm for the MRS of 
three modules equipped with a common compass. 
\begin{theorem}
\label{theorem: 3 modules can Exprole 3D space with compass}
The MRS of three modules equipped with a common compass can solve the search problem 
in a finite 3D cubic grid from any initial configuration.
\end{theorem}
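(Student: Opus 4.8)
The plan is to prove the theorem constructively by exhibiting an explicit common algorithm for three modules and showing that, under this algorithm, the MRS traces a traversal of the cuboid that enters every cell. Because the modules are oblivious and equipped with a common compass, the algorithm is a map from each observed local state (the shape of the three modules together with the wall cells detected in the $k$-neighborhood, all expressed in the global frame) to a movement; designing this map is the whole content of the proof. I would first fix a small catalog of canonical shapes for three modules --- a straight chain and the reflections of an L-tromino suffice --- and assign to each (shape, orientation) pair a \emph{phase} indicating what the MRS is currently doing: sweeping a row, turning to the next row, or ascending to the next plane.

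The traversal itself reuses and extends the two-dimensional sweep of Doi et al.~\cite{DYKY18}. Within one horizontal plane $\{z = v\}$ the MRS sweeps a row in the $+x$ direction by repeated forward locomotion until it detects the East wall, then performs a U-turn that advances it one cell in $+y$ and reverses it to sweep the next row in the $-x$ direction, i.e.\ a boustrophedon over the plane. When the MRS reaches a corner at which it detects both an $x$-wall and the North (or South) wall, the plane has been exhausted, and instead of turning within the plane it executes an \emph{ascend}: a reconfiguration that moves it up one level to $\{z = v+1\}$ and re-initializes the sweep there. The common compass is what makes three modules enough: each module reads absolute $x$, $y$, $z$ directions, so the three-module shape has only to encode the current phase and local sweep direction rather than a global orientation, which is exactly the bookkeeping that three modules already provide in the planar case.

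The key steps I would carry out in order are: (1) give the forward-locomotion transitions and verify that each step keeps the MRS connected at the beginning of the step, keeps the non-moving modules connected during the step, and traverses only empty cells; (2) give the U-turn transitions at the $x$-walls and check that they shift the sweep by exactly one row in $+y$ without skipping cells; (3) give the ascend transitions triggered at a plane-completing corner and verify connectivity across the change of level; and (4) add a \emph{normalization} phase that drives the MRS from an arbitrary initial configuration to a designated corner in a canonical shape and phase, using wall detection to orient itself. Correctness then follows by a covering argument: every cell of a plane is entered by the boustrophedon, every plane from bottom to top is visited by successive ascends, and since the field is finite, after finitely many steps some module enters the target cell.

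I expect the main obstacle to be the ascend step together with the unambiguity of the state machine. Ascending to the next plane must be realized by slidings and rotations that never break either connectivity condition, and the MRS must arrive in the new plane already in the correct shape and phase to restart the sweep in the right direction; getting this handoff right in all corner cases (bottom versus top edge of the plane, and which $x$-wall triggered the turn) is the delicate part. I would also have to confirm that the chosen three-module shapes yield enough distinct observable states to distinguish all phases, so that the map from observation to movement is well defined, and that the bounded $k$-neighborhood is large enough for a module to detect the walls it needs for each turning and ascend decision.
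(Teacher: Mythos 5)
Your overall strategy---an explicit oblivious transition table realizing a plane-by-plane sweep, verified against the two connectivity conditions---is the same as the paper's, but your geometric decomposition differs: you sweep horizontal planes $\{z=v\}$ by an axis-aligned boustrophedon and ascend between planes, whereas the paper slices the field into the diagonal planes $\{(x,y,z)\mid x+y=s\}$, sweeps each horizontal line of such a plane to the northwest and then back to the southeast, climbs one row at a time to the top wall, and then descends along the north/east wall into the next diagonal plane. Either slicing can in principle cover the cuboid, so the difference is not itself an error; the paper's choice is dictated by which move sequences three modules can actually realize and by how turns are triggered at walls, and your version would have to supply analogous sequences for the $+x$/$-x$ U-turns and the ascend, which you correctly identify as the delicate part but do not exhibit. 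Since for this theorem the proof essentially \emph{is} the transition table, the proposal as written is a plan rather than a proof.

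The one place where your plan has a genuine gap is step (4), the normalization phase that ``drives the MRS from an arbitrary initial configuration to a designated corner.'' The modules are oblivious and see only a bounded neighborhood, so in the interior of a large field a three-module MRS in a given shape cannot tell whether it is currently normalizing (heading for the start corner) or in the middle of a sweep; any locomotion you reserve for normalization collides with a sweep state, and when the normalizing MRS first meets a wall it will execute the sweep's U-turn or ascend rule instead of continuing toward the corner. The paper sidesteps this entirely: it normalizes only the \emph{shape} (any state of three modules with a common compass can be converted to a table entry in one step), starts sweeping from wherever the MRS happens to be, and accepts that the first pass may miss the cells ``behind'' the starting position; after reaching the northeast corner of the top wall it routes the MRS along the edges of the field back to the southwest corner and runs the sweep again, which guarantees full coverage. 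You need either this two-pass wrap-around or some other mechanism that does not require the MRS to distinguish ``searching'' from ``travelling to the start'' while no wall is visible.
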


The proposed algorithm considers each plane 
$\{(x,y,z) | x+y=s\}$ for $s = 1, 2, \ldots$ and 
the MRS moves along each line $\{(x,y,z) | x+y=s, z=t\} $ 
for $t= 0, 1, 2, \ldots$. 
Figure~\ref{Explode-Move3} shows a moving track of the proposed aglorithm.  
The MRS continues to search each planes until it reaches the northeasternmost plane. 
Then, it moves along the edges of the field so that it returns to the 
southwesternmost plane. 
It starts searching each plane again to visit all cells of the field.

The MRS moves forward or turns by repeating a sequence of movements, that we call a \emph{move sequence}. 
The proposed algorithm consists of the following move sequences. 
%

\begin{itemize}
     \item Move sequence $M_{NW}$ (Figure~\ref{3MoveToNorthwest}). 
     The blue module is in cell $(x,y,z)$ at the first.
     By this move sequence, 
     the green module reaches cell $(x-1,y+1,z)$.
     By repeating $M_{NW}$ $n$ times, any of the modules visit the cells $(x-k,y+k,z)(0 \leq k \leq n)$. 
     That is, it visits all the cells of the horizontal line $\{(x,y,z)|x+y=s,z=t\}$. 
     
     \item Move sequence $M_{TurnNW}$ (Figure~\ref{3TurnOnTheNorthWestWall}). 
     By this move sequence, 
     the MRS changes its move sequence from $M_{NW}$ to $M_{SE}$.
     
     \item Move sequence $M_{SE}$ (Figure~\ref{3MoveToSoutheast}). 
     The blue module is in cell $(x,y,z)$ at the first.
     By this move sequence, 
     the green module reaches cell $(x+1,y-1,z)$.
     By repeating $M_{SE}$ $n$ times, any of the modules visit the cells $(x+k,y-k,z)(0 \leq k \leq n)$.
     That is, it visits all the cells of the horizontal line $\{(x,y,z)|x+y=s,z=t\}$.
     
     \item Move sequence $M_{TurnSE}$ (Figure~\ref{3TurnOnTheSouthEastWall}). 
     By this move sequence, 
     the MRS changes its move sequence from $M_{SE}$ to $M_{NW}$.

     \item Move sequence $M_{T}$ (Figure~\ref{3MoveOnTheTopOfWall}). 
     By this move sequence, 
     the MRS changes its move sequence from $M_{SE}$ to $M_{D}$.

     \item Move sequence $M_{D}$ (Figure~\ref{3DecentOnTheEastWall}). 
    The blue module is in cell $(x,y,z)$ at the first.
    By this move sequence, 
    the green module reaches cell$(x,y,z-1)$.
     By repeating $M_{D}$ $n$ times, any of the modules visit the cells $(x,y,z-k)(0 \leq k \leq n)$.
     That is, it visits all the cells of the line $\{(x,y,z)|x=s,y=t\}$.

     \item Move sequence $M_{B}$ (Figure~\ref{3LeaveTheBottomOfTheEastWall}).
     By this move sequence,
     the MRS changes its move sequence from $M_{D}$ to $M_{NW}$.

     \item Move sequence $M_{NECorner}$ (Figure~\ref{3MoveOnTheNortheastCorner}). 
     By this move sequence,
     the MRS changes its move sequence from $M_{D}$ to $M_{WallBottom}$.

     \item Move sequence $M_{WallBottom}$ (Figure~\ref{3MoveAlongTheBottomOfTheWall}). 
    The blue module is in cell $(x,y,0)$ at the first.
    By this move sequence, 
    the green module reaches cell $(x,y-1,0)$ along the edge.
     By repeating $M_{WallBottom}$ $n$ times, any of the modules visit the cells $(x,y-k,0)(0 \leq k \leq n)$.
     That is, it visits all the cells of the line $\{(x,y,z)|x=s,z=0\}$.

     \item Move sequence $M_{SWCorner}$ (Figure~\ref{3MoveOnTheSouthwestCorner}). 
     By this move sequence,
     the MRS changes its move sequence from $M_{WallBottom}$ to $M_{Up}$.

     \item Move sequence $M_{Up}$. (Figure~\ref{3RiseOnTheSouthwestCorner})
    The blue module is in cell $(0,0,z)$ at the first.
    By this move sequence, 
    the green module reaches cell $(0,0,z+1)$.
     By repeating $M_{Up}$ $n$ times, any of the modules visit the cells $(0,0,k)(0 \leq k \leq n)$.
     That is, it visits all the cells of the line $\{(x,y,z)|x=0,y=0\}$.
\end{itemize}

    \begin{figure}[ht]
      \centering
      \includegraphics[keepaspectratio, scale=0.3]
           {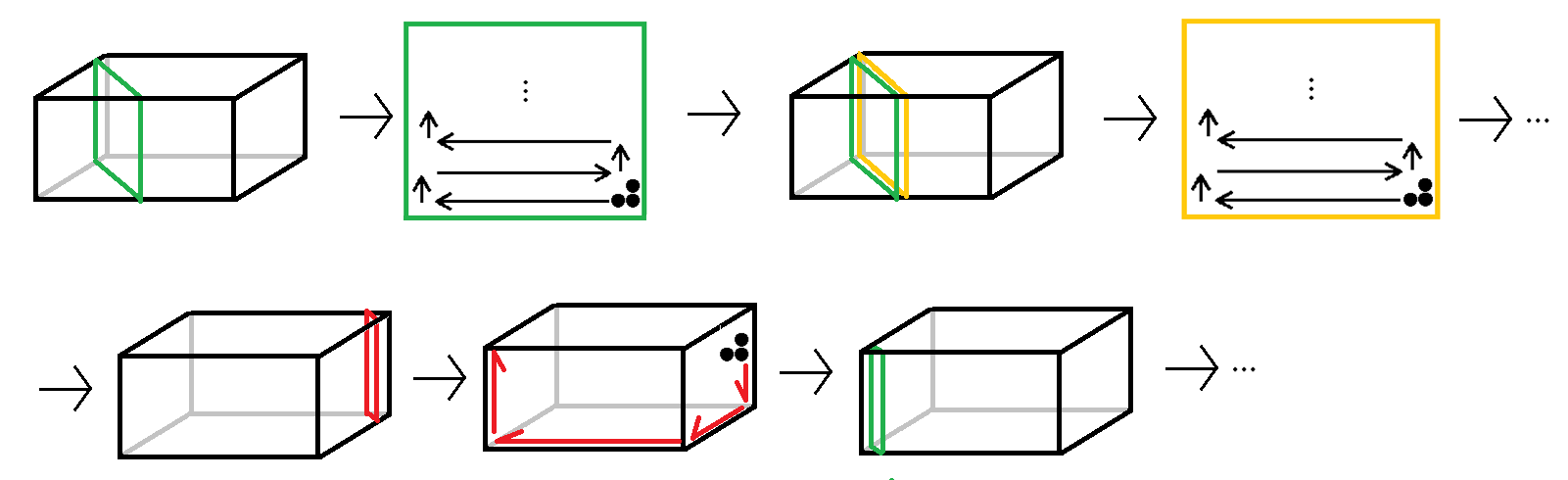}
      \caption{Search by three modules equipped with common compass}
      \label{Explode-Move3}
     \end{figure}

     \begin{figure}[p]
      \centering
      \includegraphics[keepaspectratio,  height=2.3cm]
           {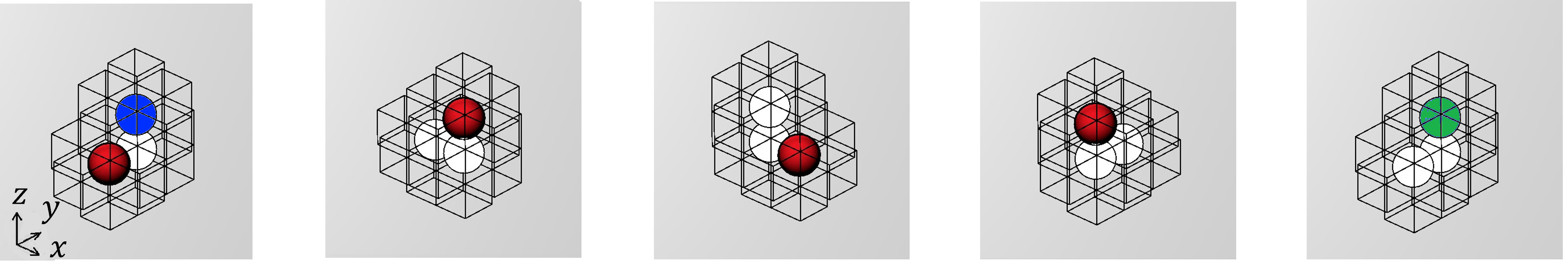}
      \caption{Move to northwest. In each figure, the red module moves. When the blue module is in cell $(x,y,z)$ at the start, after this move sequence, the green model reaches cell $(x-1, y+1, z)$. } 
      \label{3MoveToNorthwest}
     \end{figure}

     \begin{figure}[p]
      \centering
      \includegraphics[keepaspectratio, height=2.5cm]
            {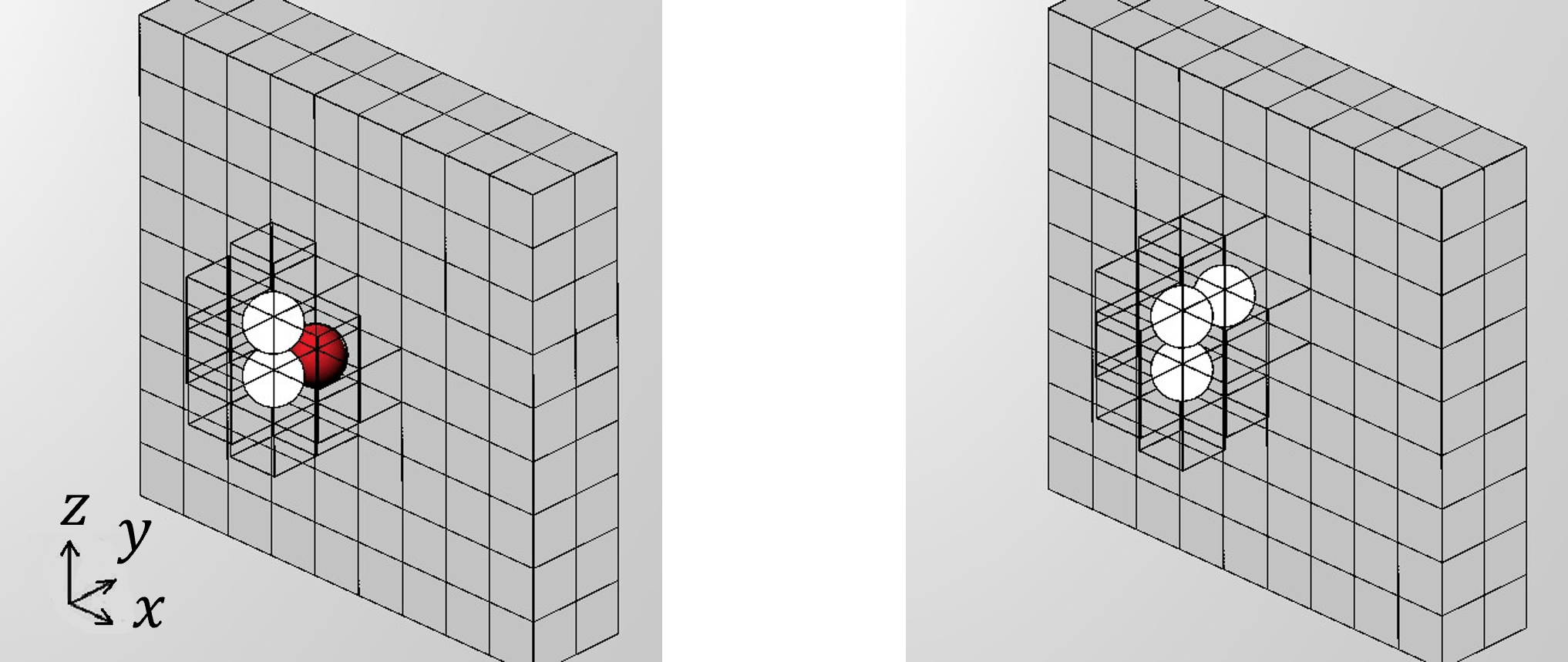}
      \caption{Turn on the north or west wall. In the first figue, the red module moves. }
      \label{3TurnOnTheNorthWestWall}
     \end{figure}

     \begin{figure}[p]
      \centering
      \includegraphics[keepaspectratio, height=2.3cm]
           {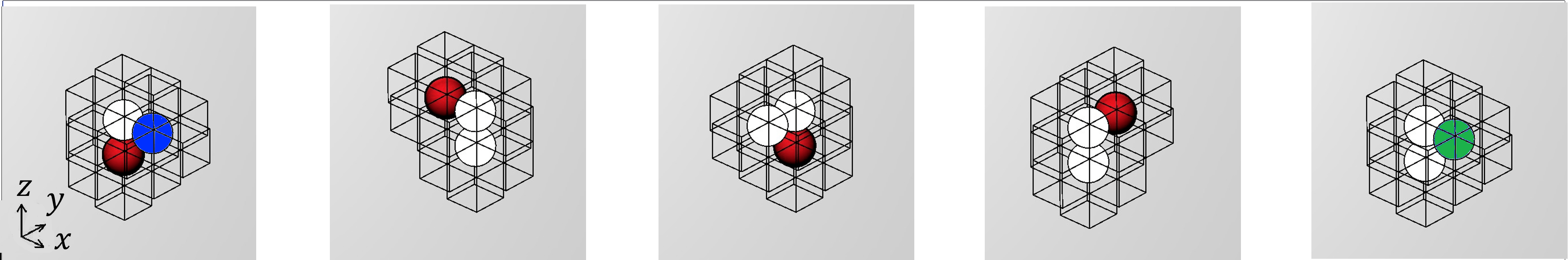}
      \caption{Move to southeast. In each figure, the red module moves. When the blue module is in cell $(x,y,z)$ at the start, after this move sequence, the green model reaches
      $(x+1,y-1,z)$. }
      \label{3MoveToSoutheast}
     \end{figure}

     \begin{figure}[p]
      \centering
      \includegraphics[keepaspectratio, height=2.5cm]
           {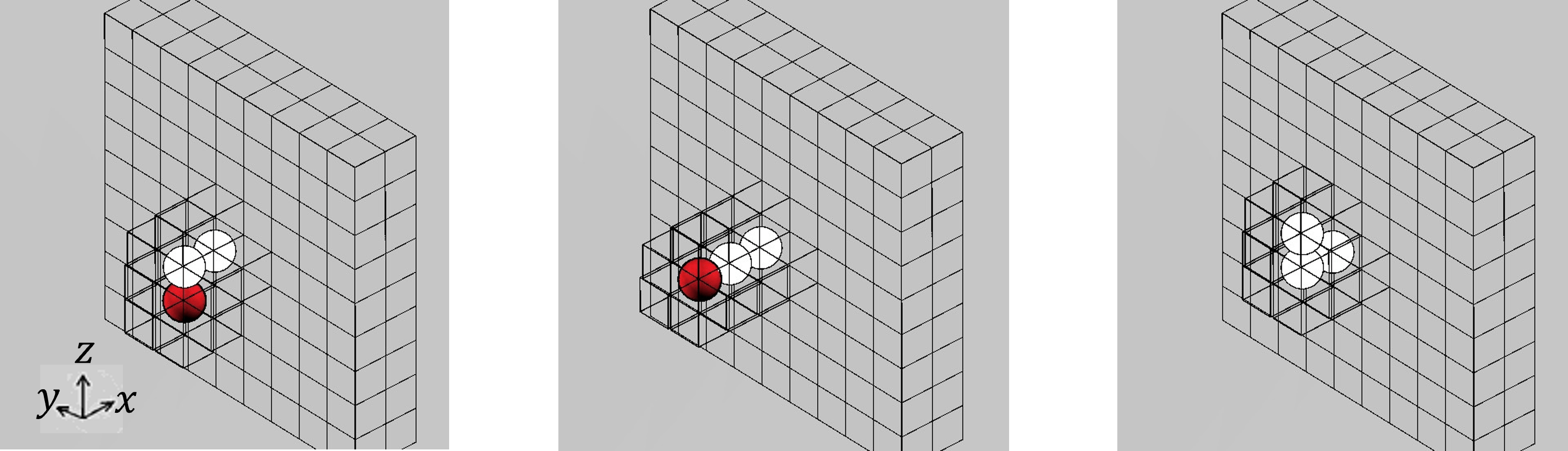}
      \caption{Turn on the south or east wall. In each figure, the red module moves.}
      \label{3TurnOnTheSouthEastWall}
     \end{figure}

     \begin{figure}[p]
      \centering
      \includegraphics[keepaspectratio, height=2.5cm]
           {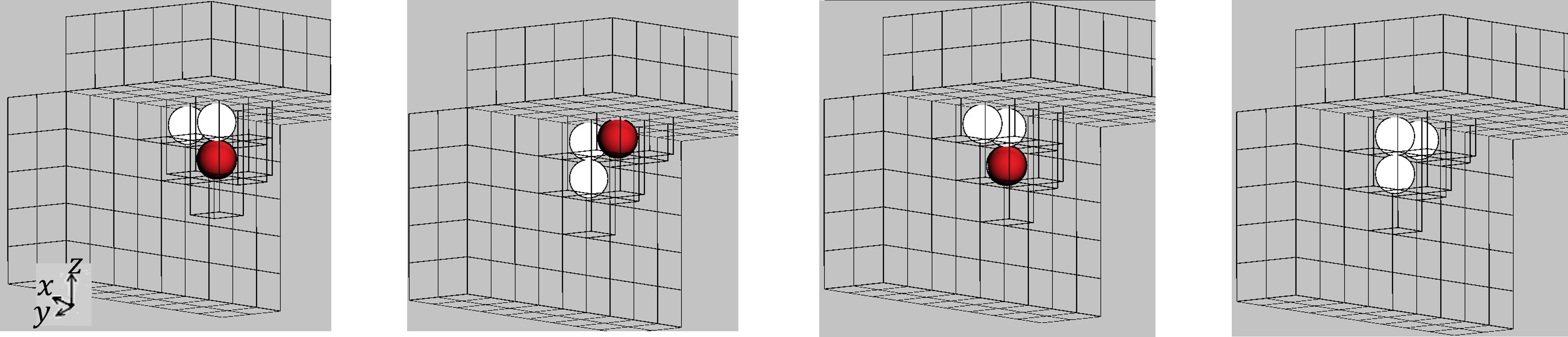}
      \caption{Move around the top of the north or east wall. In each figure, the red module moves.}
      \label{3MoveOnTheTopOfWall}
     \end{figure}

     \begin{figure}[p]
      \centering
      \includegraphics[keepaspectratio, height=2.5cm]
           {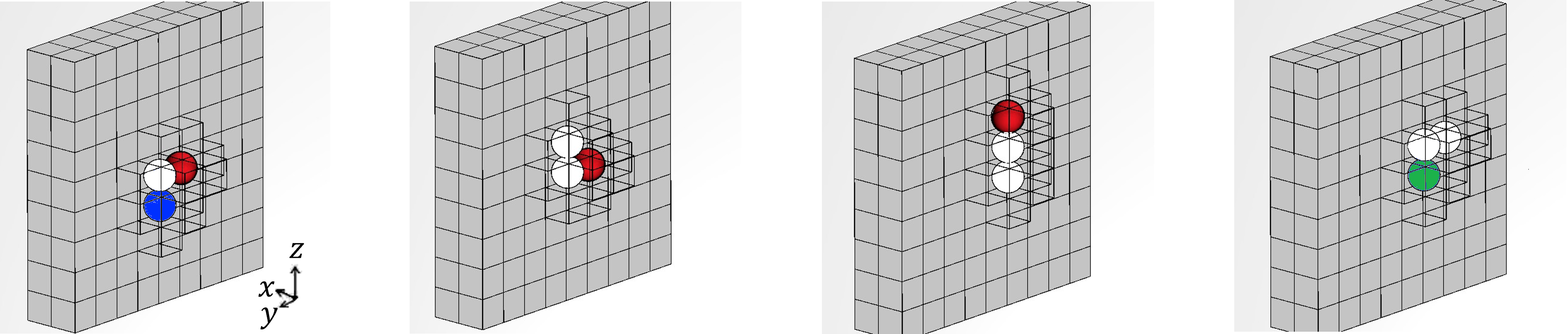}
      \caption{Move down on the north or east wall.
      In each figure, the red module moves. When the blue module is in cell $(x,y,z)$ at the start, after this move sequence, the green model reaches cell $(x, y, z-1)$.}
      \label{3DecentOnTheEastWall}
     \end{figure}

     \begin{figure}[p]
      \centering
      \includegraphics[keepaspectratio, height=2.5cm]
           {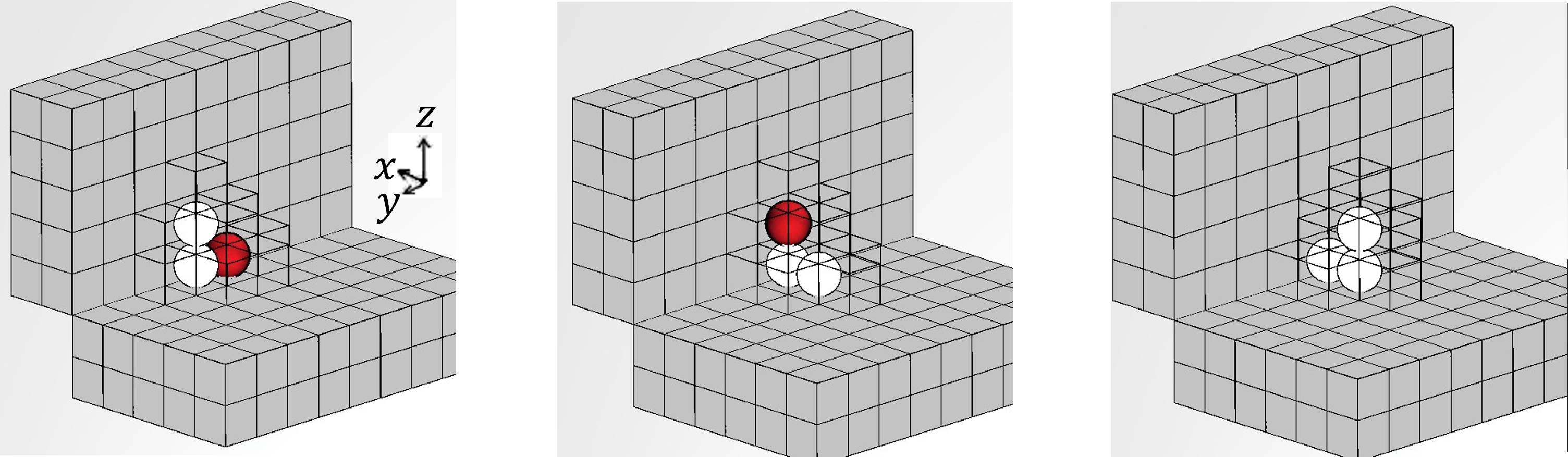}
      \caption{Leaving the bottom of the north or east wall. In each figure, the red module moves.}
      \label{3LeaveTheBottomOfTheEastWall}
     \end{figure}

     \begin{figure}[p]
      \centering
      \includegraphics[keepaspectratio, height=2.5cm]
           {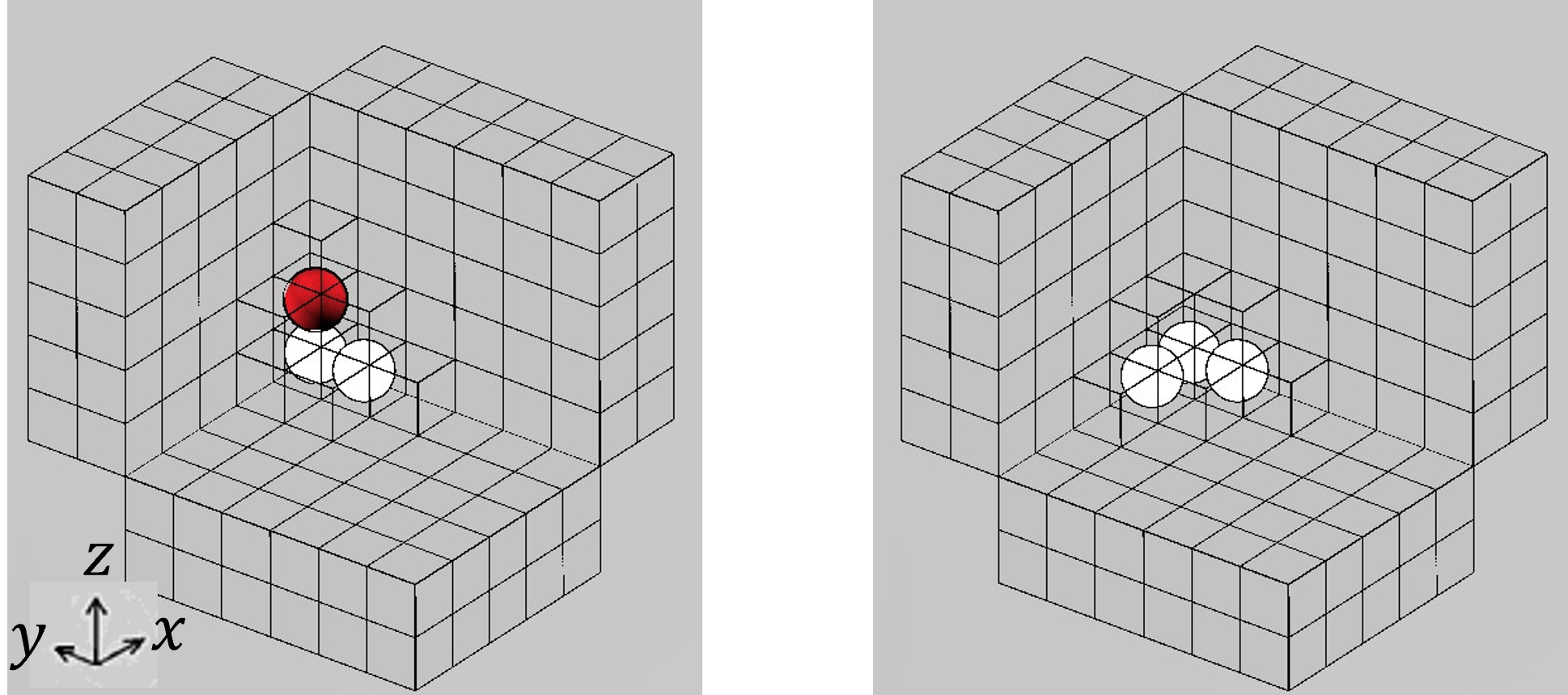}
      \caption{Move on the northeast corner. In the first figure, the red module moves.}
      \label{3MoveOnTheNortheastCorner}
     \end{figure}

     \begin{figure}[hp]
      \centering
      \includegraphics[keepaspectratio, height=2.5cm]
           {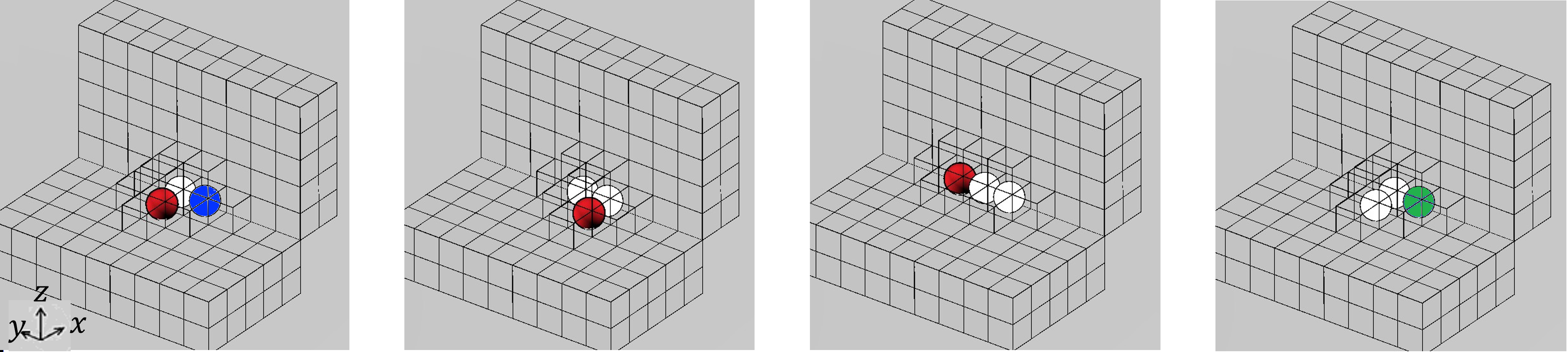}
      \caption{Move along the bottom of the wall.
       In each figure, the red module moves. When the blue module is in cell $(x,y,0)$ at the start, after this move sequence, the green model reaches cell $(x, y-1, 0)$.}
      \label{3MoveAlongTheBottomOfTheWall}
     \end{figure}

     \begin{figure}[hp]
      \centering
      \includegraphics[keepaspectratio, height=2.5cm]
           {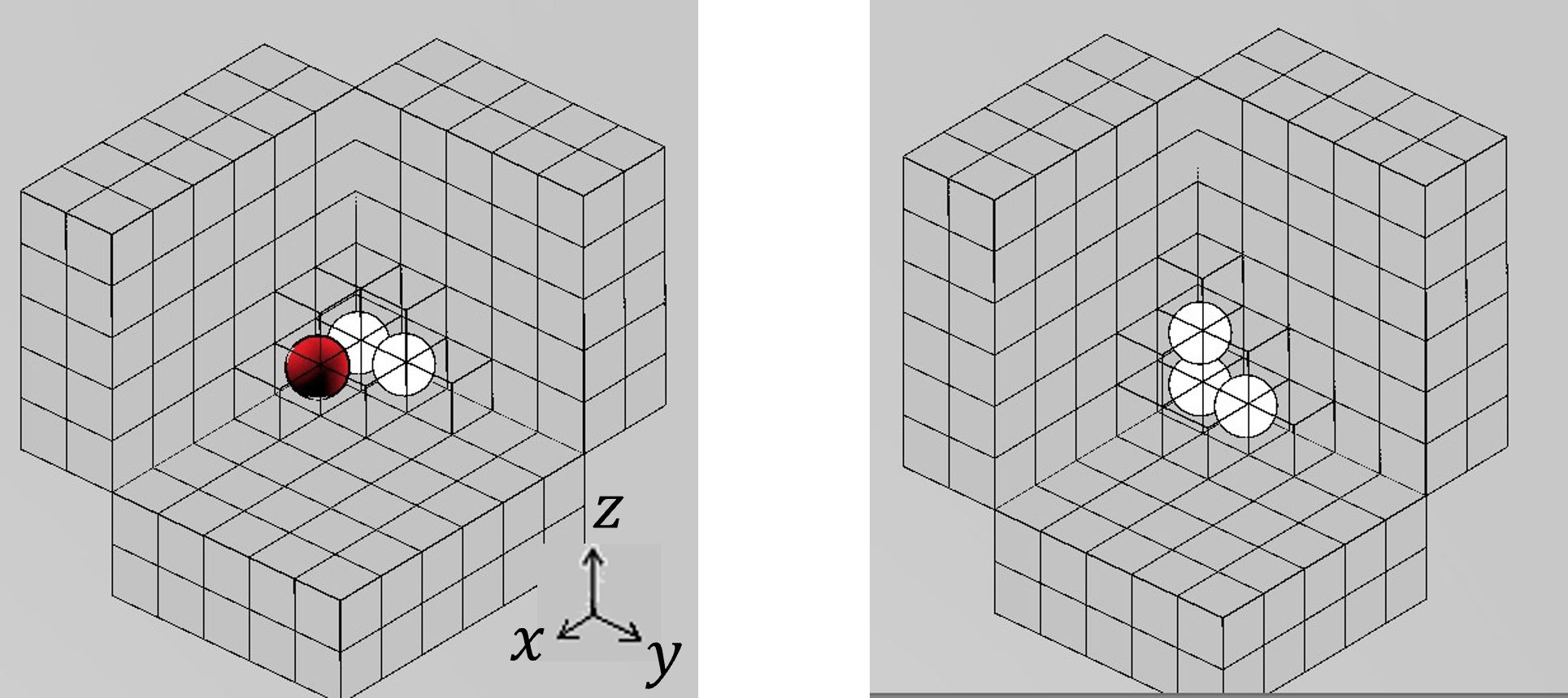}
      \caption{Move on the southwest corner. In the first figure, the red module moves.}
      \label{3MoveOnTheSouthwestCorner}
     \end{figure}
     \begin{figure}[hp]
      \centering
      \includegraphics[keepaspectratio, height=2.5cm]
           {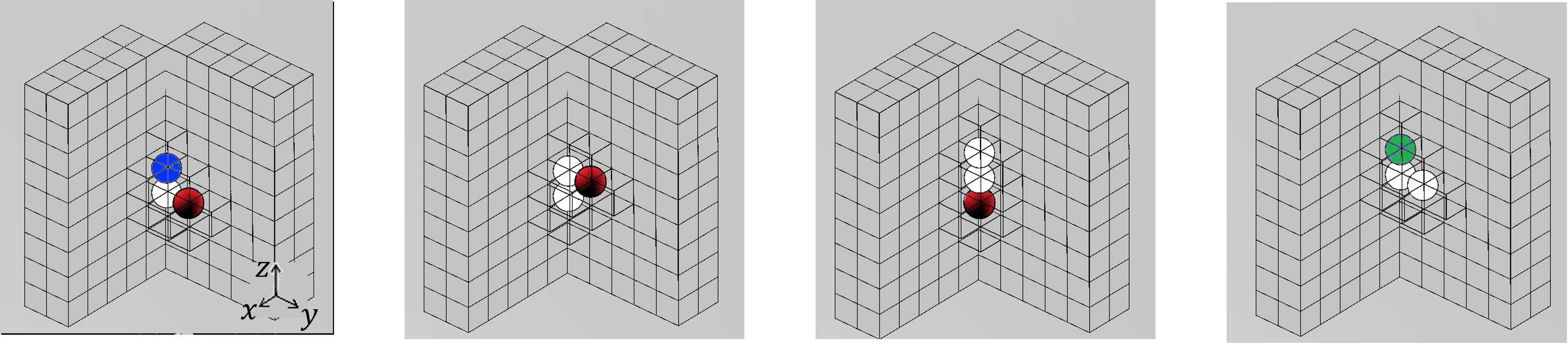}
      \caption{Move up on the southwest corner. In each figure, the red module moves. When the blue module is in cell $(0,0,z)$ at the start, after this move sequence, the green model reaches cell $(0, 0, z+1)$.}
      \label{3RiseOnTheSouthwestCorner}
     \end{figure}


The proposed algorithm consists of the following seven steps. 
\begin{description}
\item[Step 1] The MRS repeats $M_{NW}$, 
    that makes it move in the northwest direction 
    along a horizontal line on a plane $\{(x,y,z) | x + y = s\}$ for some $s$. 
\item[Step 2] When the MRS reaches the north or west wall, 
    it changes the moving direction to southeast 
    by $M_{TurnNW}$. 
\item[Step 3] The MRS repeats $M_{SE}$, 
    that makes it move in the southeast direction 
    along a horizontal line on $\{(x,y,z) | x + y = s\}$. 
    This movement makes the MRS move along the same horizontal line as Step $1$. 
\item[Step 4] If the MRS is adjacent to the top wall 
    it moves to the plane $\{(x,y,z) | x + y = s+1\}$ 
    by $M_{T}$. 
    Then, it repeats $M_{D}$ shown in Figure \ref{3DecentOnTheEastWall}, 
    that makes it move down along the south wall or east wall 
    until it reaches the bottom wall. 
    Then, it leaves the wall by $M_{B}$. 
    It starts searching the new plane by repeating Steps $1$, $2$, $3$, and $4$. 
    Otherwise, it proceeds to Step $5$. 
\item[Step 5] When the MRS reaches the south or east wall, 
    it moves to the row above by $M_{TurnSE}$. 
    Then, it repeats Steps $1$, $2$, and $3$ so that it visits all cells on the new horizontal line. 
\item[Step 6] When the MRS reaches the northeast corner of the top wall, 
    the algorithm sends the MRS back to the southwest corner, 
    where the MRS starts searching by repeating Steps $1$ to $5$. 
    It moves along the northeast edge until it reaches the northeast corner 
    of the bottom wall by $M_D$.  
    Then, it moves along the east edge of the bottom wall 
    until it reaches the south east corner of the bottom wall 
    by $M_{NECorner}$ and repeating $M_{WallBottom}$. 
    It moves along the south edge of the bottom wall 
    until it reaches the southwest corner of the bottom wall 
    by repeating $M_{WallBottom}$. 
    Finally, it moves along the southeast edge 
    until it reaches the southwest corner of the top wall 
    by $M_{SWCorner}$ and repeating $M_{Up}$. 
    Then, the MRS returns to Step 4. 
\end{description}

Table~\ref{tab_3module} shows the input and the output of the proposed algorithm. 
Each element specifies a part of the input (especially, $C_w$ and $C_e$), 
and the MRS does not care whether other cells than those specified are walls or not. 

When the MRS is on a plane $\{(x, y, z) | x+y=s\}$ for some $s$, 
it visits all cells in the horizontal line $\{(x,y,z)|x+y=s,z=t\}$ for some $t$ 
by repeating Steps $1$, $2$, and $3$. 
Then, it proceeds to the horizontal line  $\{(x,y,z)|x+y=s,z=t+1\}$ by Step $5$. 
By repeating Steps $1$, $2$, $3$, and $5$, 
it eventually reaches the top wall. 
Then, it starts searching for cells in $\{(x,y,z)|x+y=s+1\}$ by Step $4$ and $6$. 

Repeating the above movement, the MRS eventually reaches the northeast corner of the top wall. 
At this point, it may have not yet visited the cells near the south west corner. 
Steps $6$ enables the MRS visit these cells 
by moving it to the southwest corner of the top wall and 
starting Steps $1$ again.

    \begin{figure}
      \centering
      \includegraphics[keepaspectratio, scale=0.35]
           {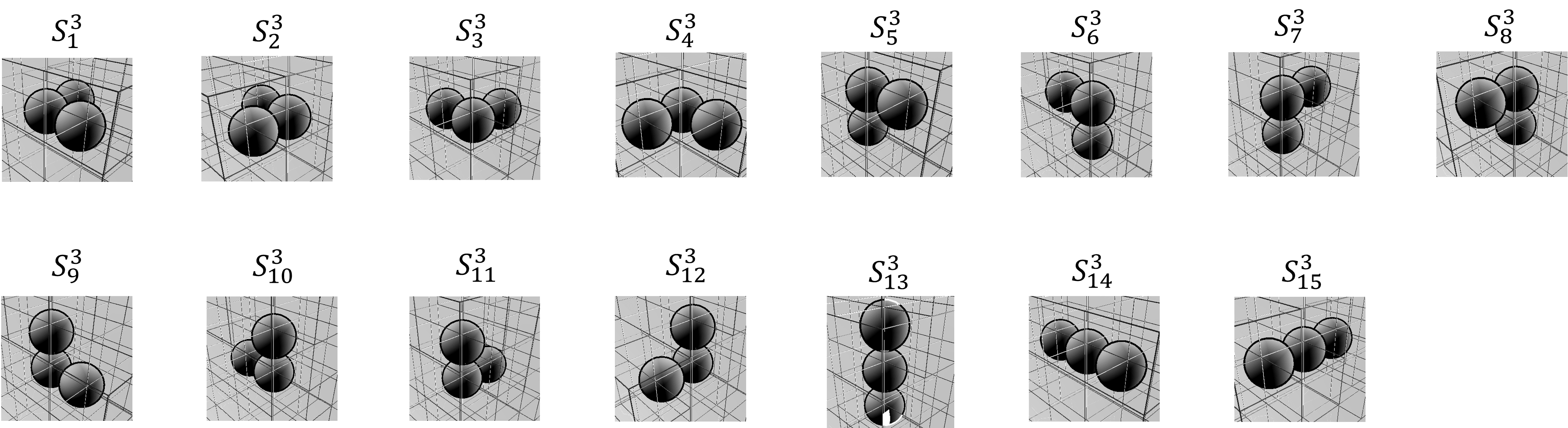}
      \caption{States of the MRS of three modules equipped with a common compass}
      \label{3ModulesShapes}
     \end{figure}

There exist initial configurations that satisfies no condition of Table~\ref{tab_3module}. 
We add exceptional transformation rules from such initial configurations. 
Figure~\ref{3ModulesShapes} shows all states of three modules equipped with a common compass.
Observe that any configuration can be transformed to another one in one time step. 
(Note that more than one module can move in one time.) 
Hence, even if the initial state of the MRS does not match any entry of Table~\ref{tab_3module}, 
the MRS can be transformed to one of the entries 
and the MRS can start search from any initial configuration.


\begin{longtable}{|l|c|c|c|c|}
  \caption{Search algorithm for the MRS of three modules equipped with a common compass}
  \endhead
  \endfoot
\cline{1-5}          & $C_m$ & $C_w$ & $C_e$ &Output \\
\cline{1-5}   $M_{SE}$ & $(0,0,1)$,$(1,0,1)$ &           &  $(2,0,0)$      & $(1,0,0)$ \\
          \cline{2-5} & $(1,0,0)$,$(1,0,-1)$ &           &            & $(1,-1,0)$ \\
       \cline{2-5}   & $(0,0,1)$,$(0,-1,1)$ &            &  $(0,-2,0)$        & $(0,-1,0)$ \\
      \cline{2-5}    & $(0,-1,0)$,$(0,-1,-1)$ &                  &       & $(1,-1,0)$ \\
\cline{1-5}    $M_{TurnSE}$ & $(0,0,1)$,$(1,0,1)$ & $(2,0,0)$      &  $(0,-1,0)$       & $(-1,0,1)$ \\
      \cline{2-5}    & $(1,0,0)$,$(2,0,0)$ &             &  $(0,0,-1)$, & $(1,0,1)$ \\
                   &  &             &  $(0,0,1)$ &\\
      \cline{2-5} & $(0,0,1)$,$(0,-1,1)$ & $(0,-2,0)$ &       & $(0,1,1)$ \\
      \cline{2-5}   & $(0,-1,0)$,$(0,-2,0)$ & $(0,0,-1)$ &            & $(0,-1,1)$ \\

\cline{1-5}  $M_{NW}$ & $(-1,0,0)$,$(-1,0,1)$ &          &  $(0,-1,0)$       & $(-1,1,0)$ \\
     \cline{2-5}     & $(0,0,-1)$,$(0,1,-1)$ &          &  $(0,2,0)$        & $(0,1,0)$ \\
      \cline{2-5}    & $(0,1,0)$,$(0,1,1)$ &          &  $(1,0,0)$      & $(-1,1,0)$ \\
       \cline{2-5}   & $(0,0,-1)$,$(-1,0,-1)$ &           &  $(-2,0,0)$        & $(-1,0,0)$ \\
\cline{1-5}    $M_{TurnNW}$ & $(0,-1,0)$,$(0,-1,1)$ & $(0,1,0)$      &             & $(0,0,1)$ \\
\cline{2-5}  & $(1,0,0)$,$(1,0,1)$ & $(-1,0,0)$ &         & $(0,0,1)$ \\
\cline{1-5}    $M_{T}$  & $(0,0,1)$,$(1,0,1)$ & $(2,0,0)$,$(0,0,2)$ &          & $(1,0,0)$ \\
       \cline{2-5}  & $(1,0,0)$,$(1,0,-1)$ & $(2,0,0)$,$(0,0,1)$&           & $(1,1,0)$ \\
        \cline{2-5}  & $(0,0,1)$,$(0,1,1)$ & $(1,0,0)$,$(0,0,2)$ &           & $(0,1,0)$ \\
        \cline{2-5}   & $(0,0,1)$,$(0,-1,1)$ & $(0,0,2)$,$(0,-2,0)$      &       & $(0,-1,0)$ \\
        \cline{2-5}  & $(0,-1,0)$,$(0,-1,-1)$ & $(0,0,1)$,$(0,-2,0)$ &       & $(1,-1,0)$ \\
        \cline{2-5}  & $(0,0,1)$,$(1,0,1)$ & $(0,0,1)$,$(0,-1,0)$ &          & $(1,0,0)$ \\

\cline{1-5}    $M_{D}$ & $(0,1,0)$,$(0,1,-1)$ & $(1,0,0)$  &        & $(0,0,-1)$ \\
       \cline{2-5}   & $(0,1,0)$,$(0,1,1)$ & $(1,0,0)$ &       $(0,0,-1)$     & $(0,1,-1)$ \\
       \cline{2-5}   & $(0,0,-1)$,$(0,0,-2)$ & $(1,0,0)$ &                & $(0,-1,-1)$ \\
\cline{2-5} & $(1,0,0)$,$(1,0,-1)$ & $(0,-1,0)$ &         & $(0,0,-1)$ \\
\cline{2-5}   & $(1,0,0)$,$(1,0,1)$ & $(0,-1,0)$ &      $(0,0,-1)$        & $(1,0,-1)$ \\
 \cline{2-5}  & $(0,0,-1)$,$(0,0,-2)$ & $(0,-1,0)$ &      & $(-1,0,-1)$ \\

\cline{1-5}  $M_{B}$  & $(0,1,0)$,$(0,1,1)$ & $(1,0,0)$,$(0,0,-1)$ &      $(0,-1,0)$, & $(-1,1,0)$ \\
                 & & &      $(0,2,0)$ & \\
       \cline{2-5}  & $(0,0,-1)$,$(-1,0,-1)$ & $(1,0,0)$,$(0,0,-2)$ &       $(0,-1,0)$   & $(-1,0,0)$ \\
        
 \cline{2-5}  & $(1,0,0)$,$(1,0,1)$ & $(0,-1,0)$,$(0,0,-1)$ &          & $(1,1,0)$ \\
 \cline{2-5}  & $(0,0,-1)$,$(0,1,-1)$ & $(0,-1,0)$,$(0,0,-2)$ &           & $(0,1,0)$ \\
\cline{1-5}   $M_{NECorner}$  & $(0,0,-1)$,$(0,-1,-1)$ & $(0,1,0)$,$(1,0,0)$,&    & $(-1,0,-1)$ \\
                             &                   &       $(0,0,-2)$ &    & \\
       \cline{1-5} $M_{WallBottom}$   & $(1,0,0)$,$(1,-1,0)$ & $(0,0,-1)$ &          & $(0,-1,0)$ \\
       \cline{2-5}   & $(1,0,0)$,$(1,1,0)$ & $(0,0,-1)$ &        $(0,-1,0)$     & $(1,-1,0)$ \\
       \cline{2-5}   & $(0,-1,0)$,$(0,-2,0)$ & $(0,0,-1)$ &          & $(-1,-1,0)$ \\
\cline{2-5}  & $(0,-1,0)$,$(-1,-1,0)$ & $(0,0,-1)$,$(0,-2,0)$  &       & $(-1,0,0)$ \\
       \cline{2-5}   & $(0,-1,0)$,$(1,-1,0)$ & $(0,0,-1)$ &      $(-1,0,0)$  & $(-1,-1,0)$ \\
       \cline{2-5}   & $(-1,0,0)$,$(-2,0,0)$ & $(0,0,-1)$ &         & $(-1,1,0)$ \\
\cline{1-5} $M_{SWCorner}$ & $(-1,0,0)$,$(-1,1,0)$ & $(0,0,-1)$,$(-2,0,0)$,&      & $(-1,0,1)$ \\
                             &                       & $(0,-1,0)$&      &  \\
        \cline{1-5} $M_{Up}$  & $(0,-1,0)$,$(0,-1,1)$ & $(-1,0,0)$,$(0,-2,0)$  &       & $(0,0,1)$ \\
        \cline{2-5}  & $(0,-1,0)$,$(0,-1,-1)$ & $(-1,0,0)$,$(0,-2,0)$ &     $(0,0,1)$    & $(0,-1,1)$ \\
        \cline{2-5}  & $(0,0,1)$,$(0,0,2)$ & $(-1,0,0)$,$(0,-1,0)$ &            & $(0,1,1)$ \\

          \cline{1-5}          & \multicolumn{1}{c|}{Otherwise} & \multicolumn{1}{c|}{Otherwise} & \multicolumn{1}{c|}{Otherwise} & $(0,0,0)$ \\

          \cline{1-5}   
          \end{longtable}
  \label{tab_3module}%

\subsection{Search with a common vertical axis} 

We show the following theorem by a search algorithm for the MRS of 
four modules equipped with a common vertical axis. 
  \begin{theorem}
    \label{theorem: 4 modules can Exprole 3D space without horizontal compass}
    The MRS of four modules with a common vertical axis can solve a search problem 
    in a finite 3D grid 
    if no pair of modules have an identical observation in an initial configuration. 
    \end{theorem}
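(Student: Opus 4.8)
The plan is to reproduce the structure of the proof of Theorem~\ref{theorem: 3 modules can Exprole 3D space with compass}: I would design a finite family of move sequences that drive the MRS to sweep the field one diagonal plane $\{(x,y,z)\mid x+y=s\}$ at a time, sweeping every horizontal line of each plane, then encode the entire behaviour as a state transition table and add exceptional rules for initial configurations that match no table entry. The one genuinely new feature is the absence of a horizontal compass: a module now interprets its observation only up to the four rotations by $0,\pi/2,\pi,3\pi/2$ about the common vertical axis, so the current \emph{shape} of the MRS must by itself determine the horizontal direction in which the sweep is to continue. Everything in the construction is organised around making this possible with four modules.

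Concretely, I would first fix the canonical travelling shapes. For each line sweep and each turn, descent, corner, and plane-transition manoeuvre I would pick a four-module shape whose stabiliser under the four vertical rotations is trivial, so that the physical horizontal orientation of the shape, and hence the intended output move, is recovered from the canonicalised observation. The fourth module is spent precisely here: it serves as the asymmetry that, added to the three that carry out the 2D-style sweep, pins down the horizontal direction that a common compass would otherwise have supplied. With the shapes fixed, correctness of a single full sweep reduces to the same induction as in the compass case: each line sequence advances the distinguished module by one diagonal cell while visiting every intermediate cell of the current line, each turn sequence steps to the next line, each transition sequence moves from $\{x+y=s\}$ to $\{x+y=s+1\}$, and the corner and return sequences carry the MRS back to the southwest corner so that the cells skipped on the way up are visited afterwards; since the field is finite, some module eventually enters the target cell.

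The technical core is the verification of the transition table. For every reachable travelling shape I must check that, after quotienting by the four vertical rotations, exactly one table entry matches its $(C_m,C_w,C_e)$, that the matched entry outputs the geometrically intended move, and that both connectivity conditions hold throughout the move. This is where the difference from the compass case really bites, and I expect it to be the main obstacle: because orientation is carried by the shape, I must rule out any rotational coincidence in which a shape meant to sweep one way becomes indistinguishable, under some vertical rotation, from one meant to sweep another way, or matches a wall or turn rule prematurely. The configurations at turns, at the six walls, and at the plane transitions—where the encoded direction itself must be rotated—are the delicate ones, and it is exactly the trivial-stabiliser property of the chosen shapes that is invoked to exclude these coincidences.

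Finally I would handle initialisation, and this is where the hypothesis that no two modules share an identical observation is essential. Because the system is fully synchronous and the modules are anonymous and oblivious, two modules with equal canonicalised observations necessarily issue equal moves in their local frames; consequently any configuration invariant under a nontrivial vertical rotation evolves so as to remain invariant and can never reach an asymmetric travelling shape. The distinct-observation hypothesis forbids exactly such rigidly symmetric starts, since a nontrivial vertical symmetry would force two modules to observe the same thing. Under the hypothesis I would, as in Figure~\ref{3ModulesShapes}, add a bounded set of exceptional rules showing that every admissible initial configuration is transformed within finitely many steps into one of the canonical travelling shapes of the table, after which the sweep argument applies and the search terminates successfully.
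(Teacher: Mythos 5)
Your overall architecture---travelling shapes whose stabiliser under the four vertical rotations is trivial so that the shape itself encodes the horizontal heading, a table-driven verification, and a symmetry argument explaining why the distinct-observation hypothesis is needed---matches the paper's philosophy. But your sweep geometry is genuinely different from, and less well adapted to the model than, the one the paper uses. You propose to reuse the diagonal planes $\{(x,y,z)\mid x+y=s\}$ and horizontal line sweeps of the three-module compass algorithm, so that \emph{every} step of every line sweep requires the shape to disambiguate a horizontal direction. The paper instead slices the field into axis-perpendicular planes ($x=s$, then $y=s'$) and sweeps each plane by \emph{vertical} lines ($M_{Down}$, $M_{Up}$): the repeated motion is along the one axis the modules actually agree on, so the shape-encoded orientation is only consulted at turns, at the bottom-wall plane transitions, and at the corner manoeuvre that rotates the sweep direction by $90$ degrees (the algorithm cycles through four orthogonal sweep directions rather than returning to a fixed southwest corner as in the compass case). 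This design choice is not cosmetic: it minimises the number of configurations in which a rotational coincidence could misfire a rule, which is exactly the failure mode you yourself identify as the technical core.

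The genuine gap is that you never discharge that technical core. For a theorem whose proof \emph{is} an explicit algorithm, the content is the concrete list of move sequences and the check, configuration by configuration, that each one matches exactly one table entry up to vertical rotation, produces the intended move, and preserves both connectivity conditions; the paper supplies this in Table~\ref{tab_4module} together with the enumeration of all four-module states (Figure~\ref{4ModuleShapes}), the transition graph showing every asymmetric initial state reaches the travelling shape $S^4_{L3}$ (Figure~\ref{4ModuleTransitionGraph}), and the extra rotation rule for $S^4_{L3}$ blocked by a wall (Figure~\ref{Ltransform}). You assert that four modules suffice to maintain a trivial-stabiliser shape through every phase of a horizontal diagonal sweep, but with four cells there are very few connected shapes, several of them (straight lines, the square) have nontrivial vertical symmetry, and you have not exhibited a family of asymmetric shapes closed under the slidings and rotations your sweep needs. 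Until such a family and its table are written down and checked, the claim that your diagonal-sweep variant works with four modules is a conjecture, not a proof; the paper's vertical-sweep construction is the evidence that four modules are in fact enough.
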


We prove Theorem~\ref{theorem: 4 modules can Exprole 3D space without horizontal compass} 
by a search algorithm. 
The proposed algorithm considers each plane 
$x=s$ for $s = 0, 1, 2, \ldots$ and 
$y=s'$ for $s' = 0, 1, 2, \ldots$. 
The MRS moves along each vertical line $x=s,y=t$ when it is on the plane $x=s$ for $t=0,1,2,\ldots$ 
and $x=t', y=s'$  when it is on the plane $y=s'$ for $t'=0,1,2,\ldots$. 
Figure~\ref{Explode-Move4} shows an execution of the proposed algorithm. 

The MRS continues to search each plane perpendicular to the $x$-axis 
until it reaches the east wall. 
Then, it starts to search each plane perpendicular to the $y$-axis. 
The algorithm description is given in Table~\ref{tab_4module}.
 \begin{figure}[htbp]
      \centering
      \includegraphics[keepaspectratio, width=\linewidth]
           {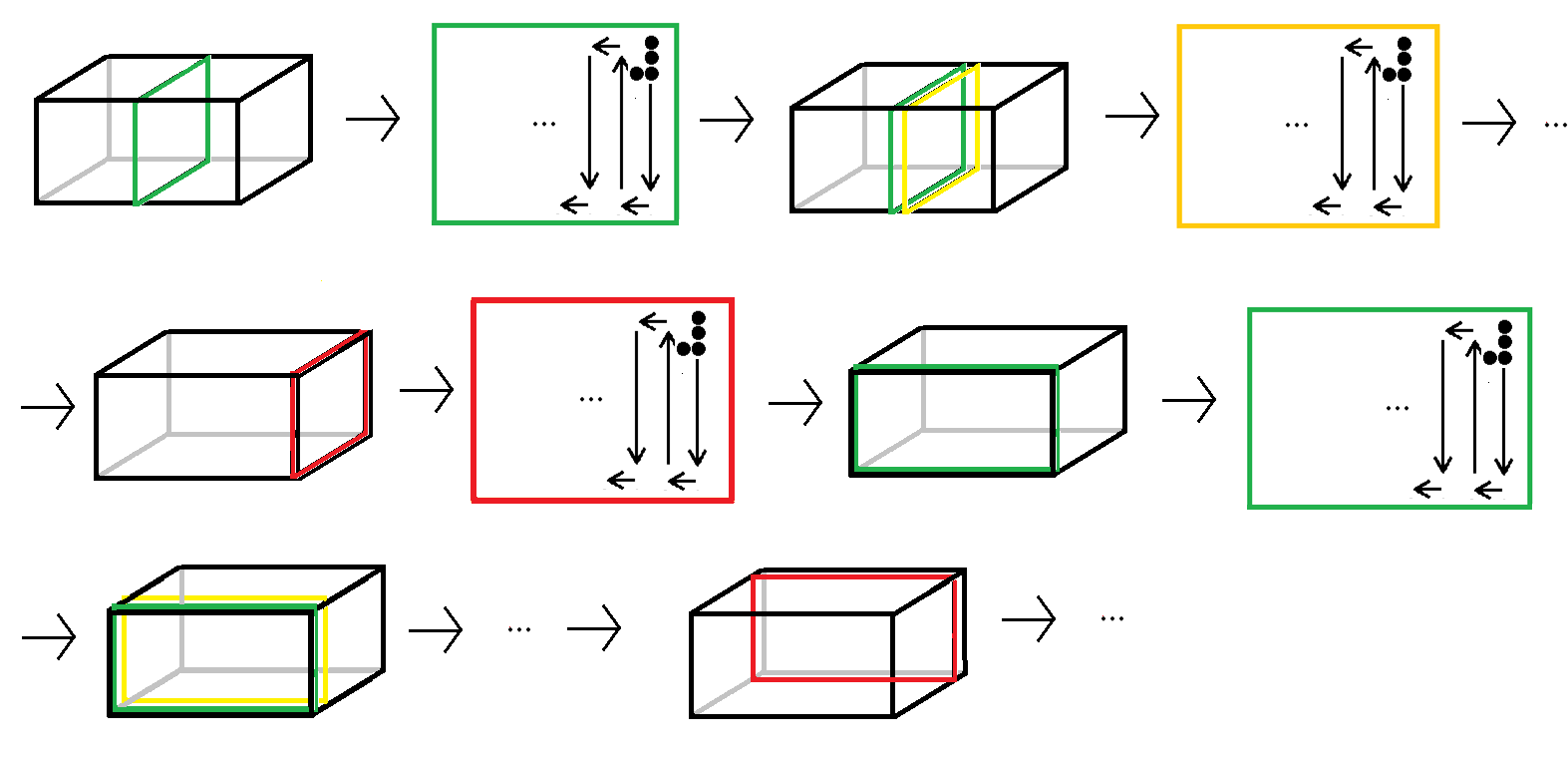}
      \caption{Example of a search by four modules}
      \label{Explode-Move4}
     \end{figure}

     The proposed algorithm consists of the following move sequences.
     \begin{itemize}
          \item Move sequence $M_{Down}$(Figure~\ref{4MoveToDown}).
          The blue module is in cell $(x,y,z)$ at the start.
          By this move sequence, 
          the green module reaches cell $(x,y,z-1)$.
          By repeating $M_{Down}$ $n$ times, any of the modules visit the cells $(x,y,z-k)(0 \leq k \leq n)$. 
          That is, it visits all the cells of the horizontal line $\{(x,y,z)|x=s,y=t\}$. 
     
          \item Move sequence $M_{Up}$(Figure~\ref{4MoveToUp}).
          The blue module is in cell $(x,y,z)$ at the start.
          By this move sequence, 
          the green module reaches cell $(x,y,z+1)$.
          By repeating $M_{Up}$ $n$ times, any of the modules visit the cells $(x,y,z+k)(0 \leq k \leq n)$. 
          That is, it visits all the cells of the horizontal line $\{(x,y,z)|x=s,y=t\}$. 
     
          \item Move sequence $M_{TurnUp}$(Figure~\ref{4TurnOnTheUpWall}).
          By this move sequence, 
          the MRS changes its move sequence from $M_{Up}$ to $M_{Down}$.
     
          \item Move sequence  $M_{TurnD}$(Figure~\ref{4TurnOnTheDownWall}).
          By this move sequence, 
          the MRS changes its move sequence from $M_{Down}$ to $M_{Up}$.
     
          \item Move sequence $M_{B1}$(Figure~\ref{4MoveOnTheBottomOfTheWall}).
          By this move sequence, 
          the MRS changes its move sequence from $M_{Down}$ to $M_{B2}$.
     
          \item Move sequence $M_{B2}$(Figure~\ref{4MoveAlongTheDownWall}).
          The blue module is in cell $(x,y,0)$ at the start.
          By this move sequence, 
          the green module reaches cell $(x+1,y,0)$.
          By repeating $M_{B2}$ $n$ times, any of the modules visit the cells $(x+k,y,0)(0 \leq k \leq n)$. 
          That is, it visits all the cells of the horizontal line $\{(x,y,z)|y=s,z=0\}$. 
     
          \item Move sequence $M_{B3}$(Figure~\ref{4MoveOnTheBottomOfTheWallToRise}).
          By this move sequence, 
          the MRS changes its move sequence from $M_{B2}$ to $M_{Up}$.
     
          \item Move sequence $M_{Corner}$(Figure~\ref{4MoveOnTheBottomOfTheCorner}).
          By this move sequence, 
          the MRS changes its move sequence from $M_{Down}$ to $M_{B1}$.
     \end{itemize}

The proposed algorithm consists of the following six steps. 
We use north, south, east, and west for explanation, 
however each module does not need to know the directions. 

\begin{description}
     \item[Step 1] The MRS repeats $M_{Down}$,
           that makes it move in the down direction
          along a vertical line on a plane $\{(x,y,z) | y = s\}$ for some $s$.
     \item[Step 2] When the MRS reaches the bottom wall, 
            it changes the direction to up 
           by  $M_{TurnD}$.
     \item[Step 3] The MRS repeats $M_{Up}$,
           that makes it move in the up direction.
          This movement makes the MRS move along the same vertical line as Step $1$.
     \item[step 4] If the MRS is adjacent to the bottom of the west wall, 
     it moves to the next plane $\{(x,y,z) | y = s-1\}$
     by $M_{B1}$.
     Then it moves along the bottom wall in the east direction
     by repeating $M_{B2}$.
     When the MRS reaches the east wall,
     it performs $M_{B3}$.
     Then it starts searching the next plane by Step $1$.
     Otherwise, it proceeds to Step $5$. 
     \item [Step 5] When the MRS reaches the top wall, 
     it moves west by one row 
     by $M_{TurnUp}$.
     Then it returns to Step $1$.
     \item[Step 6]  When the MRS reaches the southwest corner, 
     it performs $M_{Corner}$,
     and performs Step $4$, 
     that makes it start searching the plane obtained by rotation the current 
     search plane by $90$ degrees. 
     Thus, the search planes are first perpendicular to the $x$ axis and
     the MRS moves to east. 
     Then, the search planes are perpendicular to the $y$ axis and the MRS moves to north. 
     Third the search planes are perpendicular to the $x$ axis and the MRS moves to west. 
     Finally, the search planes are perpendicular to the $y$ axis and the MRS moves to south.
        \end{description}
   
        Depending on its initial state,
        MRS may start from the middle of the above track.
         
        When the MRS is on a plane $\{(x, y, z) | y=s\}$ for some $s$, 
        it visits all cells on a vertical line $\{(x,y,z)|x=t,y=s\}$ by Steps $1$, $2$, and $3$.
        Then, it proceeds to a vertical line $\{(x,y,z)|x=t-1,y=s\}$ by Step $5$.
        By repeating Steps $1$, $2$, $3$, and $5$, it visits all cells on the plane 
        $\{(x,y,z) | y=s\}$.
        By Step $4$, it moves to the bottom of east wall in the next plane $y=s-1$,
        then it starts searching the plane.
        By repeating Steps $1$ to $5$, it eventually reaches the southwest corner of the bottom wall.
        It starts searching the vertical line \{$\{(x,y,z)|x=1,y=d-2\}$\} by Step $6$.
   
 \begin{figure}[htbp]
     \centering
     \includegraphics[keepaspectratio, width=\linewidth]
          {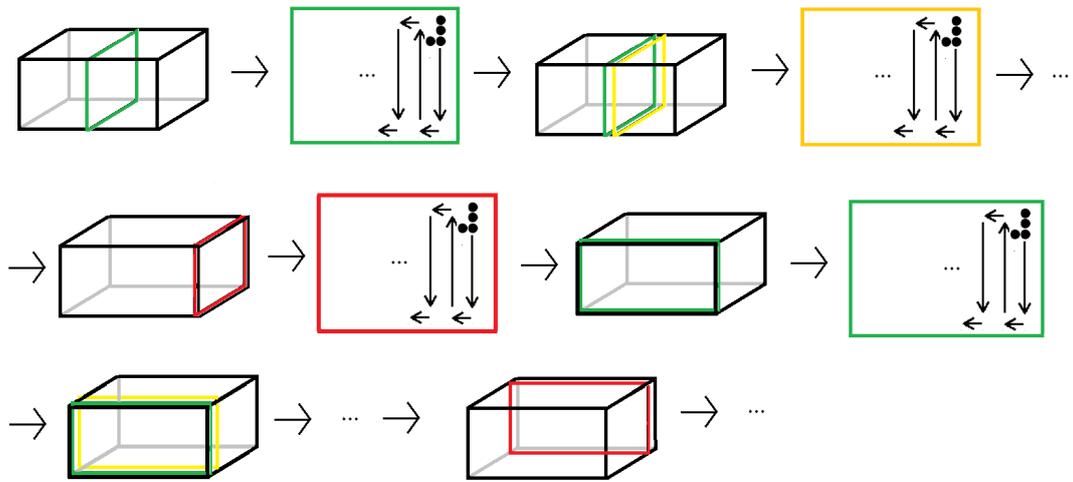}
     \caption{Example of a search by four modules}
     \label{Explode-Move4ap}
    \end{figure}
    
    \begin{figure}[htbp]
     \centering
     \includegraphics[keepaspectratio, height=2.5cm]
          {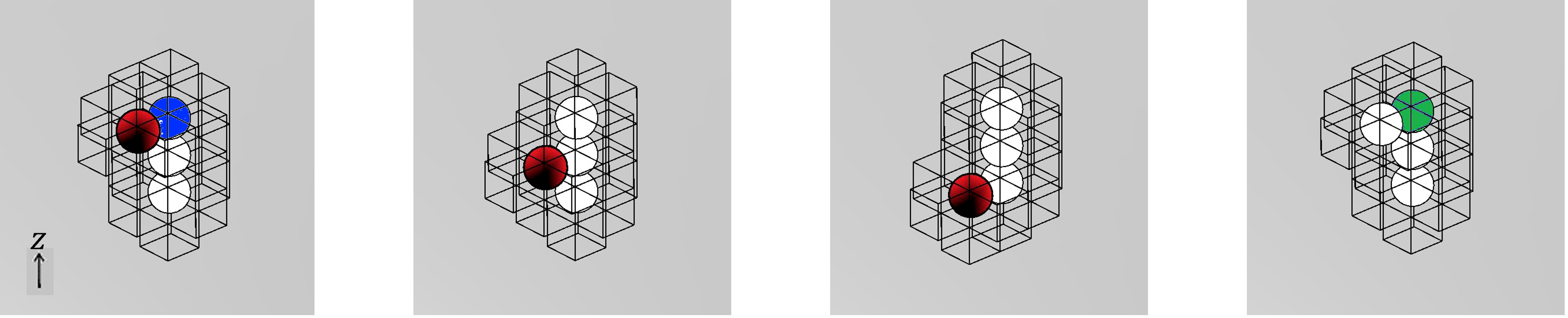}
     \caption{Move to down.
     In each figure, the red module moves. When the blue module is in cell $(x,y,z)$ at the start, after this move sequence, the green model reaches
     $(x,y,z-1)$.}
     \label{4MoveToDown}
    \end{figure}

   \begin{figure}[htbp]
     \centering
     \includegraphics[keepaspectratio, height=2.0cm]
          {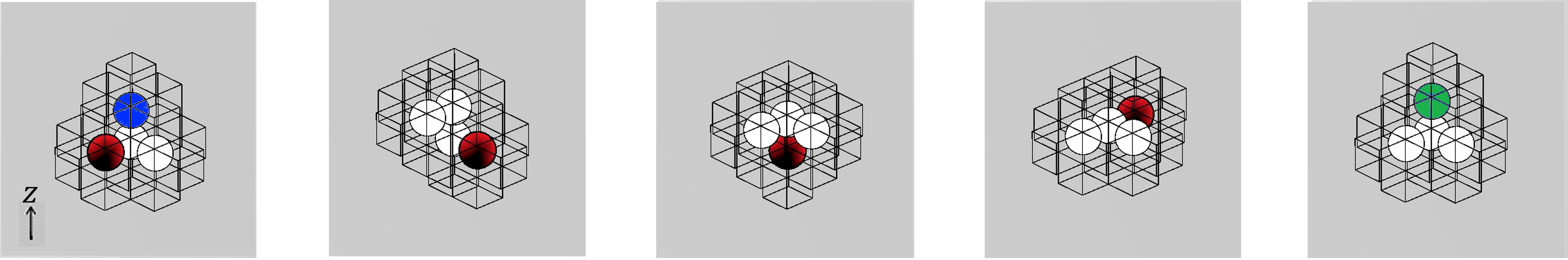}
     \caption{Move to up.
     In each figure, the red module moves. When the blue module is in cell $(x,y,z)$ at the start, after this move sequence, the green model reaches
     $(x,y,z+1)$. }
     \label{4MoveToUp}
    \end{figure}

    \begin{figure}[htbp]
     \centering
     \includegraphics[keepaspectratio, height=2.5cm]
          {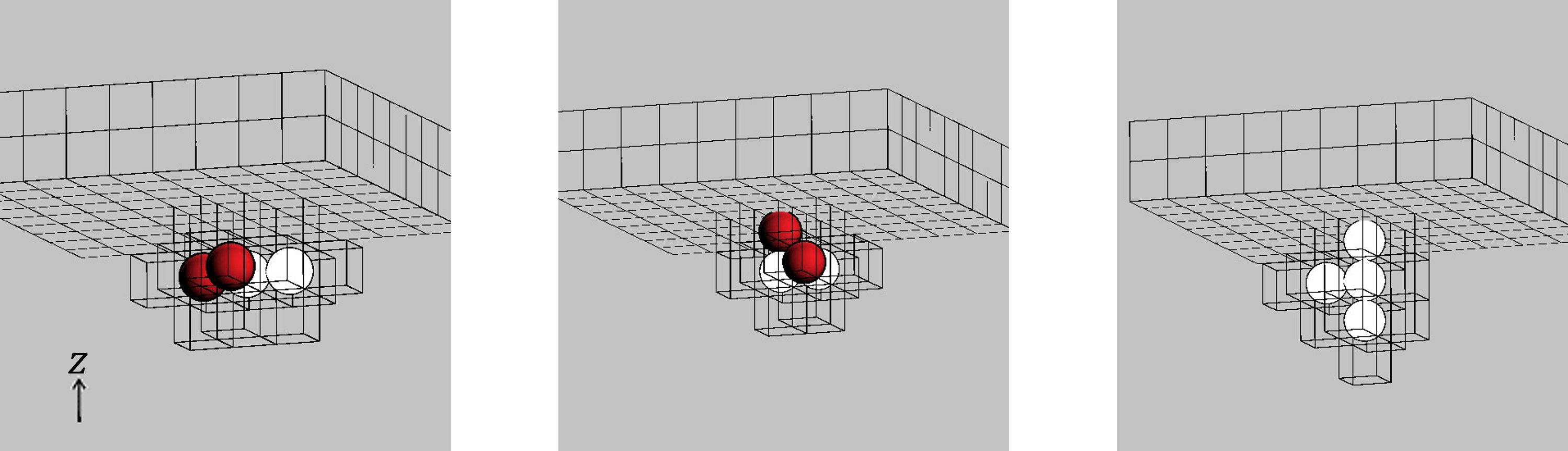}
     \caption{Turn on the up wall. In each figure, the red modules move.}
     \label{4TurnOnTheUpWall}
    \end{figure}

    \begin{figure}[htbp]
     \centering
     \includegraphics[keepaspectratio, height=2.5cm]
          {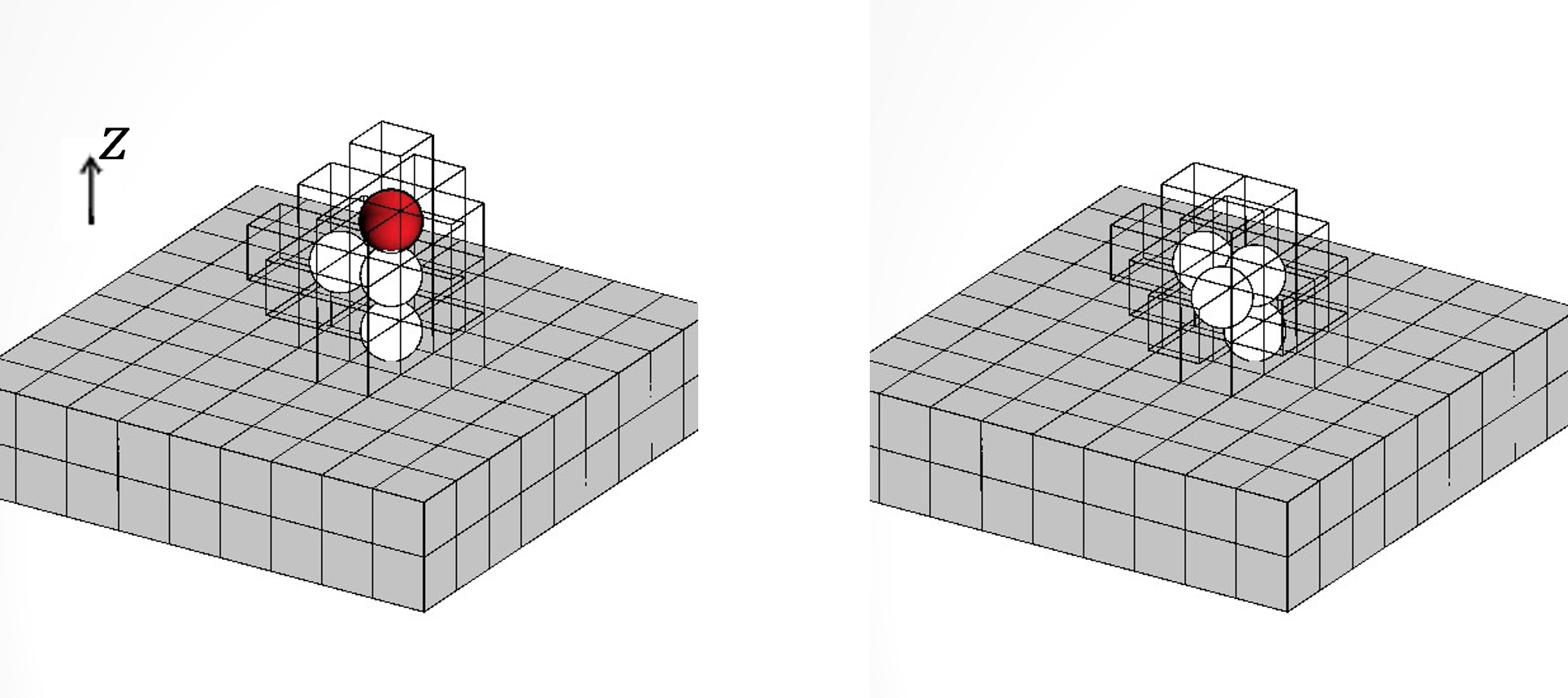}
     \caption{Turn on the down wall. In the first figure, the red module moves.}
     \label{4TurnOnTheDownWall}
    \end{figure}

    \begin{figure}[htbp]
     \centering
     \includegraphics[keepaspectratio, height=2.0cm]
          {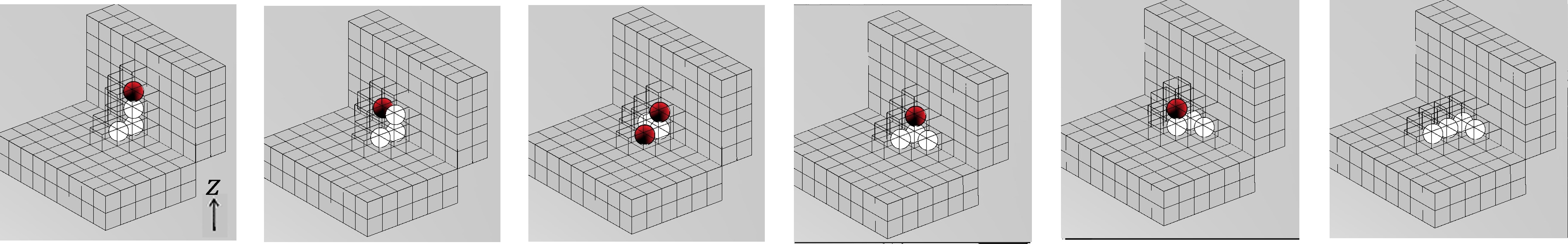}
     \caption{Move on the bottom of the wall. In each figure, the red modules move.}
     \label{4MoveOnTheBottomOfTheWall}
    \end{figure}

    \begin{figure}[htbp]
     \centering
     \includegraphics[keepaspectratio, height=2.5cm]
          {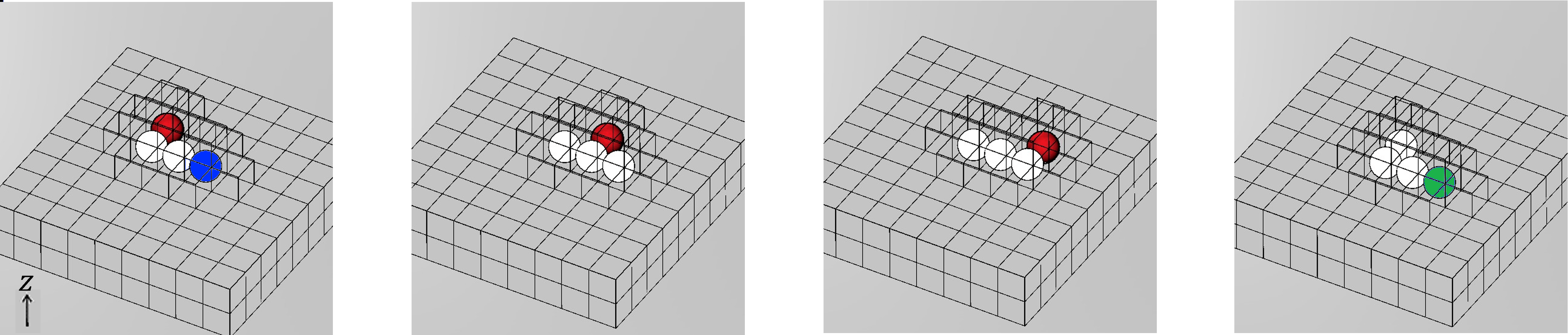}
     \caption{Move along the down wall.
      In each figure, the red module moves. When the blue module is in cell $(x,y,0)$ at the start, after this move sequence, the green model reaches
     $(x+1,y,0)$.}
     \label{4MoveAlongTheDownWall}
    \end{figure}

    \begin{figure}[htbp]
     \centering
     \includegraphics[keepaspectratio, height=2.5cm]
          {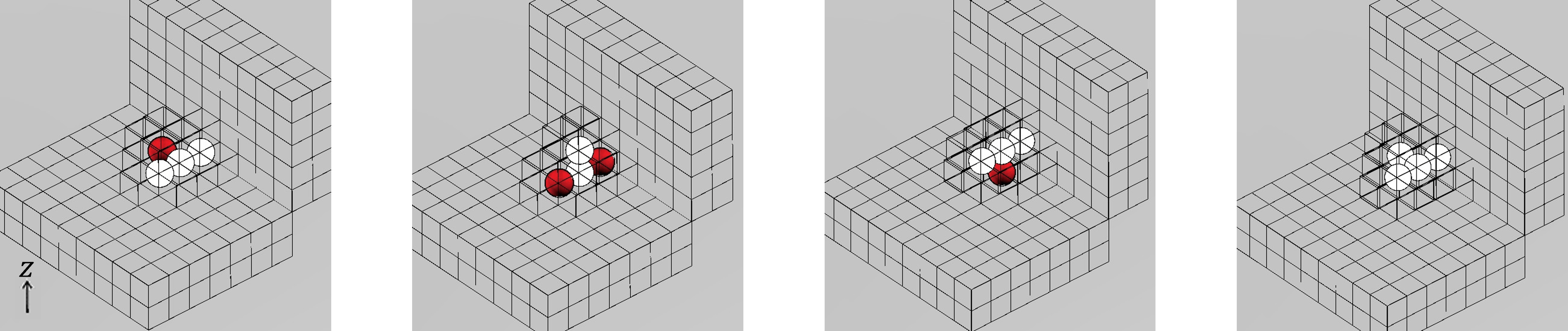}
     \caption{Move on the bottom of the wall to rise. In each figure, the red modules move.}
     \label{4MoveOnTheBottomOfTheWallToRise}
    \end{figure}

    \begin{figure}[htbp]
     \centering
     \includegraphics[keepaspectratio, height=2.5cm]
          {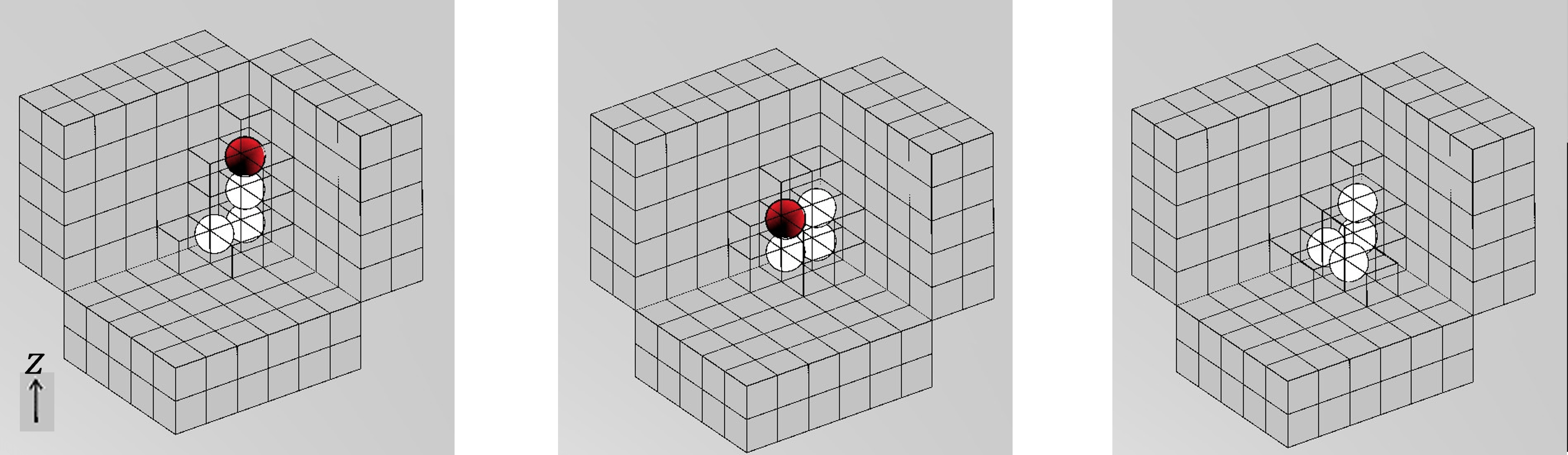}
     \caption{Move on the bottom corner. In each figure, the red module moves.}
     \label{4MoveOnTheBottomOfTheCorner}
    \end{figure}
    
  \begin{figure}[htbp]
     \centering
     \includegraphics[keepaspectratio, scale=0.5]
          {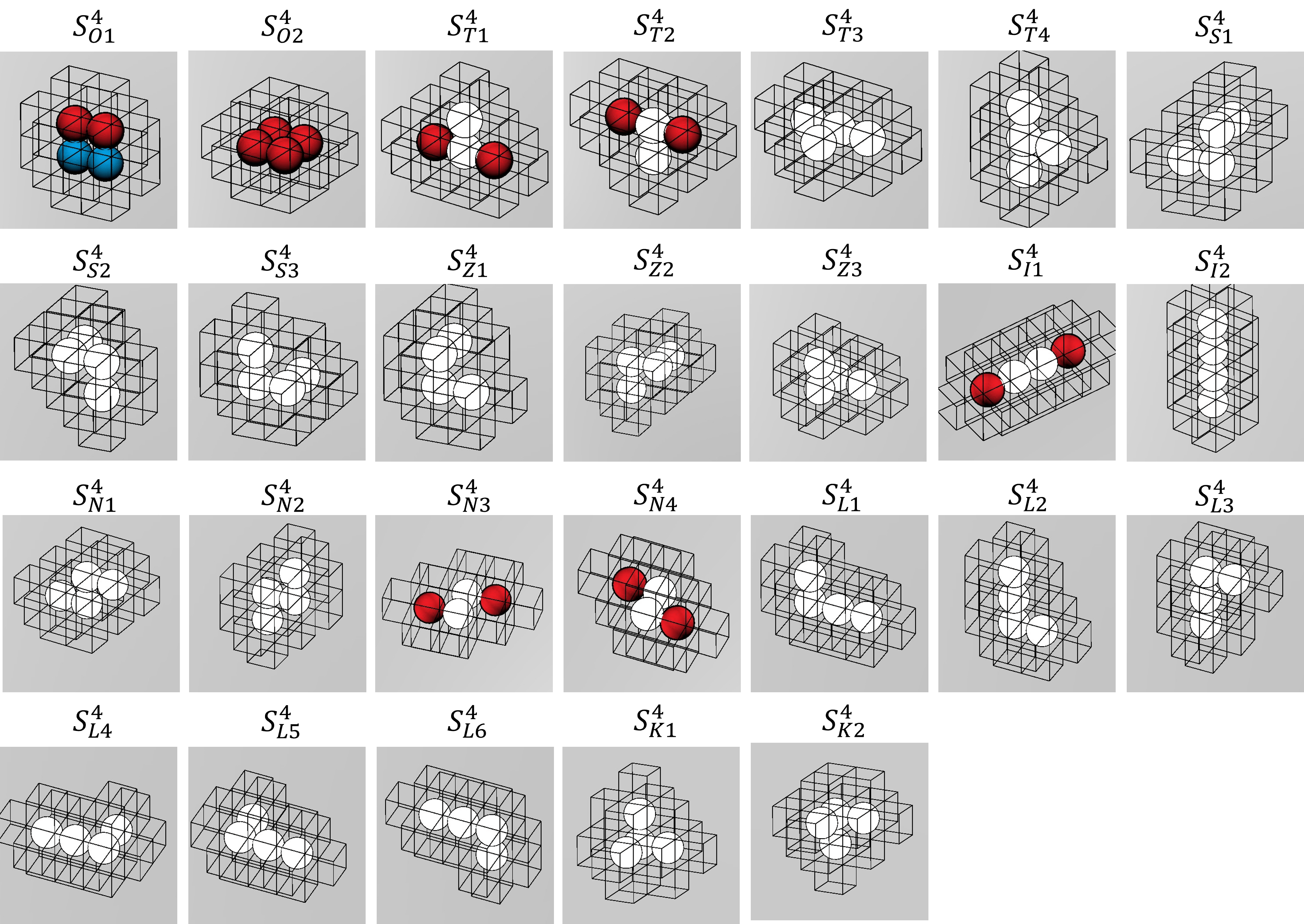}
     \caption{Configurations of the MRS of four modules equipped with a common vertical axis}
     \label{4ModuleShapes}
    \end{figure}

    \begin{figure}[htbp]
     \centering
     \includegraphics[keepaspectratio, scale=0.5]
          {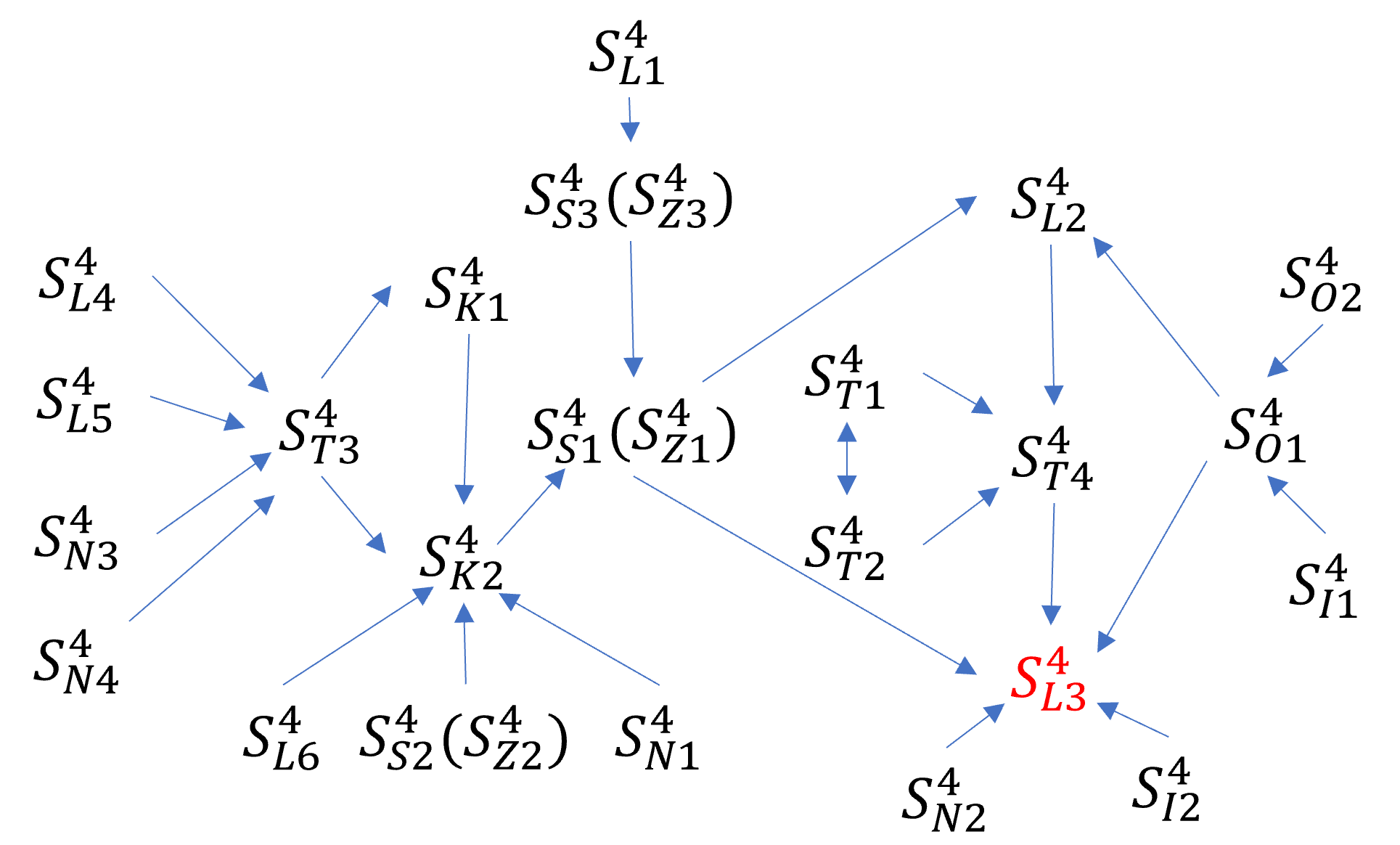}
     \caption{Transition graph of the MRS of four modules equipped with a common vertical axis}
     \label{4ModuleTransitionGraph}
    \end{figure}
    \clearpage
    \begin{figure}[htbp]
       \centering
       \includegraphics[keepaspectratio, height=2cm]
            {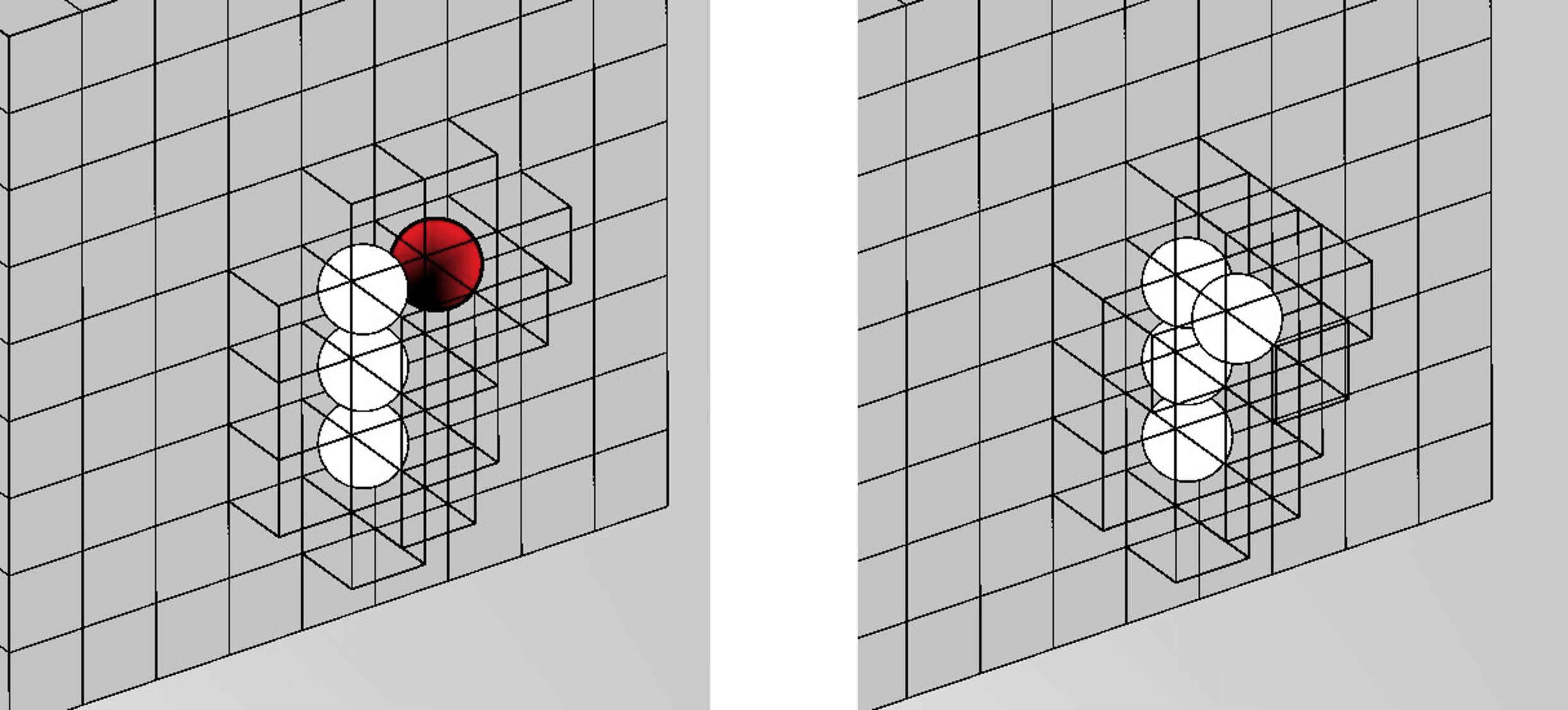}
       \caption{Transformation of $S^4_{L3}$}
       \label{Ltransform}
      \end{figure}

      Depending on the initial configuration, the search may not be possible, 
and even if it is possible, the MRS cannot start the search directly because some initial configurations do not satisfy the 
conditions in Table~\ref{tab_4module}. 

Figure~\ref{4ModuleShapes} shows all possible configurations for four modules not equipped with a common vertical axis.
The configurations containing the coloured modules are symmetric, i.e., 
In each figure the modules painted by the same color may 
have an identical observation and move symmetrically. 
The MRS cannot start the search because it cannot move from their positions.
Therefore, if there are pairs of modules with the same observation in an initial configuration, 
the search is not possible.

The proposed algorithm guarantees that the MRS finds the target 
when all modules have different observations in an initial configuration. 
Figure~\ref{4ModuleTransitionGraph} shows these initial configurations can be transformed to $S^4_{L3}$ by a sequence of transformations. 
Even if some walls prevents these transformations, 
the MRS can leave the wall because its state is asymmetric. 
If the MRS satisfies the condition of the fourth movement of Table~\ref{tab_4module}, 
the MRS can start the search directly.
Otherwise, the MRS cannot move because it touches a wall. 
In this case, as shown in Figure~\ref{Ltransform}, the MRS in $S^4_{L3}$ on a wall 
can change its direction by a single rotation
and in this new configuration it can perform one of these movements. 
By adding the above rules to the algorithm, the MRS can start the search.

 \begin{longtable}{|l|c|c|c|c|}
     \caption{Search algorithm for the MRS of four modules equipped with a common vertical axis}
     \endhead
     \endfoot
   \cline{1-5}          & $C_m$& $C_w$ & $C_e$ & Output \\
   \cline{1-5}    $M_{Down}$ &  $(0,1,0)$,$(0,1,1)$,   &        &  $(0,0,-2)$ &          $(0,0,-1)$ \\
                               &  $(0,1,-1)$   &        &  &         \\
   \cline{2-5}          &  $(0,1,0)$,$(0,1,1)$,   &         &  $(0,0,-1)$ &         $(0,1,-1)$ \\
                       &  $(0,1,2)$   &         &  &         \\
   \cline{2-5}            &  $(0,0,-1)$,$(0,0,-2)$,    &     &  $(0,0,-3)$ &         $(0,-1,-1)$ \\
          &  $(0,-1,-2)$    &     &  &        \\
   \cline{2-5}          &  $(0,1,0)$,$(0,1,-1)$,   &         &   $(1,0,0)$ &          $(0,0,-1)$ \\
         &  $(0,1,-2)$   &         &   &           \\
   \cline{1-5}    $M_{TurnD}$ &  $(0,0,-1)$,$(-1,0,-1)$, &  $(0,0,-3)$           &   &       $(0,-1,-1)$ \\
   &  $(0,0,-2)$ &  $(0,0,-3)$           &   &      \\
   \cline{1-5}     $M_{Up}$ &  $(-1,0,1)$,$(0,-1,1)$,    &   &   &          $(1,0,1)$ \\
    &  $(0,0,1)$    &   &   &        \\
    \cline{2-5}         &  $(-1,0,0)$,$(-2,0,0)$,   &     &  $(0,0,2)$,&        $(-1,0,1)$ \\
                            &             $(-1,-1,0)$                      &            & $(0,0,-1)$ &      \\
    \cline{2-5}         &  $(1,0,0)$,$(1,-1,0)$,    &         &  $(0,0,-1)$ &          $(0,0,1)$ \\
           &  $(1,0,1)$    &         &   &        \\
    \cline{2-5}         &  $(0,1,0)$,$(0,1,1)$,    &        &  $(0,0,-1)$ &          $(0,0,1)$ \\
        &  $(-1,1,1)$    &        &  &          \\
    \cline{1-5} $M_{TurnU}$  &  $(0,1,0)$,$(1,1,0)$, &  $(0,0,2)$   &       &          $(1,0,0)$ \\
                               &  $(-1,1,0)$ &   &       &           \\
    \cline{2-5}        &  $(1,0,0)$,$(2,0,0)$, &  $(0,0,2)$    &         &         $(1,0,1)$ \\
            &  $(1,-1,0)$ &  $(0,0,2)$    &         &        \\
    \cline{2-5}       &  $(0,0,-1)$,$(1,0,-1)$, &  $(0,0,1)$  &         &          $(1,0,0)$ \\
        &  $(1,-1,-1)$ &   &         &        \\
    \cline{2-5}        &  $(0,1,0)$,$(-1,1,0)$, &  $(0,0,2)$  &       &            $(0,1,-1)$ \\
            &  $(-1,1,1)$ &    &       &            \\
    \cline{1-5}     $M_{B1}$  &  $(0,1,0)$,$(0,1,1)$, &  $(0,2,0)$,$(0,0,-2)$        &   &          $(0,0,-1)$ \\
     &  $(0,1,-1)$ &         &   &           \\
    \cline{2-5}   &  $(0,0,-1)$,$(0,0,-2)$, &  $(0,1,0)$,$(0,0,-3)$        &  $(-1,0,0)$ &          $(-1,0,-1)$ \\
     &  $(0,-1,-2)$ &        &   &          \\
    \cline{2-5}   &  $(1,0,0)$,$(1,0,-1)$, &  $(0,1,0)$,$(0,0,-2)$        &   &        $(0,0,-1)$ \\
       & $(1,-1,-1)$ &         &   &         \\
    \cline{2-5}       &  $(0,0,-1)$,$(-1,0,-1)$, &  $(0,1,0)$,$(0,0,-2)$       &  $(1,0,0)$ &          $(-1,0,0)$ \\
      & $(0,-1,-1)$ &        &   &          \\
    \cline{2-5}        &  $(0,1,0)$,$(-1,1,0)$, &  $(0,2,0)$,$(0,0,-1)$       &  $(1,0,0)$ &          $(-1,0,0)$ \\
       & $(0,1,1)$ &      &   &           \\
    \cline{2-5}        &  $(0,0,-1)$,$(1,0,-1)$, &  $(0,1,0)$   &        &        $(0,-1,0)$ \\
           &  $(0,-1,-1)$ &   &        &      \\
    \cline{2-5}        &  $(0,0,-1)$,$(0,1,-1)$, &  $(0,1,0)$,$(0,0,-2)$      &   &       $(0,-1,-1)$ \\
         &  $(1,1,-1)$ &     &   &       \\
    \cline{1-5}    $M_{B2}$ &  $(0,1,0)$,$(-1,1,0)$, &  $(0,0,-1)$   &          &         $(-1,0,0)$ \\
     &  $(-2,1,0)$ &    &          &        \\
    \cline{2-5}      &  $(0,1,0)$,$(1,1,0)$, &  $(0,0,-1)$   &       $(-2,0,0)$ &       $(-1,0,0)$ \\
         &  $(-1,1,0)$ &   &      &        \\
    \cline{2-5}         &  $(0,1,0)$,$(1,1,0)$, &  $(0,0,-1)$   &          &        $(-1,1,0)$ \\
            &  $(2,1,0)$ &     &          &       \\
    \cline{2-5}         &  $(-1,0,0)$,$(-2,0,0)$, &  $(0,0,-1)$   &          &         $(-1,-1,0)$ \\
           &  $(-2,-1,0)$ &   &          &      \\
   \cline{1-5}    $M_{B3}$  &  $(0,1,0)$,$(1,1,0)$,&  $(0,0,-1)$,$(-2,0,0)$         &    &      $(0,1,1)$ \\
    &  $(-1,1,0)$ &          &    &      \\
   \cline{2-5}  &  $(1,0,0)$,$(2,0,0)$, &  $(0,0,-1)$,$(-1,0,0)$       &   &          $(0,0,1)$ \\
     &  $(1,0,1)$ &         &   &         \\
   \cline{2-5}        &  $(-1,0,0)$,$(-2,0,0)$, &  $(0,0,-1)$,$(-3,0,0)$      &   &         $(0,0,1)$ \\
        &  $(-1,0,1)$ &     &   &       \\
   \cline{2-5}        &  $(0,0,1)$,$(1,0,1)$, &  $(0,0,-1)$,$(-2,0,0)$       &   &         $(0,-1,1)$ \\
         &  $(-1,0,1)$ &      &   &       \\
   \cline{1-5}    $M_{Corner}$ &  $(0,0,-1)$,$(0,0,-2)$, &  $(0,1,0)$,$(1,0,0)$,&   &       $(-1,0,-1)$ \\
                                            & $(-1,0,-2)$    &$(0,0,-3)$      &      &        \\
     \cline{2-5}       &  $(1,0,0)$,$(0,0,-1)$, &  $(0,1,0)$,$(2,0,0)$,&   &        $(0,-1,-1)$ \\
                          &     $(1,0,-1)$                              &  $(0,0,-2)$ &     &        \\
     \cline{2-5}        &  $(0,0,-1)$,$(-1,0,-1)$,&  $(1,0,0)$,$(0,1,0)$,& $(0,0,-2)$    &         $(-1,0,0)$ \\
                         &                         $(-1,-1,-1)$                   &  $(0,0,-2)$        &     &         \\
   \cline{1-5}          & \multicolumn{1}{c|}{Otherwise} & \multicolumn{1}{c|}{Otherwise} & \multicolumn{1}{c|}{Otherwise} & $(0,0,0)$ \\
   \cline{1-5}   
   \end{longtable}
     \label{tab_4module}%

\subsection{Search without a common compass} 

We show the following theorem by a search algorithm for the MRS of 
five modules not equipped with a common compass. 

\begin{theorem}
\label{theorem: 5 modules can Explore 3D space without compass}
The MRS of five modules not equipped with a common compass can solve a search problem 
in a finite 3D grid 
if no pair of modules have an identical observation in an initial configuration. 
\end{theorem}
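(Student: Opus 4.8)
The plan is to prove the theorem constructively, by exhibiting a search algorithm in the same style as those used for Theorems~\ref{theorem: 3 modules can Exprole 3D space with compass} and~\ref{theorem: 4 modules can Exprole 3D space without horizontal compass}: slice the field into parallel planes, sweep each plane line by line, and handle walls, corners, and the final return to the start by dedicated move sequences. The essential new difficulty is the complete absence of directional agreement: each module matches its observation against all $24$ orientations of the cube and may act on any matching rule. To recover a consistent sense of direction I would use the shape of the MRS itself as a compass. Concretely, I would fix a canonical \emph{search shape} $S$ of five modules that is \emph{chiral}, i.e., whose stabilizer in the $24$-element rotation group of the cube is trivial, so that exactly one of the $24$ rotations of any observation of $S$ matches a given rule. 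The asymmetry of $S$ then plays the role that the common compass played for three modules and the common vertical axis played for four.

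First I would specify $S$ and the move sequences that drive the sweep: a forward step along the current line, turns at the bounding walls, descent and ascent between adjacent lines, transitions between planes, and corner moves, together with a return sequence that brings the MRS from the last corner back to the starting corner so that the cells missed during the sweep are eventually visited. Each move sequence is a short fixed chain of slidings and rotations that preserves connectivity and that carries $S$ to a translate or a reorientation of $S$, so the invariant that the current state is \emph{some} orientation of $S$ is maintained throughout. I would then record all rules in a transition table in canonical coordinates, exactly as in Tables~\ref{tab_3module} and~\ref{tab_4module}; because every intermediate configuration of every move sequence is chiral, the $24$-fold matching always selects a unique rule and the execution is deterministic. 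A coverage argument, identical in structure to the earlier two algorithms, then shows that the sweep visits every cell of the field, so one module eventually enters the target cell.

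Finally I would address the initial configuration. The hypothesis that no two modules share an observation rules out the symmetric configurations that would let several modules move identically and stall the MRS, which is exactly the obstruction that forces the pairwise-distinct condition in Theorem~\ref{theorem: 4 modules can Exprole 3D space without horizontal compass}. Under this hypothesis I would show, via a transition graph over the five-module states, that every admissible initial configuration can be transformed into $S$ by a bounded sequence of moves, adding a single rotation rule to escape the case where a wall blocks the canonical transformation, as was done for $S^4_{L3}$. The step I expect to be the main obstacle is verifying chirality of \emph{every} configuration arising inside the move sequences: a single intermediate shape with a nontrivial rotational symmetry would create two matching rules and hence a nondeterministic, possibly connectivity-breaking, move. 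Checking this for all five-module shapes that occur, and confirming that five modules (unlike four) always admit a chiral search shape whose reorientations remain chiral, is where the real work lies and is precisely what distinguishes this case from the lower-module settings.
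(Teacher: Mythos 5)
Your proposal follows essentially the same route as the paper: a plane-by-plane sweep driven by a fixed repertoire of move sequences (forward/back along a line, turns at walls, edge and corner transitions), rules recorded in a canonical-coordinate table matched against the $24$ rotations, and a transition-graph argument showing that every initial configuration with pairwise-distinct observations can be brought to a canonical asymmetric shape (the paper's $S^5_Q$), with an extra sliding rule to escape wall-blocked cases. Your explicit insistence on chirality of every intermediate shape is the same asymmetry requirement the paper relies on implicitly, so this is the paper's proof in outline rather than a genuinely different argument.
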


We prove Theorem~\ref{theorem: 5 modules can Explore 3D space without compass} 
by a search algorithm for the MRS of three modules not equipped with a common compass.  

The proposed algorithm considers each plane 
perpendicular to one of the $x$, $y$, and $z$ axis. 
The choice of the axis depends on the initial configuration of the MRS, and 
the modules do not need to know the global coordinate system. 
In the following, without loss of generality, we assume that the MRS considers 
planes perpendicular to the $x$ axis, i.e., 
$x=s$($y=s$,$z=s$, respectively) for $s = 0, 1, 2, \ldots$. 
It moves along each vertical line $\{(x,y,z) | x=s, y=t, z=u\}$ or 
horizontal line $\{(x,y,z) | x=s, y=u,z=t\}$ for $u= 0, 1, 2, \ldots$ 
on the plane.
Figure~\ref{Explode-Move5} shows an execution of the algorithm. 

The MRS continues to search each plane perpendicular to the $x$-axis 
until it reaches the east wall. 
Then, it changes the search direction from $x^+$ direction to 
$x^-$ direction and 
it starts to search each plane perpendicular to the $x$-axis 
until it reaches the west wall. 

The algorithm description is given in Table~\ref{tab_5modules}.

    \begin{figure}[htbp]
      \centering
      \includegraphics[keepaspectratio,width=\linewidth]
           {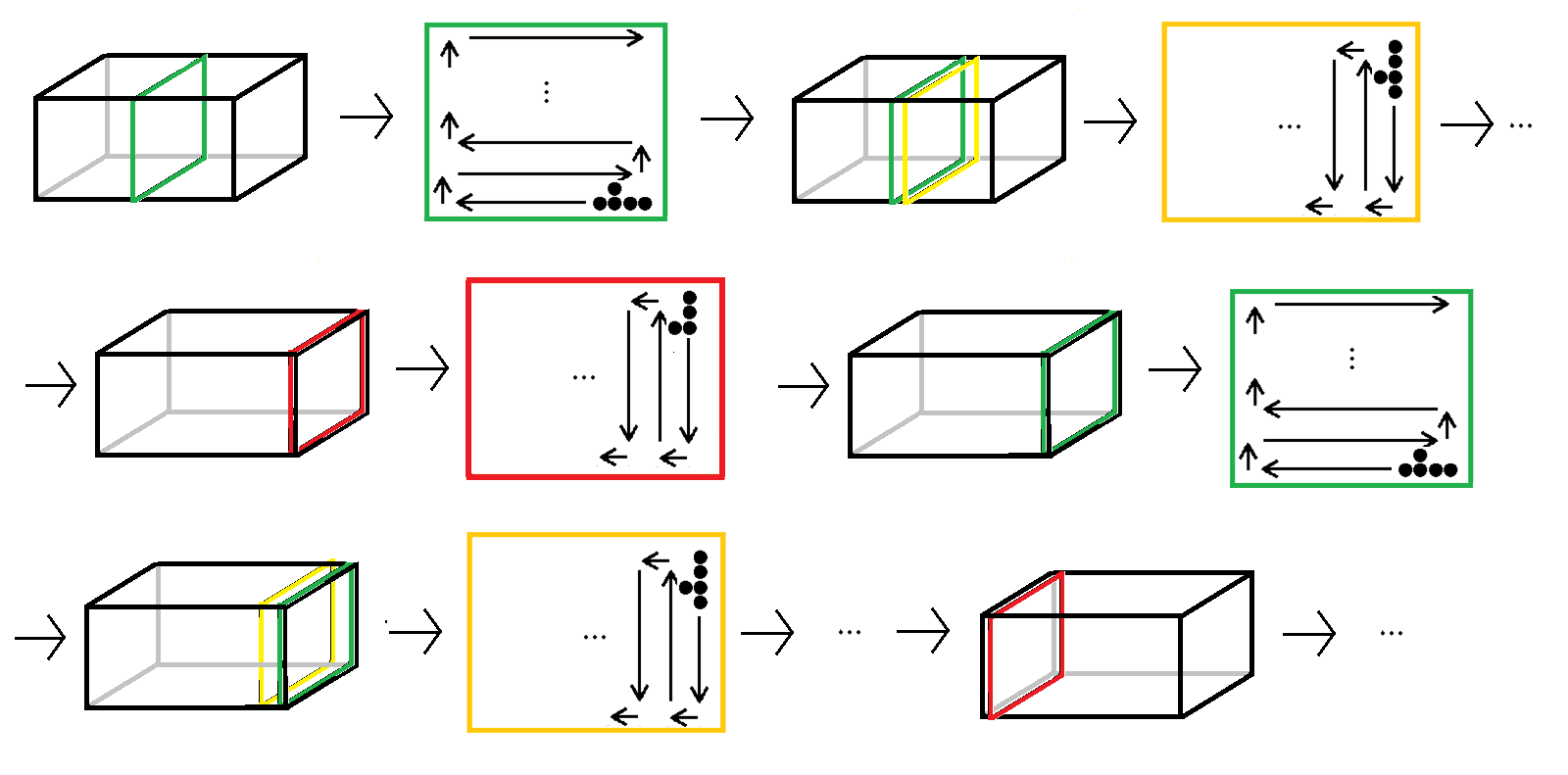}
      \caption{Example of a search with five modular robots}
      \label{Explode-Move5}
     \end{figure}

The proposed algorithm consists of the following move sequences.
\begin{itemize}
     \item Move sequence $M_{Forward}$(Figure~\ref{5MoveToForward}).
     The blue module is in cell $(x,y,z)$ at the start.
     By this move sequence, 
     the green module reaches cell $(x,y,z-1)$.
     By repeating $M_{Forward}$ $n$ times, any of the modules visit the cells $(x,y,z-k)(0 \leq k \leq n)$. 
     That is, it visits all the cells of the horizontal line  $\{(x,y,z)|x=s,y=t\}$.
     
     \item Move sequence $M_{Back}$(Figure~\ref{5MoveToBack}).
     The blue module is in cell $(x,y,z)$ at the start.
     By this move sequence, 
     the green module reaches cell $(x,y,z+1)$.
     By repeating $M_{Back}$ $n$ times, any of the modules visit the cells $(x,y,z+k)(0 \leq k \leq n)$. 
     That is, it visits all the cells of the horizontal line  $\{(x,y,z)|x=s,y=t\}$.
     
     \item Move sequence $M_{TurnB}$(Figure~\ref{5TurnToBack}).
     By this move sequence, 
     the MRS changes its move sequence from $M_{Forward}$ to $M_{Back}$.
     
     \item Move sequence $M_{TurnF}$(Figure~\ref{5TurnToForward}).
     By this move sequence, 
     the MRS changes its move sequence from $M_{Back}$ to $M_{Forward}$.
     
     \item Move sequence $M_{Edge}$(Figure~\ref{5MoveOnTheEdge}).     
     By this move sequence, 
     the MRS changes its move sequence from $M_{Forward}$ to $M_{TurnB}$.
     
     \item Move sequence $M_{Corner}$(Figure~\ref{5MoveOnTheCorner}).
     By this move sequence, 
     the MRS changes its move sequence from $M_{Forward}$ to $M_{TurnB}$.
     \end{itemize}
   
The proposed algorithm consists of the following six steps. 
We use down direction for explanation, but each module does not need to know the down direction.

\begin{description}
     \item [Step 1]  The MRS repeats $M_{Forward}$,
     that makes it move to the down direction 
     along a vertical line on a plane $\{(x,y,z) | x = s\}$ for some $s$.
     \item [Step 2] When the MRS reaches the bottom wall, 
     it changes the direction to up by $M_{TurnB}$.
     \item[Step 3] The MRS repeats $M_{Back}$,
     that makes it move in the up direction along a vertical line followed in Step $1$. 
     \item[Step 4] If the MRS is adjacent to the bottom of the west wall, 
     it moves to plane $\{(x,y,z) | x = s+1\}$
     by $M_{TurnF}$,
     and starts searching the new plane by repeating Steps $1$, $2$, and $3$. 
     \item[step 5] When the MRS reaches the top wall or the bottom wall, 
     it moves to the southern row
     by $M_{Edge}$.
     Then it repeats Steps $1$, $2$, and $3$ again.
     \item[Step 6]  When the MRS reaches a corner of the field, 
      perform $M_{Corner}$
     that makes it changes the search direction from the positive $x$ direction to 
     the negative $x$ direction.  
     \end{description}
     When the MRS is on a plane $\{(x, y, z) | x=s\}$ for some $s$, 
     it visits all the cells on line $\{(x,y,z)|x=s,y=t$\} by Steps $1$, $2$, and $3$.
     Then, it proceeds to a new line \{$\{(x,y,z)|x=s,y=t-1,z=u\}$\} by Step $5$.
     By repeating Steps $1$, $2$, $3$, and $5$, it visits all cells on plane $x=s$.
     By Step $4$, it starts searching a new plane $x=s+1$.
     By repeating Step $1$ to $5$, it eventually reaches the corner adjacent to the south wall 
     and the east wall. 
     By Step $6$, it starts searching west wall. 
     By repeating Step $1$ to $6$ twice, the MRS visits all cells of the field. 
     
         \begin{figure}[htbp]
           \centering
           \includegraphics[keepaspectratio,width=\linewidth]
                {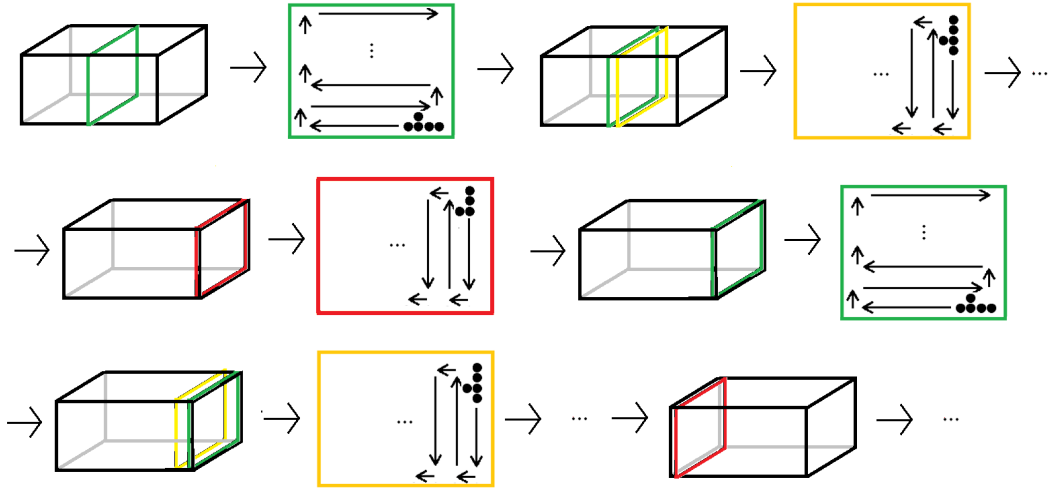}
           \caption{Example of a search with five modular robots}
           \label{Explode-Move5ap}
          \end{figure}
          
           \begin{figure}[htbp]
             \centering
             \includegraphics[keepaspectratio, height=2cm]
                  {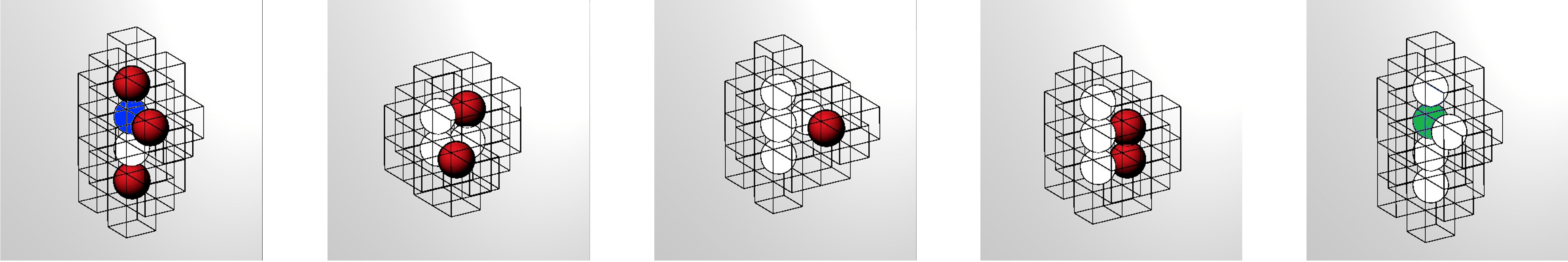}
             \caption{Move to forward.
             In each figure, the red modules move. When the blue module is in cell $(x,y,z)$ at the start, after this move sequence, the green model reaches
           $(x,y,z-1)$.}
             \label{5MoveToForward}
            \end{figure}
       
            \begin{figure}[htbp]
             \centering
             \includegraphics[keepaspectratio, height=2.5cm]
                  {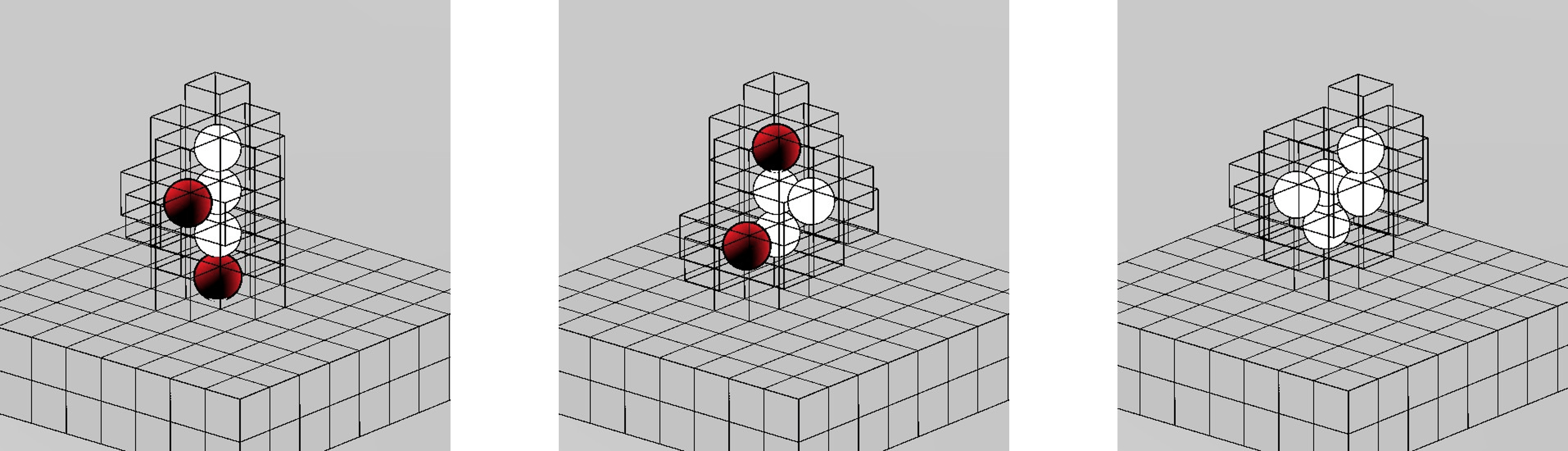}
             \caption{Turn to back. In each figure, the red modules move.}
             \label{5TurnToBack}
            \end{figure}
       
            \begin{figure}[htbp]
             \centering
             \includegraphics[keepaspectratio, height=2cm]
                  {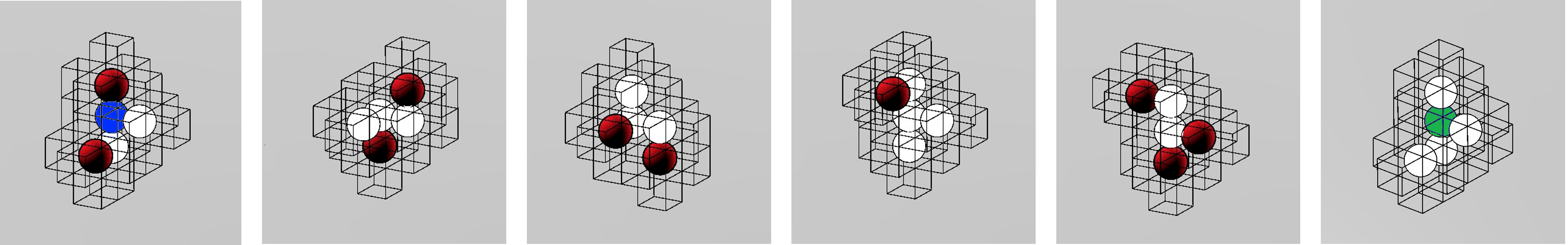}
             \caption{Move to back. In each figure, the red modules move. When the blue module is in cell $(x,y,z)$ at the start, after this move sequence, the green model reaches
              $(x,y,z+1)$.}
             \label{5MoveToBack}
            \end{figure}
       
            \begin{figure}[htbp]
             \centering
             \includegraphics[keepaspectratio, height=5cm]
                  {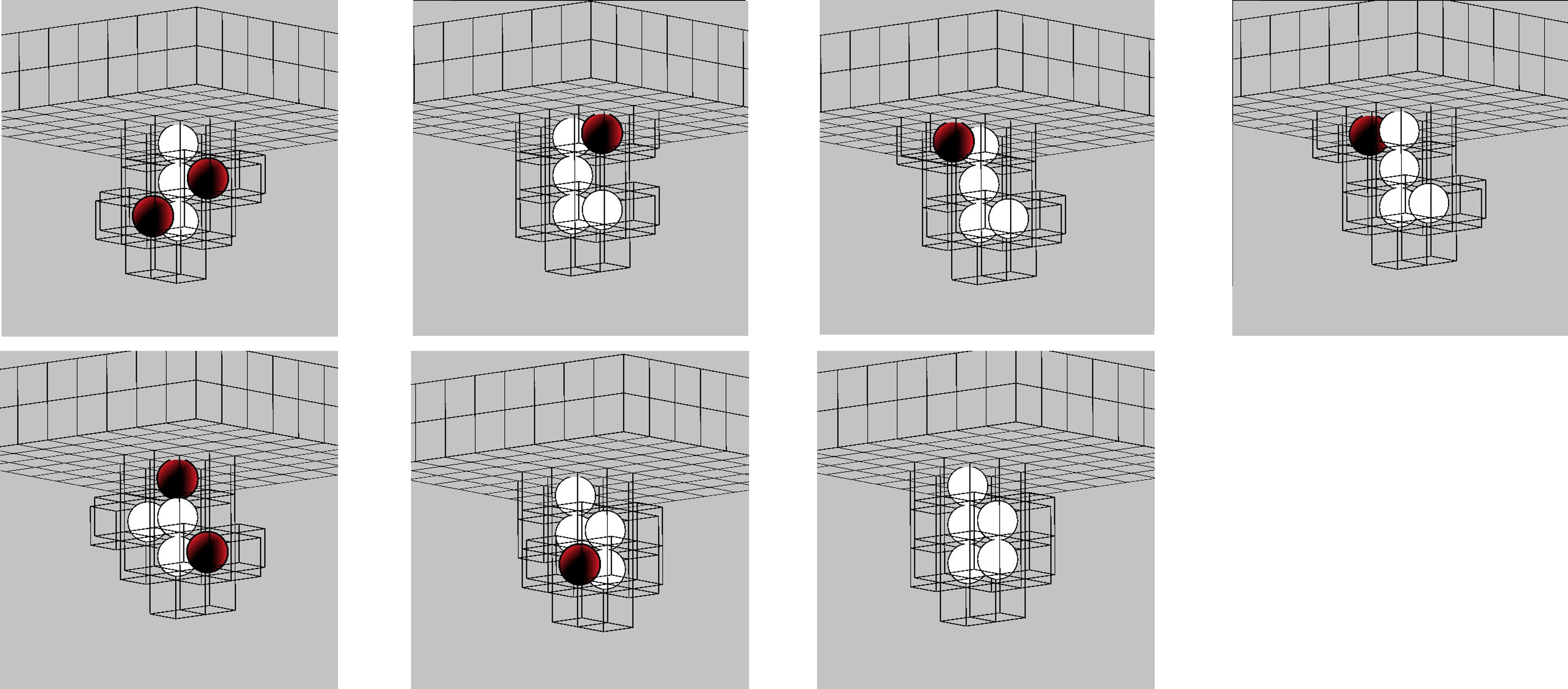}
             \caption{Turn to forward. In each figure, the red modules move.}
             \label{5TurnToForward}
            \end{figure}
       
            \begin{figure}[htbp]
             \centering
             \includegraphics[keepaspectratio, height=5cm]
                  {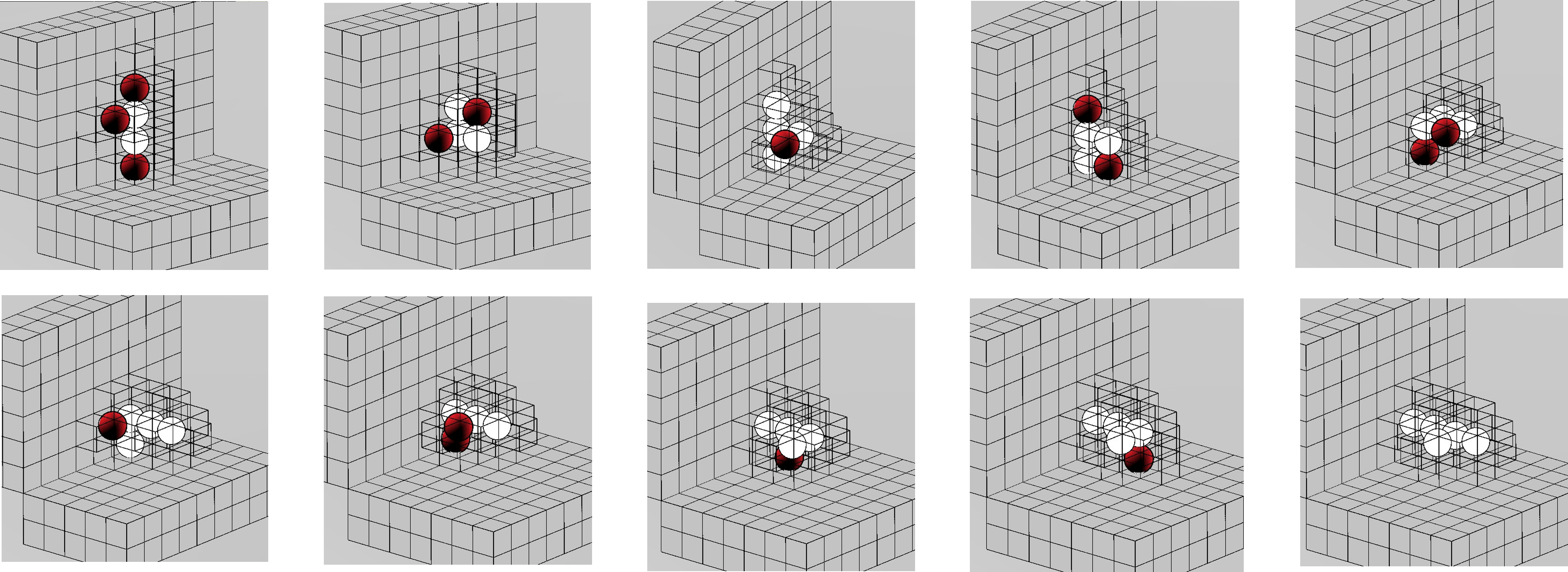}
             \caption{Move on the edge. In each figure, the red modules move.}
              \label{5MoveOnTheEdge}
            \end{figure}
       
            \begin{figure}[htbp]
             \centering
             \includegraphics[keepaspectratio, height=5cm]
                  {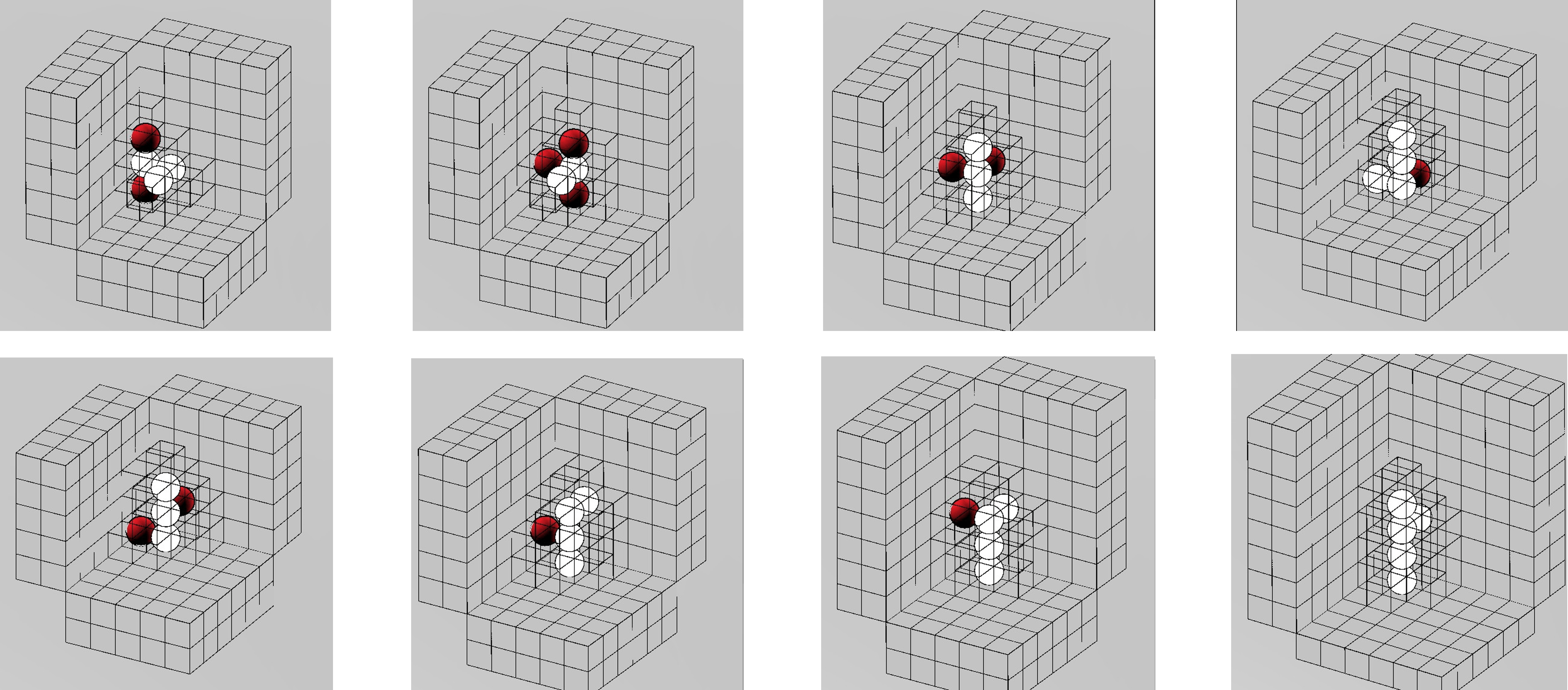}
             \caption{Move on the corner. In each figure, the red modules move.}
             \label{5MoveOnTheCorner}
            \end{figure}
     
        \begin{figure}[htbp]
             \centering
             \includegraphics[keepaspectratio, width=\linewidth]
                  {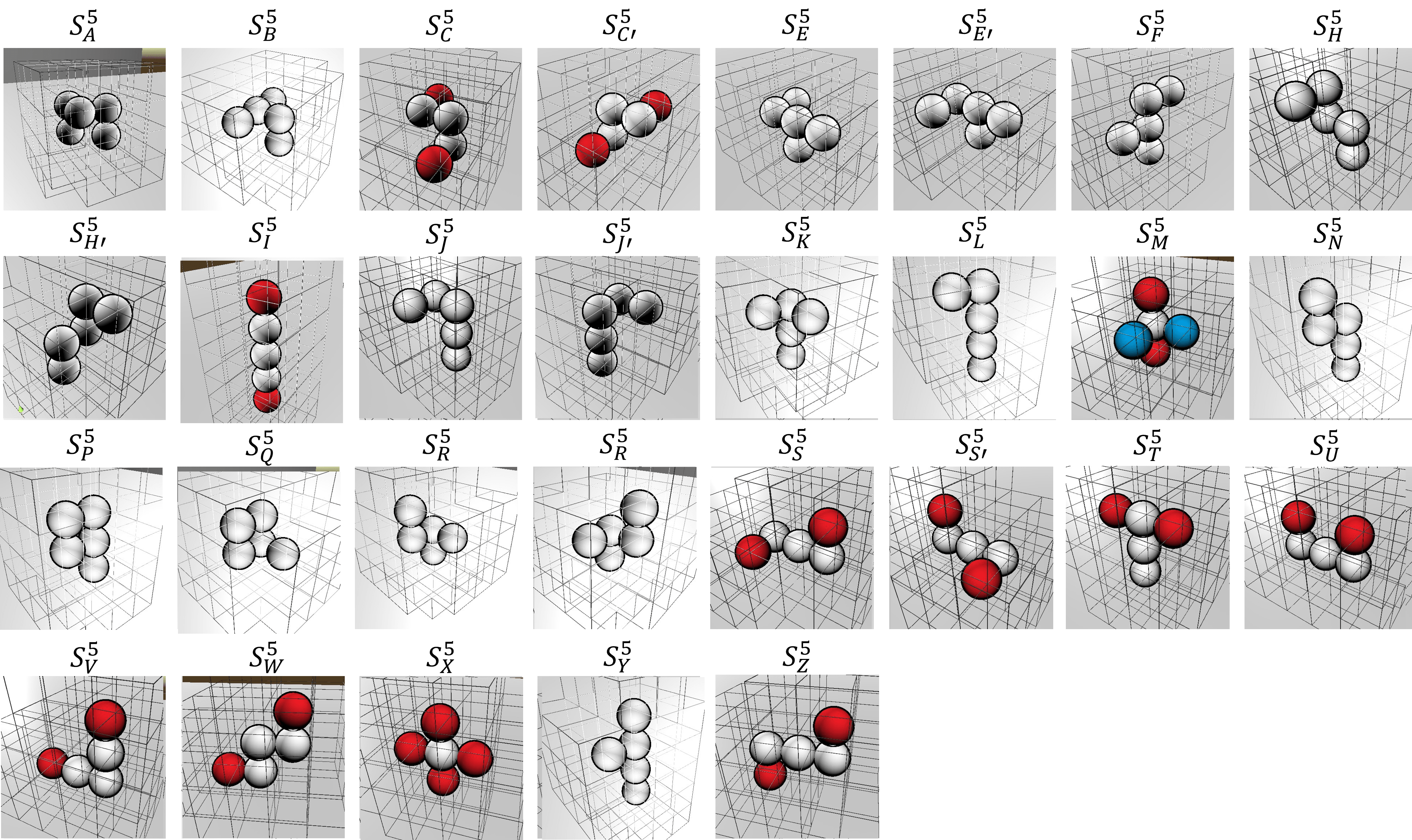}
             \caption{Configurations of the MRS of five modules not equipped with a common compass}
             \label{5ModuleShapes}
            \end{figure}
            
         \begin{figure}[htbp]
           \centering
           \includegraphics[keepaspectratio, scale=0.5]
                {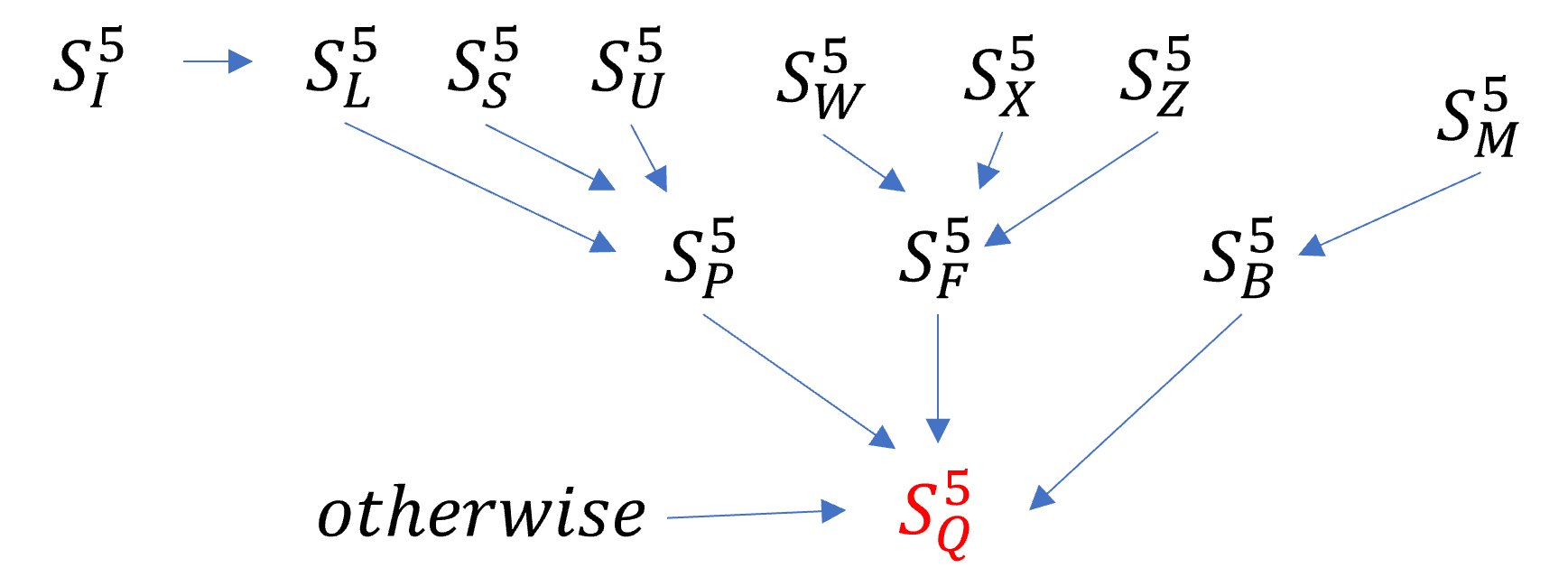}
           \caption{Transition graph of the MRS of five modules not equipped with a common compass}
           \label{5ModuleTransitionGraph}
          \end{figure}
            
           \begin{figure}[htbp]
             \centering
             \includegraphics[keepaspectratio, height=2cm]
                  {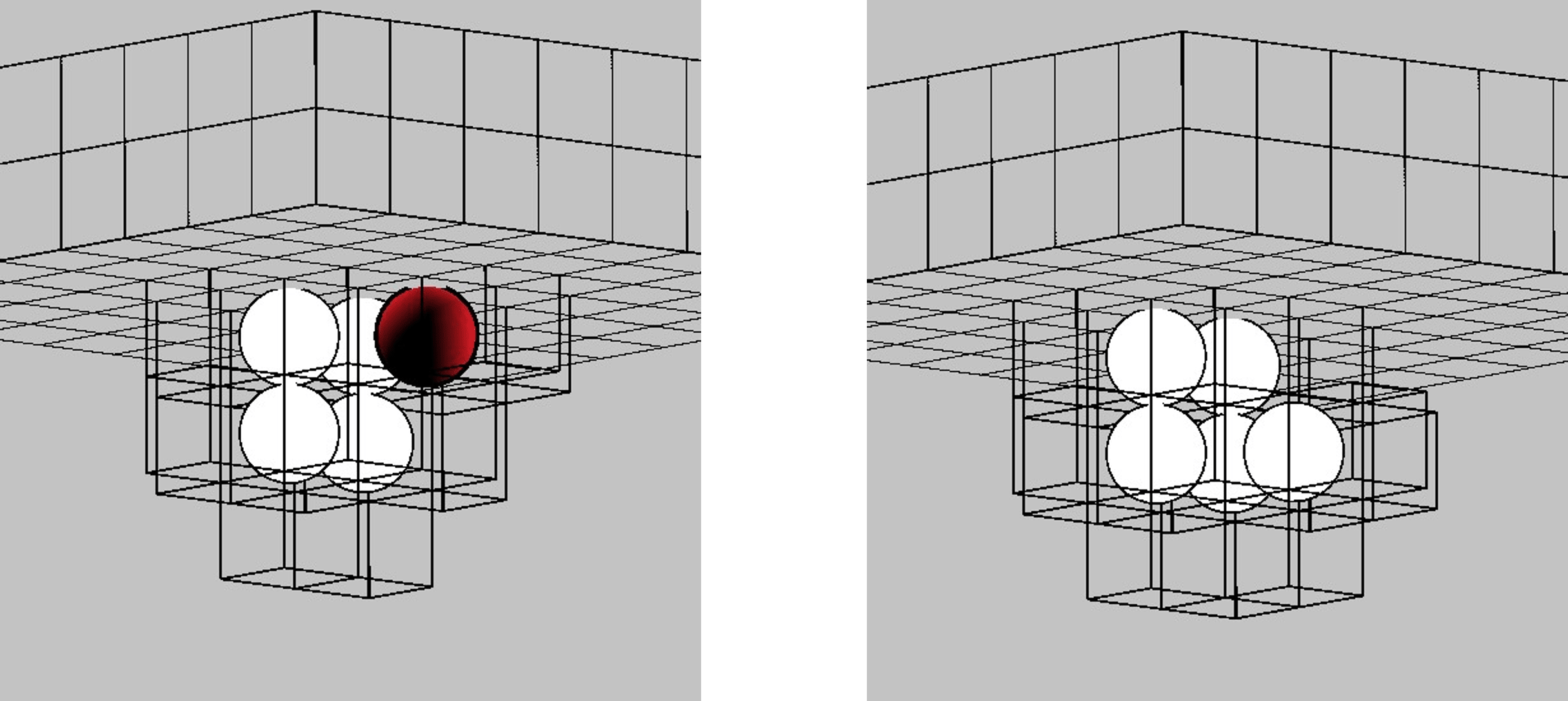}
             \caption{Transformation of $S^5_Q$}
             \label{Qtransform}
            \end{figure}
            
     Depending on the initial configuration, the search may not be possible, 
     and even if it is possible, the MRS cannot start the search directly because some initial configurations does not satisfy the 
     conditions in Table~\ref{tab_5modules}. 
     Figure~\ref{5ModuleShapes} shows all possible configurations for five modules not equipped with a common compass.
     The configurations containing the coloured modules are symmetric, i.e., 
     in each figure the modules painted by the same color may 
     have an identical observation and move symmetrically. 
     The MRS cannot start the search because it cannot move from their positions.
     Therefore, if there are pairs of modules with the same observation in an initial configuration, 
     the search is not possible.
     
     We show that the search is possible when all modules have different observations in the initial configuration. 
     Figure~\ref{5ModuleTransitionGraph} shows that the configurations $S^5_L, S^5_S, S^5_U$ can transform into $S^5_P$, $S^5_W, S^5_X, S^5_Z$ can transform into $S^5_F$,
     $S^5_M$ can transform into $S^5_B$, and the other configurations can transform into $S^5_Q$. 
     Even if some walls prevents these transformations, 
     the MRS can leave the wall because its state is asymmetric. 
     Then, any configuration can be transformed to $S^5_Q$.
     If the MRS satisfies the condition of the fourth or fifth movement of Table~\ref{tab_5modules}, 
     the MRS can start the search directly.
     Otherwise, the MRS cannot move because it touches a wall. 
     In this case, as shown in Figure~\ref{Qtransform}, the MRS in $S^5_Q$ on a wall 
     can change its direction by a single sliding 
     and in this new configuration it can perform one of these movements. 
     By adding the above rules to the algorithm, the MRS can start the search.

\begin{center}
     \begin{longtable}{|l|c|c|c|c|}
     
       \caption{Search algorithm for the MRS of five modules not equipped with a common compass}
     
     \endhead
     \endfoot
     \cline{1-5}          & \multicolumn{1}{c|}{$C_{m}$}        & \multicolumn{1}{c|}{$C_{w}$} & \multicolumn{1}{c|}{$C_e$} & \multicolumn{1}{c|}{Output} \\
     \cline{1-5}      $M_{Forward}$ & $(0,0,1)$,$(0,-1,1)$,&                 & $(0,0,4)$         & $(-1,0,1)$ \\
                                   & $(0,0,2)$,$(0,0,3)$&                       &          &               \\ 
     \cline{2-5}         & $(0,1,0)$,$(0,1,-1)$,&                 & $(0,0,3)$ &           $(0,0,1)$ \\
                          & $(0,1,1)$,$(0,1,2)$ &                    &       &  \\
     \cline{2-5}          & $(0,0,-1)$,$(0,0,-2)$,&              & $(0,0,1)$    & $(-1,0,-1)$ \\
                         & $(0,-1,-2)$,$(0,0,-3)$ &                  &       &  \\
     \cline{2-5}        & $(0,1,0)$,$(-1,1,0)$,&                   & $(0,0,1)$  & $(0,1,1)$ \\
                         & $(0,1,-1)$,$(-1,1,-1)$ &               &       &  \\
     \cline{2-5}        & $(1,0,0)$,$(1,-1,0)$,&                  & $(0,0,2)$       & $(0,-1,1)$ \\
                         & $(0,0,-1)$,$(1,0,-1)$ &                    &              &       \\
     \cline{2-5}        & $(0,1,0)$,$(1,1,0)$,&                     & $(0,0,2)$       & $(0,1,1)$ \\
                         & $(1,1,1)$,$(1,1,-1)$ &                    &          &    \\
     \cline{2-5}       & $(1,0,0)$,$(0,0,-1)$,&                      & $(0,0,1)$          & $(1,0,1)$ \\
                          & $(1,0,-1)$,$(1,0,-2)$ &                    &                 &       \\
     \cline{2-5}        & $(1,0,0)$,$(0,0,1)$ &                   & $(0,0,2)$ &          $(1,-1,0)$ \\
                          & $(1,0,1)$,$(1,0,-1)$,&                   &       &      \\
     \cline{1-5}     $M_{TurnB}$ & $(0,1,0)$,$(0,1,1)$,& $(0,0,3)$              & $(1,0,0)$ &            $(-1,1,0)$ \\
                                    & $(0,1,-1)$,$(0,1,2)$ &                      &           &      \\
     \cline{2-5}        & $(0,0,-1)$,$(0,0,-2)$ & $(0,0,1)$              & $(1,0,0)$         & $(0,-1,-1)$ \\
                         & $(0,-1,-2)$,$(0,0,-3)$ &                            &               &    \\
     \cline{1-5}    $M_{Back}$ & $(0,0,-1)$,$(-1,0,-1)$,&                 & $(0,0,1)$,& $(0,-1,-1)$ \\
                               & $(0,0,-2)$,$(1,0,-2)$ &                    &   $(0,0,-3)$    &   \\
     \cline{2-5}         & $(1,0,0)$,$(1,0,1)$,&                    & $(0,0,2)$          & $(0,0,-1)$ \\
                          & $(1,0,-1)$,$(2,0,-1)$&                       &                  &             \\
      \cline{2-5}        & $(-1,0,0)$,$(-1,0,1)$,&                      & $(0,0,3)$,& $(-1,0,-1)$ \\
                          & $(-2,0,1)$,$(-1,0,2)$&                    &   $(0,0,-1)$ &           \\
     \cline{2-5}       & $(0,1,0)$,$(0,1,-1)$,&                       & $(0,0,-3)$                & $(0,0,-1)$ \\
                          & $(-1,1,-1)$,$(0,1,-2)$ &                  &                        &  \\
     \cline{2-5}       & $(0,0,1)$,$(-1,0,1)$,&                     & $(0,0,-1)$          & $(-1,0,0)$ \\
                          & $(0,0,2)$,$(0,-1,2)$ &                       &           &  \\
     \cline{2-5}       & $(0,0,1)$,$(1,0,1)$,&                   & $(0,0,-1)$,& $(1,0,0)$ \\
                          & $(1,-1,1)$,$(1,0,2)$ &                    &   $(2,0,0)$         &   \\
     \cline{2-5}       & $(0,0,-1)$,$(-1,0,-1)$,&                    &             & $(-1,0,0)$ \\
                          & $(0,-1,-1)$,$(-1,0,-2)$ &                     &                &    \\
     \cline{2-5}     & $(0,0,-1)$,$(1,0,-1)$,&                &              & $(1,0,0)$ \\
                          & $(1,-1,-1)$,$(1,0,-2)$ &                            &       &       \\
     \cline{2-5}     & $(0,1,0)$,$(-1,1,0)$,&                    & $(0,0,-2)$       & $(0,0,-1)$ \\
                          & $(-1,1,1)$,$(0,1,-1)$ &                       &            &            \\
     \cline{2-5}     & $(0,1,0)$,$(0,1,1)$,&                  & $(0,0,-1)$,& $(1,1,0)$ \\
                         & $(-1,1,1)$,$(0,1,2)$ &                   &    $(0,0,3)$        &     \\
     \cline{1-5}   $M_{TurnF}$ & $(0,1,0)$,$(0,1,-1)$,& $(0,0,-3)$              &             & $(-1,1,0)$ \\
                                    & $(-1,1,-1)$,$(0,1,-2)$&                   &               &    \\
     \cline{2-5}         & $(1,0,0)$,$(1,0,1)$,& $(0,0,-2)$         &              & $(0,0,-1)$ \\
                              & $(1,-1,1)$,$(1,0,-1)$ &           &                       &  \\
     \cline{2-5}          & $(1,0,0)$,$(1,0,1)$,& $(0,0,-1)$                         &       & $(1,-1,0)$ \\
                          & $(1,0,2)$,$(0,0,2)$ &                     &                          &       \\
     \cline{2-5}          & $(0,1,0)$,$(0,1,1)$,& $(0,0,-1)$          &           & $(1,1,0)$ \\
                              & $(0,1,2)$,$(-1,1,2)$ &                    &                     &    \\
     \cline{2-5}         & $(-1,0,0)$,$(-1,0,1)$,& $(0,0,-1)$    &                 & $(0,0,1)$ \\
                          & $(-1,0,2)$,$(-2,0,2)$ &                            &             &   \\
     \cline{2-5}          & $(0,0,1)$,$(1,0,1)$,& $(0,0,-1)$                &       & $(1,0,0)$ \\
                          & $(0,0,2)$,$(-1,0,2)$,&                   &              &  \\
     \cline{2-5}          & $(1,0,0)$,$(1,0,-1)$    & $(0,0,-3)$       &              & $(1,-1,0)$ \\
                          & $(2,0,-1)$,$(1,0,-2)$ &                 &                  &         \\
     \cline{2-5}          & $(0,1,0)$,$(0,1,-1)$,& $(0,0,-3)$           &       & $(1,1,0)$ \\
                               & $(1,1,-1)$,$(1,1,-2)$ &              &       &        \\
     \cline{1-5}    $M_{Edge}$ & $(0,0,1)$,$(0,-1,1)$,& $(0,0,4)$,$(1,0,0)$ &              & $(-1,0,1)$ \\
                                    & $(0,0,2)$,$(0,0,3)$ &           &       &        \\
     \cline{2-5}          & $(0,1,0)$,$(0,1,1)$,& $(0,0,3)$,$(1,0,0)$ &              & $(0,0,1)$ \\
                          & $(0,1,2)$,$(0,1,-1)$ &                    &       &               \\
     \cline{2-5}         & $(0,0,-1)$,$(0,0,-2)$,& $(0,0,1)$,$(1,0,0)$ &             & $(-1,0,-1)$ \\
                               & $(0,-1,-2)$,$(0,0,-3)$ &                 &       &         \\
     \cline{2-5}         & $(0,1,0)$,$(1,1,0)$,& $(2,0,0)$,$(0,0,2)$ &         $(0,2,0)$& $(0,1,1)$ \\
                              & $(1,1,1)$,$(1,1,-1)$&                &           &    \\
     \cline{2-5}          & $(1,0,0)$,$(0,0,-1)$,& $(0,0,1)$,$(2,0,0)$ &                       & $(0,1,-1)$ \\
                          & $(1,0,-1)$,$(1,0,-2)$ &                   &              &     \\
     \cline{2-5}          & $(0,0,1)$,$(-1,0,1)$,& $(0,0,3)$,$(1,0,0)$ &              & $(0,1,1)$ \\
                              & $(0,0,2)$,$(-1,0,2)$&                &               &     \\
     \cline{2-5}          & $(0,0,-1)$,$(0,1,-1)$,& $(0,0,1)$,$(1,0,0)$ &               & $(0,1,0)$ \\
                          & $(-1,0,-1)$,$(-1,1,-1)$ &               &               &          \\
     \cline{2-5}         & $(1,0,0)$,$(0,1,0)$,& $(0,0,2)$,$(2,0,0)$ &           & $(-1,1,0)$ \\
                              & $(1,1,0)$,$(1,0,1)$ &                 &              &     \\
     \cline{2-5}          & $(0,1,0)$,$(-1,1,0)$,& $(0,0,2)$,$(1,0,0)$ &              & $(-1,0,0)$ \\
                          & $(-2,1,0)$,$(0,1,1)$ &                    &                &        \\
     \cline{2-5}          & $(0,1,0)$,$(1,1,0)$,& $(0,0,2)$,$(2,0,0)$ &              & $(-1,0,0)$ \\
                          & $(-1,1,0)$,$(1,1,1)$ &                    &              &     \\
     \cline{2-5}          & $(0,0,-1)$,$(-1,0,-1)$,& $(0,0,1)$,$(1,0,0)$ &             & $(-1,0,0)$ \\
                              & $(-2,0,-1)$,$(-1,-1,-1)$ &                &           &     \\
     \cline{2-5}          & $(0,0,-1)$,$(1,0,-1)$,& $(0,0,1)$,$(2,0,0)$ &                & $(-1,0,0)$ \\
                          & $(-1,0,-1)$,$(-1,-1,-1)$ &              &             &         \\
     \cline{2-5}         & $(0,0,-1)$,$(1,0,-1)$,& $(0,0,1)$,$(3,0,0)$ &             & $(-1,0,-1)$ \\
                          & $(2,0,-1)$,$(0,-1,-1)$ &                      &          &  \\
     \cline{1-5}   $M_{Corner}$ & $(0,0,-1)$,$(-1,0,-1)$,& $(0,0,1)$,$(1,0,0)$,&            & $(-1,0,0)$ \\
                                         & $(-1,-1,-1)$,$(0,0,-2)$ &  $(0,1,0)$              &              &    \\
     \cline{2-5}          & $(0,0,1)$,$(-1,0,1)$,& $(0,0,3)$,$(1,0,0)$,&              & $(-1,0,0)$ \\
                          & $(-1,-1,1)$,$(0,0,2)$ &    $(0,1,0)$           &           &     \\
     \cline{2-5}         & $(0,0,1)$,$(1,0,1)$,& $(0,0,3)$,$(2,0,0)$,&              & $(0,-1,0)$ \\
                          & $(0,-1,1)$,$(0,0,2)$&     $(0,1,0)$              &               &    \\
     \cline{2-5}         & $(-1,0,0)$,$(-1,-1,0)$,& $(0,0,2)$,$(1,0,0)$,&            & $(0,-1,0)$ \\
                          & $(-1,0,-1)$,$(-1,0,1)$ &  $(0,1,0)$          &         &    \\
     \cline{2-5}         & $(0,0,-1)$,$(1,0,-1)$,& $(0,0,1)$,$(2,0,0)$,&             & $(0,-1,0)$ \\
                         & $(0,-1,-1)$,$(0,0,-2)$ & $(0,1,0)$     &         &     \\
     \cline{2-5}        & $(-1,0,0)$,$(-1,-1,0)$,& $(0,0,2)$,$(1,0,0)$,&             & $(0,0,1)$ \\
                         & $(-1,0,-1)$,$(-1,0,1)$ &  $(0,-2,0)$              &               &    \\
     \cline{2-5}        & $(-1,0,0)$,$(-1,1,0)$,& $(0,0,2)$,$(1,0,0)$,&             & $(0,0,1)$ \\
                         & $(-1,0,1)$,$(-1,0,-1)$ &   $(0,2,0)$          &           &     \\
               \cline{1-5}          & \multicolumn{1}{c|}{Otherwise} & \multicolumn{1}{c|}{Otherwise} & \multicolumn{1}{c|}{Otherwise} & $(0,0,0)$ \\
               \cline{1-5}  
     \end{longtable}%
     \end{center} 
     \label{tab_5modules}%

\section{Necessary number of modules} 

We show that the three algorithms presented in 
Section~\ref{ExpWithCompass} uses the 
minimum number of modules for each settings. 

\begin{theorem}
\label{theorem: 2 modules cant Exprole 3D space with compass}
The MRS of less than three modules equipped with a common compass cannot solve the search problem 
in a finite 3D cubic grid.
\end{theorem}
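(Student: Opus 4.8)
The plan is to handle the two possible sizes separately, since ``less than three'' means one or two modules, and to reduce the two-module case to an analysis of a finite deterministic transition graph, as the paper announces.

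\textbf{One module.} Both movement types require at least two modules: a rotation needs a mover together with a distinct stationary pivot, and a sliding needs three modules. Hence a single module can never move and stays on its initial cell forever. First I would note that the adversary may place the target on any cell other than the module's initial cell and choose a field with at least two cells; the target is then never found. So one module cannot solve search.

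\textbf{Two modules.} Here sliding is still impossible (it needs three modules), so every movement is a rotation of one module about the other. Since the two modules must be connected at the start of every step, and a rotated module lands on a cell side-adjacent to its stationary pivot, the two modules remain side-adjacent at all times: the MRS is a permanently adjacent ``dumbbell'' that tumbles, with exactly one module moving per step. I would next record a \emph{parity invariant}: color each cell by the parity of $x+y+z$. Side-adjacent cells have opposite colors, and a rotating module stays side-adjacent to the opposite-colored pivot, so each of the two modules is confined forever to one color class. Because a common compass makes every module's output a deterministic function of its observation, the execution from a fixed initial configuration is a single deterministic trajectory, and solving search requires this one trajectory to pass through \emph{every} cell; by the invariant this forces one module to visit all even cells and the other to visit all odd cells on its own.

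I would then exploit obliviousness and the compass to reduce the interior dynamics to a finite deterministic automaton. When no wall lies in a module's neighborhood, its output depends only on the relative position of the other module, i.e.\ on the dumbbell's orientation (one of six directions). Thus in the interior the orientation sequence, together with the induced displacement of the pair, is produced by a fixed deterministic map on a finite state set and is eventually periodic, with a fixed net displacement $\Delta$ per period. If some reachable periodic orbit has $\Delta=0$, I would take a field large enough and an initial configuration deep in the interior so that the pair enters this orbit without ever meeting a wall; it then cycles forever inside a bounded region and fails to visit all cells. If every interior orbit has $\Delta\neq 0$, the pair always drifts along a fixed direction until it hits a wall.

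The remaining drifting case is where coverage could conceivably be rescued by turning at the walls, and I expect it to be the main obstacle. I would finish exactly as the paper states, by examining the full state-transition graph of the two-module MRS with a common compass: enumerate the finitely many states that also encode which walls are adjacent, and verify that for every assignment of outputs the trajectory generated by drift-plus-wall-turns is ultimately trapped in a bounded set, or confined to a single plane, or deadlocks---in every case a region that cannot contain all cells of a sufficiently large cuboid, contradicting the covering requirement above. Unlike the clean interior dichotomy, this step needs an essentially exhaustive check that no turning discipline realizable by so few states can sweep a three-dimensional volume, mirroring the corresponding two-dimensional impossibility of Doi et al.
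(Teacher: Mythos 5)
Your one-module case and your interior analysis of the two-module case (only rotations are available, the execution is deterministic, there are finitely many wall-free states, so the interior motion is eventually periodic with some net displacement $\Delta$, giving either a bounded orbit or a straight drift toward a wall) match the first half of the paper's argument; the parity invariant is correct but unused. The genuine gap is in the step you yourself flag as the main obstacle. You propose to ``verify that for every assignment of outputs the trajectory generated by drift-plus-wall-turns is ultimately trapped in a bounded set, or confined to a single plane, or deadlocks,'' but you give no argument for this trichotomy, and as stated it is not even the right target statement: a trajectory that crawled over an entire wall while repeatedly launching straight tracks into the interior would be none of the three and would still cover the field, and ruling out exactly that behaviour is the content of the theorem. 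Deferring this to an ``essentially exhaustive check'' over all output assignments is not a proof, and it is not clear the check is finite as you describe it, since one must first argue that the relevant set of states near a wall does not grow with the field dimensions.

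The paper closes this gap with a concrete counting argument that your proposal lacks. Taking (without loss of generality) the positive $x$ direction as the interior drift direction, every interior cell can only be reached on a straight track launched from a position where the west wall is visible. While the MRS observes only the west wall, its state is determined by its shape (three possibilities, fixing the bottom westmost module) and its distance to that wall (at most $k$ values), so there are at most $3k$ such states; by determinism the MRS either repeats a state (and then never leaves the west wall) or leaves the one-wall region within $3k$ steps, having moved at most $3k$ in the $y$ and $z$ directions. Moreover, the one-wall region can only be entered from a region where two or more walls are visible, i.e.\ from a neighbourhood of an edge of the field. Hence all launch points lie within distance $4k$ of an edge of the west wall, and for a sufficiently large field the cells $(a,b,c)$ with $0\le a\le k$, $4k<b<d-4k$, and $4k<c<h-4k$ are never visited. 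You would need to supply an argument of this kind---a field-size-independent bound on the states available near a single wall, together with the observation that this region is only entered from an edge---to complete the two-module case.
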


\begin{figure}[htbp]
     \centering
     \includegraphics[keepaspectratio,scale=0.35]
          {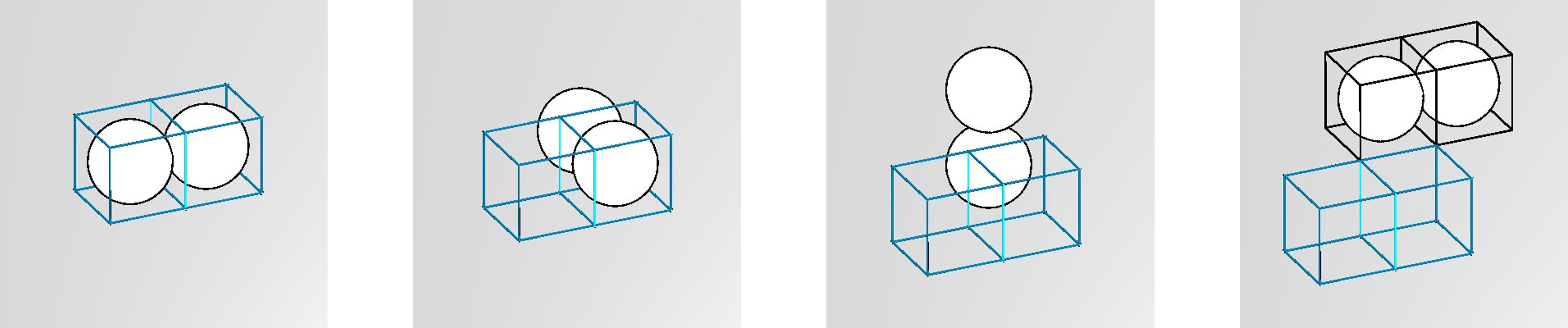}
     \caption{Example of diagonal move}
     \label{DiagonalMoveExample}
    \end{figure}
   \begin{figure}[htbp]
     \centering
     \includegraphics[keepaspectratio, scale=0.25]
          {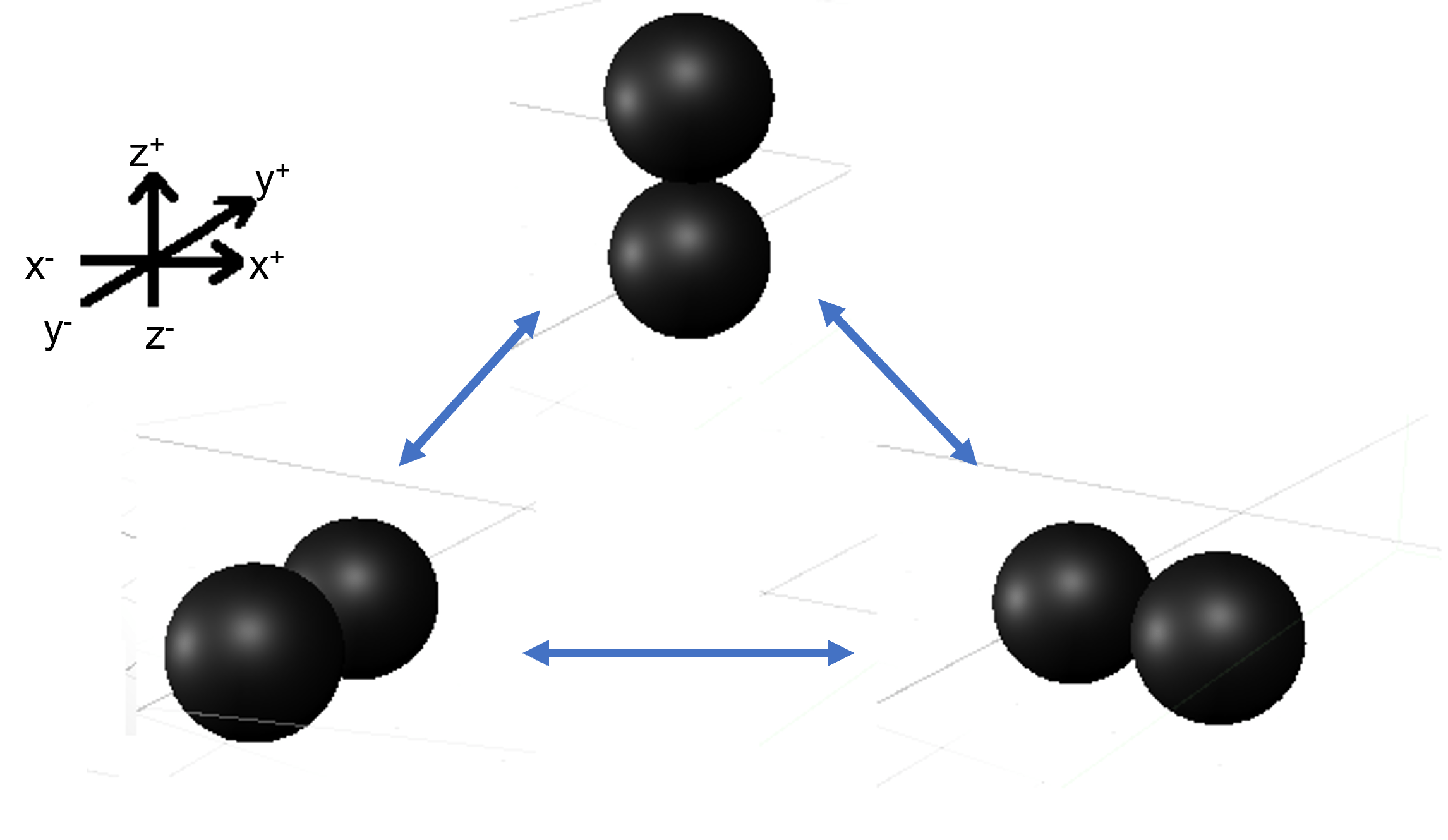}
     \caption{Configurations of two modules with a common compass}
     \label{2ModulesShapes}
    \end{figure}

\begin{proof}
In the case of one module, the MRS cannot move because there is no other static module during sliding or rotation. 

In the case of two modules, we show that 
there is a cell which cannot be visited by the MRS. 
Doi et al. showed that in the 2D square grid two modules equipped with a common compass 
can move straight to one of the eight directions 
(north, south, east, west, northeast, northwest, southeast, and southwest) 
when they observe no wall~\cite{DYKY18}.
By the same discussion, the MRS of two modules equipped with a common compass 
in 3D cubic grid can move straight to one direction, and 
the possible moving directions are eight diagonal directions 
in addition to those eight directions on planes perpendicular 
to $x$, $y$, or $z$ axis 
when they can observe no wall. 
Figure~\ref{DiagonalMoveExample} shows an example of a 3D diagonal move.  

Without loss of generality, we assume that the MRS continues to move in positive $x$ direction 
in the area where the two modules observe no wall. 
If it enters the area from other walls than the west wall, 
it cannot visit the cells with smaller $x$ coordinates on the straight moving track. 
Hence, the MRS must start the straight move when some modules can observe the west wall. 

Assume that the MRS can visit all cells of the field. 
There exists a subsequence of configurations where the two modules observes only the west wall. 
We focus on the sequence of states of the MRS. 
Let $T_W = S_1, S_2, \ldots, S_{\ell}$ be a sequence of states of the MRS 
in the area where the MRS observes only west wall.
$T_W$ consists of different states, otherwise the MRS cannot leave the west wall. 
Additionally, in the transformation from $S_{\ell}$ to the next state, say $S_{\ell+1}$ 
the MRS enters the area where it observes no wall. 
As shown in Figure~\ref{2ModulesShapes}, when we fix the bottom westmost module, 
there are three states of the MRS. 
The number of states where the modules observe only the west wall is at most $3k$. 
Thus, the length of $T_W$ is at most $3k$. 
In addition, the maximum distance that the MRS can move by $T_W$ in the $y$ direction 
or the $z$ direction is $3k$, 
since the maximum number of coordinates that it can travel by one transition is one in each direction.

Since the MRS cannot move in the negative $x$ direction from the area 
where the modules cannot observe the west wall, 
the only way for the MRS to enter the area where the modules can observe the west wall 
is to move from the area where the modules observe two or more walls. 
Thus, the transformation sequence $T_W$ always starts around the edge of the field that touches the west wall.
However, since the maximum distance that the MRS can travel by $T_W$ is $3k$, 
it cannot reach the coordinates $(a,b,c)$ such that $(0 <= a <= k, 3k+k < b < d-(3k+k), 3k+k < c < h-(3k+k))$.
Hence, it is impossible for the two modules to visit all cells.
\end{proof}

We next show the necessary number of modules equipped with a common vertical axis. 
\begin{theorem}
\label{theorem: 3 modules cant Exprole 3D space with horizontal compass}
The MRS of less than four modules equipped with a common vertical axis 
cannot solve the search problem in a finite 3D cubic grid.
\end{theorem}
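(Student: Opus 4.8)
The plan is to handle the three module counts separately, disposing of one and two modules by reduction to the common-compass lower bound and reserving the real work for three. For a single module there is nothing new: exactly as in Theorem~\ref{theorem: 2 modules cant Exprole 3D space with compass}, one module can never move. For two modules I would argue by a simulation/reduction. A module equipped with a common compass knows strictly more than one equipped only with a common vertical axis: it sees the same physical observation and can reproduce the vertical-axis computation (try the four horizontal rotations, match a rule, and select a move), resolving any rotational ambiguity arbitrarily. Hence any search algorithm in the common-vertical-axis model yields a valid algorithm in the common-compass model, so solvability with a common vertical axis implies solvability with a common compass. Since Theorem~\ref{theorem: 2 modules cant Exprole 3D space with compass} forbids search with two modules and a common compass, it also forbids it with a common vertical axis. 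Everything therefore reduces to showing that three modules with a common vertical axis cannot search.

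For the three-module case I would follow the template of the two-module proof, classifying configurations by how many walls a module can see and tracking the reachable region. The key structural observation is an asymmetry between the vertical and horizontal directions. Because all modules agree on the $z$ axis, vertical locomotion needs no further agreement, so the MRS can translate straight up or down in free space and vertical lines are freely traversable. Horizontal motion is fundamentally different. Each of the four side walls ($x=-1$, $x=w$, $y=-1$, $y=d$) breaks the horizontal rotational symmetry, so whenever a module sees a side wall the MRS effectively recovers a full compass: the single wall pins the horizontal orientation, and together with the common vertical axis this fixes all three axes. The top and bottom walls, being perpendicular to the common vertical axis, are invariant under horizontal rotation and provide no horizontal orientation. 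Consequently, away from the four side walls the algorithm is invariant under the four horizontal $\pi/2$-rotations, and the MRS has no way to commit to a horizontal direction.

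I would crystallize this as a key lemma: there is a constant $\kappa$, depending only on the view radius $k$, such that throughout any execution every cell occupied by the MRS lies within horizontal distance $\kappa$ of some side wall. Granting the lemma, the theorem follows quickly. Choosing a field with $w,d>2\kappa+2$ leaves a nonempty set of horizontal positions farther than $\kappa$ from all four side walls, and by the lemma no module ever enters the vertical column above such a position; placing the target in one of these columns makes search impossible. This is robust to the initial configuration: an MRS started in the central region is trapped near its starting column because it can never reach a side wall to reorient, while one started near a wall can never cross into the central region. The bookkeeping is exactly as in the two-module proof, with horizontal distance to the side walls playing the role that the single wall distance played there.

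The hard part is the key lemma, and this is where I would examine the state-transition graph of the three-module MRS. In a maximal phase during which no module sees a side wall, the dynamics are invariant under the horizontal rotation group, and since there are only finitely many three-module states, any unbounded horizontal drift would require a cycle of that graph with nonzero horizontal translation. I expect two cases. If the cycle returns the shape in a horizontally rotated orientation, then over one full turn the four successive displacement vectors sum to zero (since $I+R+R^2+R^3=0$ for the $\pi/2$-rotation $R$), so only the $z$-component accumulates and the horizontal drift is bounded. The delicate case is a cycle that translates the MRS horizontally while preserving its orientation; excluding it amounts to the combinatorial fact that three modules admit no orientation-preserving horizontal translation gait — every three-module horizontal-translation gait must pass through a horizontally symmetric shape (such as the straight tromino), at which the continuation direction is ambiguous without a compass, so under the rotation-invariant dynamics no net horizontal progress can be forced. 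Verifying this by enumerating the three-module shapes and their admissible slides and rotations is the main obstacle, and it is precisely the point where three modules with a common vertical axis fail while three modules with a common compass (Theorem~\ref{theorem: 3 modules can Exprole 3D space with compass}) succeed.
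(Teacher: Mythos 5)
Your overall architecture is sound and, at its decisive point, converges on exactly what the paper does: the three-module case is settled by exhaustively examining the state transition graph of the three-module MRS and showing that, when no module sees a wall that breaks the horizontal rotational symmetry, no transition produces horizontal progress. The scaffolding differs in two places. For two modules, you use a simulation reduction to Theorem~\ref{theorem: 2 modules cant Exprole 3D space with compass} (a compass module can emulate a vertical-axis module, so impossibility transfers downward); this is valid and clean, whereas the paper argues directly that the two-module MRS has only two states, the horizontal line is frozen because the adversary can give the endpoints identical observations, and the vertical line can only fall into the horizontal line --- a stronger conclusion (no movement at all) obtained more cheaply. For three modules, you package the argument as a bounded-drift lemma plus a cycle analysis over the rotation group (using $I+R+R^2+R^3=0$ to kill orientation-rotating cycles and reducing to the nonexistence of an orientation-preserving horizontal translation gait). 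The paper instead proves the stronger and simpler statement that \emph{no} transition among the reachable states $S^3_b,\ldots,S^3_e$ moves the MRS east, west, north, or south at all (the horizontal line $S^3_a$ being a dead state by the endpoint-symmetry argument), so no drift bookkeeping is needed: away from the walls the MRS is simply stuck horizontally and the unreachable-cell conclusion is immediate.

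The one genuine shortfall is that you explicitly defer the combinatorial verification --- ``enumerating the three-module shapes and their admissible slides and rotations is the main obstacle'' --- and that verification \emph{is} the paper's proof; everything else in both arguments is routine once it is in hand. Your characterization of it (``every horizontal-translation gait must pass through a horizontally symmetric shape'') is also not obviously the right invariant: asymmetric shapes such as the horizontal L admit slides into other (rotated or reflected) L's whose centroids do shift horizontally, so ruling out net translation requires tracking where the adversary's freedom to rotate observations and to resolve nondeterminism re-enters, not merely locating a symmetric bottleneck state. Until that enumeration is actually carried out, the three-module case --- the only part of the theorem not already covered by your reduction --- remains unproven.
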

\begin{proof} 
\begin{figure}[htbp]
  \centering
  \includegraphics[keepaspectratio, scale=0.23]
       {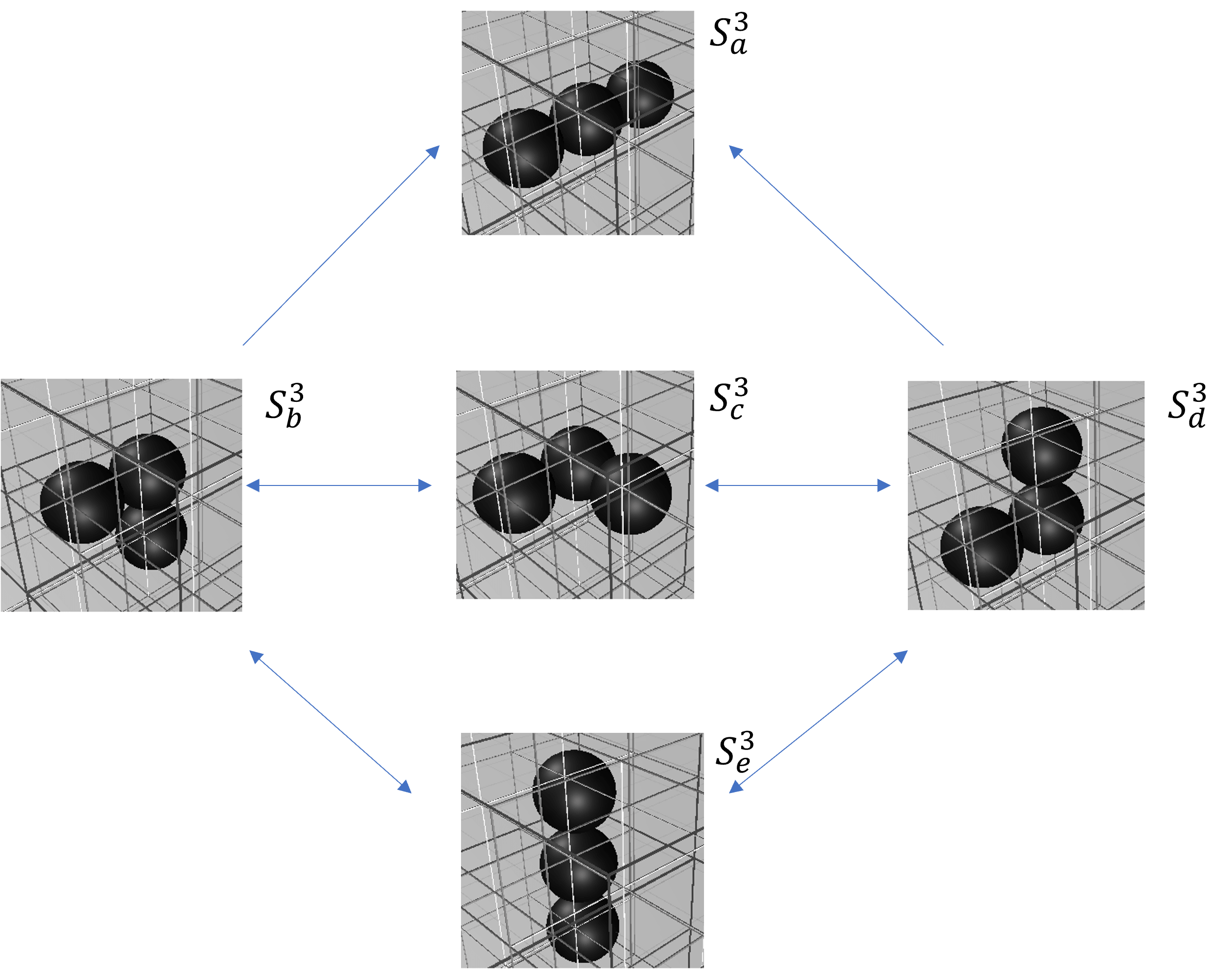}
  \caption{State transition graph for a MRS consisting of 3 modules with 
  common vertical axis}
  \label{3moduleCantMoveHorizon}
 \end{figure}
  
In the case of one module, the MRS cannot move because it cannot perform any sliding or rotation.

In the case of two modules, there are two possible states of the MRS. 
Let $S_A$ be the state, where the two modules form a vertical line, 
and $S_B$ be the state, where the two modules form a horizontal line. 
There exists only one horizontal line state because 
the modules do not agree on $x$ axis or $y$ axis. 
In $S_A$, one of the two modules can perform a rotation 
because they agree on a common vertical axis. 
Any rotation in $S_A$ results in $S_B$. 
In $S_B$, both modules obtain the same observation 
if their local coordinate systems are symmetric against their midpoint, 
and if one of them moves then the other also moves. 
Thus, the two modules cannot perform any movement. 
Consequently, the two modules cannot move to any direction. 

In the case of three modules, we check possible movements of the MRS 
by the state transition graph shown in Figure~\ref{3moduleCantMoveHorizon}.
State $S^3_a$ cannot be transformed to any other configuration 
because both 
endpoint modules get the same observation.
Therefore, it is necessary to move only in the $S^3_b, S^3_c, S^3_d$, and $S^3_e$.
However, no matter which transformation of the $S^3_b, S^3_c, S^3_d$, and $S^3_e$,
the MRS cannot move in the east, west, south or south direction.
Therefore, when there is no wall in the vicinity, it becomes impossible to move,
and it is impossible to search the whole field.
\end{proof} 

We finally show the necessary number of modules not equipped with a common compass. 
\begin{theorem}
\label{theorem: 4 modules cant Exprole 3D space without compass }
The MRS of less than five modules not equipped with a common compass 
cannot solve the search problem in a finite 3D cubic grid.
\end{theorem}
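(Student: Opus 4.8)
The plan is to reuse the two-part strategy behind the proofs of Theorem~\ref{theorem: 2 modules cant Exprole 3D space with compass} and Theorem~\ref{theorem: 3 modules cant Exprole 3D space with horizontal compass}: first dispose of the degenerate cases of one, two, and three modules, and then treat four modules as the essential case by combining a symmetry analysis of the state transition graph with a counting argument that bounds how far the MRS can travel transversally while it remains within sensing range of a wall.

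For one module the MRS cannot move at all, since sliding and rotation both require a static anchor module. For two and three modules I would argue that the absence of any common axis makes the symmetry even more severe than in the common-vertical-axis setting of Theorem~\ref{theorem: 3 modules cant Exprole 3D space with horizontal compass}. Because each module now compares all $24$ rotations of its observation and breaks ties nondeterministically, every shape that admits a nontrivial rotation mapping it to itself forces at least two modules to receive identical observations; such modules act in lockstep, so any attempted movement produces a collision, a loss of connectivity, or zero net displacement. Enumerating the constantly many two- and three-module shapes up to the $24$ rotations and drawing the corresponding state transition graph, I would show that in the absence of a nearby wall the MRS can never achieve net translation in all three axis directions, so at least one direction of travel is permanently unavailable and the field cannot be covered.

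The heart of the proof is the four-module case. I would enumerate all four-module polycube shapes (a constant number of tetracubes, each considered up to the group of $24$ rotations, since local coordinate systems are right-handed) and classify them by rotational symmetry. For any shape admitting a nontrivial self-rotation, the argument above makes the state ``dead'' in exactly the way $S^3_a$ was dead in Theorem~\ref{theorem: 3 modules cant Exprole 3D space with horizontal compass}. For the remaining asymmetric shapes, I would restrict the state transition graph to observations whose wall sets $C_w$ are empty (no wall in view) and determine precisely which translations are realizable. The goal is to show, exactly as in the two-module-with-compass case, that wall-free motion is confined to straight lines in a limited set of directions, whereas any change of heading (a ``turn'') requires at least one wall cell in the $k$-neighborhood.

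The proof then closes with the counting argument of Theorem~\ref{theorem: 2 modules cant Exprole 3D space with compass}. Since a turn requires a wall within distance $k$, and the number of distinct four-module states that can observe a fixed wall is at most $\Order(k)$, the MRS can shift itself by only $\Order(k)$ in the directions transverse to its free straight-line motion while it stays within sensing range of that wall. Taking the field large enough in every dimension leaves interior cells that are farther than this $\Order(k)$ bound from every wall in the transverse directions, and such cells can never be entered, contradicting full coverage. The main obstacle is the four-module case itself: carrying out the symmetry bookkeeping correctly under the $24$-element rotation group (rather than the $4$-element group of the vertical-axis setting), identifying exactly which symmetric tetracubes are movement-dead, and proving that no asymmetric tetracube admits wall-free motion in all three axis directions simultaneously --- this last verification is what actually pins down the lower bound.
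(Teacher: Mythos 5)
Your plan coincides with the paper's proof in its skeleton --- eliminate one, two, and three modules via the lockstep argument (modules related by a self-rotation of the configuration may receive identical observations and must move identically, so either no static module remains or the fixed part never translates), and settle four modules by enumerating the eight tetracube states and their transition graph --- but it diverges in how the four-module case is closed, and the divergence is worth flagging. You expect the asymmetric tetracubes to admit wall-free straight-line translation, and you therefore plan to finish with the counting argument of Theorem~\ref{theorem: 2 modules cant Exprole 3D space with compass} (turns require a wall, only $\Order(k)$ states within sensing range of a wall, hence bounded transversal travel and unreachable interior cells in a large field). The paper's transition-graph analysis shows something stronger that makes all of that machinery unnecessary: $S_O^4$ and $S_K^4$ are dead outright, every other state feeds into the loop $S_S^4 \rightarrow S_Z^4 \rightarrow S_I^4$ or the loop between $S_T^4$ and $S_L^4$, and traversing these loops produces zero net displacement; so a four-module MRS without a compass cannot move at all, with or without walls, exactly as in the two- and three-module subcases. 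Your route would still terminate in a correct proof --- your key lemma ``wall-free motion is confined to straight lines'' holds vacuously because the set of realizable wall-free translations turns out to be empty --- but the verification you identify as the main obstacle will not yield the straight-line-plus-turn picture you are preparing for, and the counting step is never invoked. That is just as well, because transplanting the counting argument to the no-compass setting is more delicate than you suggest: without a compass the irreversibility of the straight track (the fact that the MRS cannot later revisit cells of smaller $x$-coordinate, which Theorem~\ref{theorem: 2 modules cant Exprole 3D space with compass} relies on) would need a separate justification, since the adversarial choice of local frames and of the nondeterministic tie-breaking among the $24$ rotations is what fixes the direction of travel. In short: same core, but the paper buys a shorter endgame by proving total immobility, while your version hedges with an extra quantitative argument that, here, has nothing to count.
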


\begin{proof} 
     \begin{figure}[htbp]
     \centering
     \includegraphics[keepaspectratio,scale=0.25]
        {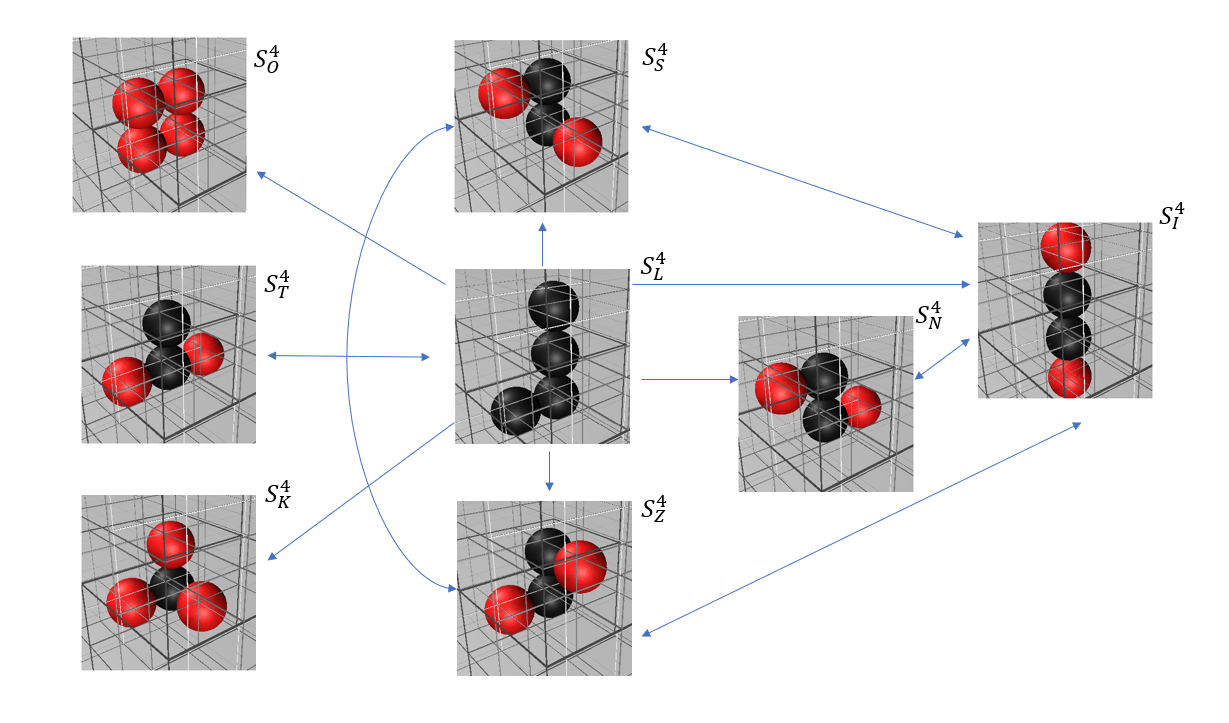}
     \caption{State transition graph for the MRS of 4 modules not equipped with a common compass}
     \label{4moduleCantMoveNoCompass}
     \end{figure}
     
     In the case of one module, the MRS cannot move because it cannot perform any sliding or rotation.
     
     In the case of two modules, the MRS can take one state, 
     where the two modules obtain the same observation 
     if their local coordinate systems are symmetric against their midpoint. 
     Thus, if one of them moves then the other also moves and 
     the two modules cannot perform any movement. 
     
     In the case of three modules, there are two possible states of the MRS, i.e., 
     the L-shape shown in the middle of the Figure~\ref{3moduleCantMoveHorizon}, 
     and the I-shape, i.e., a line. 
     In the L-shape, the two endpoint modules obtain the same observation 
     if their local coordinate systems are symmetric against the center module. 
     The center module cannot perform any movement due to connectivity. 
     The two endpoint modules cannot perform a sliding because they move into the same cell 
     and there is no static modules during a sliding. 
     If the two modules perform a rotation, the resulting state is the L-shape 
     or the I-shape. 
     In the I-shape, the two endpoint modules obtain the same observation in the worst case. 
     They can only perform a rotation, and a resulting configuration 
     is the L-shape or the I-shape. 
     In any movement in the L-shape and the I-shape, 
     the center module does not move and the MRS does not move to any direction. 
     Consequently, the MRS cannot move to any direction. 
     
     In the case of four modules, the state transition graph is shown in Figure~\ref{4moduleCantMoveNoCompass}. 
     In each state, modules that may have the same observation is painted in red, 
     and each arrow shows that its starting point state can be transformed to 
     its end point in one time step. 
     The MRS cannot change its state in $S_O^4$ and $S_K^4$. 
     In $S_O^4$ the four modules have the same observation, and 
     if one of them moves, the other also move. 
     Thus, they cannot keep a static module during movement. 
     Also in $S_K^4$ the three red module have the same observation, and
     if one of them moves, the others also move.
     It cause a conflict of their movement paths or transformation into $S_K^4$.
     Thus, they can transform only into $S_K^4$.
     
     Next, we can find $S_S^4$, $S_Z^4$, and $S_I^4$ form a loop. 
     However, the MRS cannot move to any direction by repeating these transitions. 
     The MRS enters this loop from $S_N^4$, thus 
     it cannot move to any direction from $S_N^4$. 
     We also find $S_T^4$ and $S_L^4$ forms a loop and 
     $S_L^4$ has arrows to $S_O^4$, $S_K^4$, $S_I^4$, and $S_N^4$. 
     Thus, the MRS cannot move to any direction from $S_T^4$ and $S_L^4$. 
     Therefore, there is no way for MRS to move.
     
     \end{proof}

\section{Conclusion and future work}
\label{conc}

In this paper, we considered search by the single MRS 
in the finite 3D cubic grid. 
We demonstrated a trade-off between the common knowledge and 
the necessary and sufficient number of modules for search. 
When the modules are equipped with a common compass, 
three modules are necessary and sufficient. 
When the modules are not equipped with a common compass, 
five modules are necessary and sufficient. 
As an intermediate case, when the modules are equipped with 
a common vertical axis, four modules are necessary and sufficient. 
We finally note that the proposed algorithms depend on 
parallel movements, i.e., they are not designed for 
the centralized scheduler. 

Our future goal is a distributed coordination theory for the MRS.  
First, reconfiguration and locomotion 
of a single MRS in the 3D cubic grid have not been discussed yet. 
Second, it is important to consider interaction among multiple MRSs 
such as rendezvous, collision avoidance, and collective search. 
Finally, the MRS is expected to solve more complicated tasks 
by interaction with the environment. 

\newpage


\begin{thebibliography}{99}

\bibitem{AMP20a}
Abdullah Almethen, Othon Michail, and Igor Potapov. 
Pushing lines helps: Efficient universal centralised transformations for programmable matter. 
{\it Theoretical Computer Science,} 
830-831, pp.43--59, 2020.  

\bibitem{AMP20b}
Abdullah Almethen, Othon Michail, and Igor Potapov. 
On Efficient Connectivity-Preserving Transformations in a Grid. 
{\it In Proc. of ALGOSENSORS 2020}, pp.76--91, 2020. 

\bibitem{AADFP04}
Dana Angluin, James Aspnes, Zo\"{e} Diamadi, Michael J. Fischer, Ren\'{e} Peralta, 
Computation in networks of passively mobile finite-state sensors. 
{\it In Proc. of PODC 2004}, pp.290--299, 2004. 


\bibitem{DDGRSS14} 
Zahra Derakhshandeh, Shlomi Dolev, Robert Gmyr, Andr\'{e}a W. Richa, Christian Scheideler, and Thim Strothmann. 
Brief announcement: amoebot - a new model for programmable matter. 
{\it In Proc. of SPAA 2014}, pp.220--222, 2014. 

\bibitem{DGRSS15} 
Zahra Derakhshandeh, Robert Gmyr, Andr\'{e}a W. Richa, Christian Scheideler, and Thim Strothmann, 
An Algorithmic Framework for Shape Formation Problems in Self-Organizing Particle Systems. 
{\it In Proc. of NANOCOM 2015}, pp.21:1--21:2, 2015. 

\bibitem{DFSVY20}
Giuseppe Antonio Di Luna, Paola Flocchini, Nicola Santoro, Giovanni Viglietta, and Yukiko Yamauchi. 
Shape formation by programmable particles. 
{\it Distributed Computing,} 33(1), pp.69--101, 2020. 

\bibitem{DYKY18} 
Keisuke Doi, Yukiko Yamauchi, Shuji Kijima, and Masafumi Yamashita. 
Exploration of Finite 2D Square Grid by a Metamorphic Robotic System. 
{\it In Proc. of SSS 2018,} pp.96--110, 2018. 

\bibitem{DP06} 
Adrian Dumitrescu and J\'{a}nos Pach. 
Pushing Squares Around. 
{\it Graphs and Combinatorics}, 
22, pp.37--50, 2006. 

\bibitem{DSY04a} 
Adrian Dumitrescu, Ichiro Suzuki, and Masafumi Yamashita. 
Motion planning for metamorphic systems: feasibility, decidability, and distributed reconfiguration. 
{\it IEEE Transactions on Robotics}, 
20(3), pp.409--418, 2004. 
 
\bibitem{DSY04b} 
Adrian Dumitrescu, Ichiro Suzuki, and Masafumi Yamashita. 
Formations for Fast Locomotion of Metamorphic Robotic Systems. 
{\it The International Journal of Robotics Research}, 
23(6), pp.583--593, 2004.

\bibitem{FYOKY15} 
Nao Fujinaga, Yukiko Yamauchi, Hirotaka Ono, Shuji Kijima, and Masafumi Yamashita. 
Pattern Formation by Oblivious Asynchronous Mobile Robots. 
{\it SIAM Journal on Computing}, 44(3), pp.740--785, 2015. 

\bibitem{MSS19}
Othon Michail, George Skretas, and Paul G.~Spirakis, 
On the transformation capability of feasible mechanisms for programmable matter. 
{\it Journal of Computer and System Sciences,} 102, pp.18--39, 2019. 

\bibitem{NSY20}
Junya Nakamura, Sayaka Kamei and Yukiko Yamauchi.
Evacuation from a Finite 2D Square Grid Field
by a Metamorphic Robotic System.
{\it In Proc. of CANDAR 2020}, pp.69--78,
2020.

\bibitem{RCN14}
Michael Rubenstein, Alejandro Cornejo, and Radhika Nagpal. 
Programmable self-assembly in a thousand-robot swarm. 
{\it Science}, 345(6198), pp.795--799, 2014. 

\bibitem{SY99}
Ichiro Suzuki and Masafumi Yamashita, 
Distributed Anonymous Mobile Robots: Formation of Geometric Patterns. 
{\it SIAM Journal on Computing,} 28(4), pp.1347--1363, 1999. 

\bibitem{TPB19} 
Pierre Thalamy, Beno\^{i}t Piranda, and Julien Bourgeois, 
Distributed Self-Reconfiguration using a Deterministic Autonomous Scaffolding Structure. 
{\it In Proc. of AAMAS 2019}, pp.140--148, 2019. 

\bibitem{YY21} 
http://tcs.inf.kyushu-u.ac.jp/~yamauchi/MRSdemonstrations.html

\end{thebibliography}
\end{document}